\tikzset{
vertex/.style={circle,draw,inner sep=.75pt,fill=black}
}
\theoremstyle{plain}
\newtheorem{theorem}{Theorem}
\newtheorem{lemma}[theorem]{Lemma}
\newtheorem{proposition}[theorem]{Proposition}
\newtheorem{corollary}[theorem]{Corollary}
\theoremstyle{definition} 
\newtheorem{remark}[theorem]{Remark}
\newtheorem{notation}[theorem]{Notation}
\newtheorem{open}[theorem]{Open Problem}
\newcounter{oq}
\newcounter{claim}
\renewcommand{\theclaim}{\Alph{claim}}
\newenvironment{claim}{\refstepcounter{claim}%
\par\medskip\par\noindent{\it Claim~\theclaim.~}~\rm}%
{\par\smallskip\par}
\newenvironment{subproof}{\bigskip\par\noindent{\sl Proof of Claim~\theclaim.~}}%
{$\,\triangleleft$\par\medskip\par}
\newcommand{\Case}[2]{\medskip\par{\it Case #1:\/ #2}}
\newcommand{\subcase}[2]{\smallskip\par{\it Subcase #1:\/ #2}}
\def\@gifnextchar#1#2#3{\let\@tempe#1\def\@tempa{#2}\def\@tempb{#3}%
  \futurelet\@tempc\@gifnch}
\def\@gifnch{\ifx\@tempc\@sptoken\let\@tempd\@tempb%
  \else\ifx\@tempc\@tempe\let\@tempd\@tempa\else\let\@tempd\@tempb\fi\fi\@tempd}
\def\SK@set#1{\left\{#1\right\}}
\def\SK@@set#1#2{\{#1\,:\,
    \begin{array}{@{}l@{}}#2\end{array}
\}}
\def\SK@mset#1{\left\{\!\!\left\{#1\right\}\!\!\right\}}
\def\SK@@mset#1#2{\{\!\!\{#1\,:\,
    \begin{array}{@{}l@{}}#2\end{array}
\}\!\!\}}
\def\BIG@set#1{\Big\{#1\Big\}}
\def\BIG@@set#1#2{\Big\{#1\:\Big|\:
    \begin{array}{@{}l@{}}#2\end{array}
\Big\}}
\newcommand{\Set}[1]{\@gifnextchar\bgroup{\SK@@set{#1}}{\SK@set{#1}}}
\newcommand{\Mset}[1]{\@gifnextchar\bgroup{\SK@@mset{#1}}{\SK@mset{#1}}}
\newcommand{\Bigset}[1]{\@gifnextchar\bgroup{\BIG@@set{#1}}{\BIG@set{#1}}}
\newcommand{\refeq}[1]{(\ref{eq:#1})}
\newcommand{\of}[1]{\left( #1 \right)}
\newcommand{\function}[2]{:#1 \rightarrow #2}
\newcommand{\compl}[1]{\overline{#1}}
\newcommand{\bR}{\mathbb{R}}
\newcommand{\jj}{\mathrm{j}}
\newcommand{\kwlout}[1]{\mathit{WL}_{#1}}
\DeclareMathOperator{\ea}{\mathsf{EA}}
\DeclareMathOperator{\wm}{\mathsf{WM}}
\DeclareMathOperator{\mea}{\mathsf{main-EA}}
\DeclareMathOperator{\spec}{\mathsf{Spec}}
\DeclareMathOperator{\genspec}{\mathsf{gen-Spec}}
\DeclareMathOperator{\weak}{\mathsf{weak-FSI}}
\DeclareMathOperator{\strong}{\mathsf{strong-FSI}}
\newcommand{\wlk}[1]{\mathsf{WL}_{#1}}
\newcommand{\wlkk}[1]{\mathit{WL}_{#1}}
\newcommand{\ui}[2]{#1^{(#2)}}
\newcommand{\chir}[1]{\ui{\chi}{#1}}
\newcommand{\ppr}[1]{\ui{P_*}{#1}}
\newcommand{\wwr}[1]{\ui{w_*}{#1}}
\newcommand{\calC}{\mathcal{C}}
\newcommand{\calI}{\mathcal{I}}
\newcommand{\canononexfirst}[1]{#1(x,x)}
\newcommand{\canononexsubsecond}[1]{( #1(x,y), #1(y,y) ) } 
\newcommand{\canononexsecond}[1]{\Mset{ \canononexsubsecond{#1} }_y }
\newcommand{\canononexsecondG}[1]{\Mset{ \canononexsubsecond{#1} }_{y\in V(G)} }
\newcommand{\canononex}[1]{\of{ \canononexfirst{#1}, \canononexsecond{#1} }}
\newcommand{\canononexG}[1]{\of{ \canononexfirst{#1}, \canononexsecondG{#1} }}
\newcommand{\canonone}[1]{\Mset{\canononex{#1} }_x}
\newcommand{\canononeG}[1]{\Mset{\canononexG{#1} }_{x\in V(G)} }
\title{On a Hierarchy of Spectral Invariants for Graphs\footnote{%
The results of this paper were presented in preliminary form in
the 41st International Symposium on Theoretical Aspects of Computer Science~\cite{stacs}}}
\author{V.~Arvind\thanks{The Institute of Mathematical Sciences (HBNI)
and Chennai Mathematical Institute, Chennai, India.}, 
Frank Fuhlbrück\thanks{Institut für Informatik, Humboldt-Universität zu Berlin, Germany.}, 
Johannes Köbler${}^\dagger$, Oleg Verbitsky${}^\dagger$\,\thanks{Supported by DFG grant KO 1053/8--2. 
On leave from the IAPMM, Lviv, Ukraine.}}
\date{}
\begin{document}

\maketitle

\begin{abstract}
  We consider a hierarchy of graph invariants that naturally extends
  the spectral invariants defined by Fürer (Lin.\ Alg.\ Appl.\ 2010) based on the angles formed
  by the set of standard basis vectors and their projections onto eigenspaces of the adjacency matrix.
  We provide a purely combinatorial characterization of this hierarchy
  in terms of the walk counts. This allows us to give a
  complete answer to Fürer's question about the strength
  of his invariants in distinguishing non-isomorphic graphs
  in comparison to the 2-dimensional Weisfeiler-Leman algorithm,
  extending the recent work of Rattan and Seppelt (SODA 2023).
  As another application of the characterization, we prove that
  almost all graphs are determined up to isomorphism in terms of the spectrum and
  the angles, which is of interest in view of the long-standing open problem
  whether almost all graphs are determined by their eigenvalues alone.
  Finally, we describe the exact relationship between the hierarchy and
  the Weisfeiler-Leman algorithms for small dimensions, as also some other
  important spectral characteristics of a graph such as the
  generalized and the main spectra. 
\end{abstract}

\section{Introduction}\label{s:intro}

The spectrum of a graph is a remarkable graph invariant that has found
numerous applications in computer science; e.g.,
\cite{HooryLW06,MoharP93}.  These applications are based on analyzing
relevant information contained in the eigenvalues of a given
graph. The maximum information possible is evidently obtained for
graphs that are determined by their spectra up to isomorphism. This
graph class, which is of direct relevance to the graph isomorphism
problem, is often abbreviated as \emph{DS} (for \emph{D}etermined by \emph{S}pectrum). Thus, a graph $G$ is DS if every
graph cospectral to $G$, i.e., with the same spectrum as $G$, is
actually isomorphic to $G$.  Though the problem of characterizing DS
graphs has been intensively studied since the beginning of spectral
graph theory (see \cite{CvetkovicRS10} and references therein), we are
still far from a satisfactory solution. In particular, a long-standing
open question \cite{HaemersS04,vanDamH03} is whether or not almost all
graphs are~DS. Here and in the rest of the paper, we say that
\emph{almost all} graphs have some property if a uniformly distributed
random $n$-vertex graph\footnote{In the Erd\H{o}s-R\'{e}nyi
  $\mathcal{G}(n,1/2)$ random graph model to be precise.} has this property with
probability approaching $1$ as $n$ goes to infinity.

Somewhat surprisingly at first sight, the area is connected to a purely combinatorial
approach to the graph isomorphism problem. In their seminal work, Weisfeiler and Leman \cite{WLe68}
proposed and studied a method for distinguishing a graph $G$ from another non-isomorphic graph
by computing a sequence of canonical partitions of $V(G)^2$ into color classes.
The final partition of $V(G)^2$ results in a \emph{coherent configuration},
a concept which is studied in algebraic combinatorics \cite{CP2019} and plays an important role
in isomorphism testing \cite{Babai16}.
The method of \cite{WLe68} is now called the 2-dimensional Weisfeiler-Leman algorithm (2-WL).
A similar approach based on partitioning $V(G)$ is known as \emph{color refinement}
and is often called 1-WL. Even this one-dimensional method is
quite powerful as it suffices for identification of almost all graphs \cite{BabaiES80}.
Here, a graph $G$ is said to be \emph{identifiable} by 1-WL if $G$ can be distinguished
from every non-isomorphic graph $H$ based on the output of 1-WL on $G$ and $H$.
The 2-WL algorithm is even more powerful, and construction of graphs not identifiable by 2-WL
is rather tricky. Particular examples of such graphs are based on rare combinatorial objects.
General constructions, like the far-reaching one in \cite{CaiFI92}, give rather
sporadic families of ``hard'' graphs. It turns out (see, e.g., \cite{BarghiP09,Fuerer10}) that if two graphs
are indistinguishable by 2-WL, then they are cospectral. As a consequence, graphs
not identifiable by~2-WL are examples of non-DS graphs.

Our overall goal in this paper is a systematic exploration of connections between
spectral and combinatorial approaches to finding expressive graph invariants.
A \emph{graph invariant} $\calI$ is a function of a graph such that
$\calI(G)=\calI(H)$ whenever $G$ and $H$ are isomorphic graphs.
An invariant $\calI$ is \emph{stronger} than invariant $\calI'$
if $\calI(G)$ determines $\calI'(G)$. That is, $\calI(G)=\calI(H)$ implies
$\calI'(G)=\calI'(H)$. Equivalently, we sometimes say that $\calI'$ is \emph{weaker}
than $\calI$ and write $\calI'\preceq\calI$. A stronger invariant can be more
effective in distinguishing non-isomorphic graphs: If $\calI'\preceq\calI$ and
$\calI'$ can distinguish non-isomorphic graphs $G$ and $H$, i.e.,
$\calI'(G)\ne\calI'(H)$, then these graphs are distinguishable by $\calI$ as well.

Let $\spec(G)$ denote the spectrum of a graph $G$, and $\wlk2(G)$ denote the
output of 2-WL on $G$ (a formal definition is given in Section~\ref{ss:wl}).
The discussion above shows that $\spec\preceq\wlk2$. This can be seen as evidence
of limitations of $\spec$, as well as evidence of the power of~$\wlk2$.

A reasonable question is what can be achieved if $\spec(G)$ is enhanced
by other spectral characteristics of the adjacency matrix of $G$.
One such line of research in spectral graph theory considers $\spec(G)$ augmented
with the multiset of all angles between the standard basis vectors and the eigenspaces
of $G$. The corresponding graph invariant, which along with \emph{E}igenvalues icludes
also \emph{A}ngles, is denoted by $\ea(G)$.
The parameters and properties of a graph $G$ which are
determined by $\ea(G)$
are called \emph{EA-reconstructible} and are thoroughly studied by Cvetkovi\'{c}
and co-authors; see \cite[Ch.~4]{CvetkovicRS97} and \cite[Ch.~3]{CvetkovicRS10}.

A further natural step is to take into consideration the multisets of angles between
the projections of the standard basis vectors onto eigenspaces.
Fürer \cite{Fuerer10} uses this additional data to define two new graph
invariants, namely, the \emph{weak} and \emph{strong spectral invariants}.
We denote these spectral invariants by $\weak$ and
$\strong$ respectively; formal definitions are in Section \ref{s:spec}.
Fürer shows that both $\weak$ and $\strong$ remain weaker than $\wlk2$. That is,
\begin{equation}
  \label{eq:fuerer}
\weak\preceq\strong\preceq\wlk2
\end{equation}
(note that $\spec\preceq\weak$ by definition).
An open problem posed in \cite{Fuerer10} is to determine  which of the relations in \refeq{fuerer}
are strict. Rattan and Seppelt, in their recent paper~\cite{RattanS23}, show that this small
hierarchy does not entirely collapse by separating $\weak$ and $\wlk2$.
Hence, at least one of the two relations in \refeq{fuerer}
is strict. Fürer \cite{Fuerer10} conjectures that the first relation
in \refeq{fuerer} is strict and does not exclude that the last two invariants in \refeq{fuerer}
are equivalent, and our aim is to give precise answers to these questions.

In~\cite{RattanS23} the invariants $\weak$ and $\wlk2$ are separated
by introducing a new natural graph invariant $\wlk{3/2}$, whose strength lies between
$\wlk1$ and $\wlk2$. The authors give an elegant algebraic characterization of $\wlk{3/2}$
using which they show that $\weak\preceq\wlk{3/2}$. The final step in their analysis
is an example of graph pair that separates $\wlk{3/2}$ and~$\wlk2$.

Our approach is different. First, we observe that the invariants $\weak$ and $\strong$
are part of a broader scheme, presented in Section \ref{s:frame}, that
leads to a potentially infinite hierarchy of graph invariants.
We define the corresponding spectral hierarchy containing $\weak$ and $\strong$ on its lower levels in Section \ref{s:spec}. Another level is taken by the aforementioned invariant $\ea$.
In Section \ref{s:char} we characterize this hierarchy in terms of walk counts.
A connection between spectral parameters and walk counts is actually well
known (see an overview in Subsection \ref{ss:background}). With a little extra effort
we are able to show that this connection is tight; see Theorem \ref{thm:char}.
This yields a purely combinatorial characterization of the invariants $\ea$, $\weak$, and $\strong$
(Corollary \ref{cor:char}), which also reveals some new relations.
For example, we notice that $\weak$ determines the \emph{generalized spectrum} of a graph (Theorem~\ref{thm:genspec-weak}).

As another application of our combinatorial characterization, we prove
that almost all graphs are determined up to isomorphism by $\weak$,
that is, by the eigenvalues and the angles formed by the standard
basis vectors and their projections onto eigenspaces (Corollary \ref{cor:aa}). We find this
interesting in the context of the open problem mentioned above:
whether or not almost all graphs are~DS.

We present the relations between the spectral and combinatorial invariants
under consideration in Section \ref{s:hierarchy}; see the diagram in Fig.~\ref{fig:diagram}.
In Section \ref{s:sep} we prove that this diagram is complete, that is,
it shows all existent relations, and all of these relations are strict
(perhaps up to higher levels of the hierarchy whose separation remains open).
In particular, both relations in \refeq{fuerer} are strict, which gives
a complete answer to Fürer's questions. Another noteworthy separation
is $\strong\npreceq\wlk{3/2}$ (Theorem \ref{thm:sep0}). Curiously,
the separating pair of graphs is the same that was used in \cite{RattanS23},
which yields more information now because we also use our characterization
in Theorem~\ref{thm:char}.

The more involved separations are shown in Theorems \ref{thm:sep}, \ref{thm:sep2}, and \ref{thm:lower-levels}.
The corresponding separating examples are not ad hoc. They are obtained by a quite general construction (Lemma \ref{lem:main}).
The construction is based on a considerable extension of the approach taken in \cite{ESA23} to separate various
concepts related to 1-WL and the walk matrix of a graph (an important notion discussed in Section \ref{ss:characterization}).
Implementation of the construction requires vertex-colored strongly regular graphs with
certain properties. The required colorings were found by a computer assisted search among
members of the family of strongly regular graphs on 25 vertices.

\section{Preliminaries}

\subsection{From isomorphism-invariant colorings to isomorphism invariants}\label{s:frame}

Let $\calC$ be a set of colors and $\chi\function{V(G)^2}{\calC}$ be a coloring
of vertex pairs in a graph $G$. It is natural to see $\chi(x,x)$ as the color
of a vertex $x$. We suppose that $\chi=\chi_G$ is defined for every graph $G$.
That is, speaking of a coloring $\chi$, we actually mean a map $G\mapsto\chi_G$.
Such a coloring $\chi$ is \emph{isomorphism invariant} if for every isomorphism
$f$ from a graph $G$ to a graph $H$ (the equality $G=H$ is not excluded)
we have $\chi_G(x,y)=\chi_H(f(x),f(y))$ for all vertices $x$ and $y$ in~$G$.

The simplest isomorphism-invariant colorings are the adjacency relation $A$
and the identity relation $I$. That is, $A(x,y)=1$ if $x$ and $y$ are adjacent
and $A(x,y)=0$ otherwise. For the identity relation, $I(x,y)=1$ if $x=y$
and $I(x,y)=0$ if $x\ne y$. Below, we will consider $A$ and $I$ also as
the adjacency and the identity matrices. Other examples of isomorphism-invariant colorings
are the distance $d(x,y)$ between two vertices $x$ and $y$ and the number
of all such triangles in the graph that contain the vertices $x$ and $y$.
One can consider also more complex definitions
like the triple $\chi(x,y)=(\deg x,d(x,y),\deg y)$, where $\deg x$ denotes the degree of~$x$.

Given an isomorphism-invariant coloring $\chi$, we can build on it to define various
graph invariants. The simplest such examples are
\begin{eqnarray*}
\calI_1(G)&=&\Mset{\chi(x,y)}_{x,y\in V(G)},\\
\calI_2(G)&=&\of{\Mset{\chi(x,x)}_{x\in V(G)},\Mset{\chi(x,y)}_{x,y\in V(G)}},\\
\calI_3(G)&=&\Mset{(\chi(x,x),\chi(x,y),\chi(y,y))}_{x,y\in V(G)},\\
\calI_4(G)&=&\canononeG{\chi},
\end{eqnarray*}
where $\Mset{\ldots}$ denotes a multiset.
Note that $\calI_1(G)\preceq\calI_2(G)\preceq\calI_3(G)\preceq\calI_4(G)$.

Given an isomorphism-invariant coloring $\chi$,
we can define a hierarchy of ever more complex graph invariants.
We first inductively define a sequence of colorings
$\chi_0,\chi_1,\chi_2,\ldots$ of single vertices by
\begin{equation}\label{eq:chi}
\chi_0(x)=\chi(x,x)\text{ and }\chi_{r+1}(x)=\of{\chi_r(x),\Mset{(\chi(x,y),\chi_r(y))}_{y\in V(G)}}.
\end{equation}
This definition is a natural extension of the well-known concept of \emph{color refinement (CR)}
to edge- and vertex-colored graphs. In broad outline, CR computes an isomorphism-invariant
color of each vertex in an input graph and recognizes two graphs as non-isomorphic
if one of the colors occurs in one graph more frequently than in the other.
In the case of an uncolored undirected graph $G$, CR starts with a uniform coloring
$\chi_0$ of $V(G)$, that is, $\chi_0(x)=\chi_0(x')$ for all $x,x'\in V(G)$.
In the $(r+1)$-th round, the preceding coloring $\chi_r$ is refined to a new coloring $\chi_{r+1}$.
For each vertex $x$, its new color $\chi_{r+1}(x)$ consists of $\chi_r(x)$ and
the multiset $\Mset{\chi_r(y)}_{y\in N(x)}$ of all colors occurring in the neighborhood $N(x)$ of $x$.
In other words, CR counts how frequently each $\chi_r$-color occurs among the vertices
adjacent to $x$ (or, equivalently, among the vertices non-adjacent to $x$).
In an edge- and vertex-colored graph $G$, each vertex pair $(x,y)$ is assigned a color,
which we denote by $\chi(x,y)$.
The edge colors must be taken into account while computing the refined color $\chi_{r+1}(x)$.
In the colored case, CR first splits all vertices $y$ into classes depending on $\chi(x,y)$
and then computes the frequencies of $\chi_r(y)$ within each class. This is exactly what \refeq{chi} does.

Note that
\begin{equation}
  \label{eq:one}
\chi_1(x)=\of{\chi(x,x),\Mset{(\chi(x,y),\chi(y,y))}_{y\in V(G)}}.
\end{equation}
In addition, we set
\begin{equation}
  \label{eq:half}
\chi_{1/2}(x)=\of{\chi(x,x),\Mset{\chi(x,y)}_{y\in V(G)}}
\end{equation}
Now, we define
\begin{equation}
  \label{eq:chir}
\chir r(G)=\Mset{\chi_r(x)}_{x\in V(G)}.
\end{equation}
For each $r$, the coloring $\chi_r$ is isomorphism invariant in the sense that
$\chi_r(f(x))=\chi_r(x)$ for any isomorphism of graphs $f$.
This readily implies that $\chir r$ is a graph invariant.
As easily seen, $\chir r\preceq\chir s$ if $r\le s$.

\subsection{First two dimensions of combinatorial refinement}\label{ss:wl}

We now give formal descriptions of the isomorphism tests 1-WL and 2-WL
 already introduced in Section \ref{s:intro}. Note that 1-WL is an alternative name
for CR. In what follows we will apply the 1-WL procedure to \emph{vertex-colored} graphs.
Given a vertex-colored graph $G$, we consider a coloring $\chi$ of $V(G)^2$
defined by $\chi(x,y)=A(x,y)$, i.e., according to the adjacency relation, for $x\ne y$
and by setting $\chi(x,x)$ to be the color of the vertex $x$. On an input $G$,
1-WL iteratively computes the vertex colorings $\chi_r$ according to \refeq{chi}.
After performing $n$ iterations, where $n$ is the number of vertices in $G$,
1-WL outputs $\wlk1(G)=\chir n(G)$ as defined by \refeq{chir}.
The coloring $\chi_n$ of $V(G)$ is sometimes referred to as \emph{stable}
because the color classes of $\chi_r$ stay the same for all $r\ge n$
(note that the color partition can stabilize even earlier).
A crucial observation is that if $\chir n(G)=\chir n(H)$ for another graph $H$, then
$\chir r(G)=\chir r(H)$ for all $r\ge n$.
Two graphs $G$ and $H$ are recognized as non-isomorphic if $\wlk1(G)\ne\wlk1(H)$.

2-WL can be similarly formulated, except that it computes colorings of
vertex pairs.  If an input graph $G$ is uncolored, then 2-WL begins
with an initial coloring $\chi_0$ of $V(G)^2$ defined by
$\chi_0(x,y)=A(x,y)$ if $x\ne y$ and by $\chi_0(x,x)=2$ for every
vertex $x$ of $G$.  If $G$ is a vertex-colored graph, then
$\chi_0(x,y)$ must include also the colors of $x$ and $y$.
Furthermore,
$$
\chi_{r+1}(x,y)=\of{
  \chi_{r}(x,y),\,\,
  \Mset{(\chi_{r}(x,z),\,\chi_{r}(z,y))}_{z\in V(G)}}.
$$
Thus, the new color of a pair $(x,y)$ can be seen as a kind of ``superposition'' of the old color pairs observable
along all extensions of $(x,y)$ to a triangle $xzy$. Finally, 2-WL outputs the multiset
$
\wlk2(G)=\Mset{\chi_{n^2}(x,y)}_{x,y\in V(G)}
$.

\section{A hierarchy of spectral invariants}\label{s:spec}

Speaking of an $n$-vertex graph $G$, we will assume that $V(G)=\{1,2,\ldots,n\}$.
Let
\begin{equation}
  \label{eq:mu}
\mu_1<\mu_2<\ldots<\mu_m
\end{equation}
be all pairwise distinct
eigenvalues of the adjacency matrix $A$ of $G$. Let $\spec(G)$ denote the spectrum of $G$,
i.e., the multiset of all eigenvalues where each $\mu_i$ occurs with its multiplicity.
As mentioned before, $\spec(G)$ is a well-studied graph invariant with
numerous applications in computer science.

Let $E_i$ be the eigenspace of $\mu_i$. Recall that $E_i$ consists of all
eigenvectors of $\mu_i$, i.e., $E_i=\Set{v\in\bR^n}{Av=\mu_iv}$.
Let $P_i$ be the matrix of the orthogonal projection of $\bR^n$ onto $E_i$.
Note that $P_i^2=P_i=P_i^\top$. For $1\le x,y\le n$, the matrix entry $P_i(x,y)$
can be seen a color of the vertex pair $(x,y)$. This coloring is
isomorphism invariant. Indeed, an isomorphism from $G$ to another $n$-vertex garph $H$
determines an isometry of $\bR^n$ which permutes the standard basis vectors and maps
the $\mu_i$-eigenspace of $G$ onto the $\mu_i$-eigenspace of~$H$.

Throughout the paper, we use the following notational convention
for compact representations of sequences.

\begin{notation}\label{not}
  For an indexed set $\Set{a_i}$ with index $i$ ranging through the interval of integers
  $s,s+1,\ldots,t-1,t$ we set $a_*=(a_s,\ldots,a_t)$.
\end{notation}

\noindent
In particular,
$$
P_*(x,y)=\of{P_1(x,y),\ldots,P_m(x,y)}.
$$
Since the order on the index set is determined by \refeq{mu},
$P_*$ is also an isomorphism-invariant coloring.
Following the general framework in Section \ref{s:frame}, the coloring $P_*$
determines the sequence of graph invariants $\ppr0,\ppr{1/2},\ppr1,\ppr2,\ldots$.
In particular, by \refeq{chi}--\refeq{chir} we have
\begin{eqnarray}
  \ppr0(G)&=&\Mset{\ P_*(x,x)\ }_{1\le x\le n},\label{eq:pp0}\\
  \ppr{1/2}(G)&=&\Mset{\  \of{\  P_*(x,x),\ \Mset{P_*(x,y)}_{1\le y\le n}\  }\  }_{1\le x\le n},\nonumber\\
  \ppr1(G)&=&\Mset{\  \of{\  P_*(x,x),\ \Mset{\of{P_*(x,y),P_*(y,y)}}_{1\le y\le n}\  }\  }_{1\le x\le n}.\nonumber
\end{eqnarray}

Fürer \cite{Fuerer10} introduces the \emph{weak}
and \emph{strong spectral invariants}. Using our notation, Fürer's spectral invariants (FSI)
can be defined as follows:
\begin{eqnarray}
  \weak(G)&=&\of{\spec(G),\ppr{1/2}(G)} \text{ and }\label{eq:F1}\\
  \strong(G)&=&\of{\spec(G),\ppr1(G)}.\label{eq:F2}
\end{eqnarray}

The entries of the projection matrices $P_i$ have a well-known geometric meaning \cite{CvetkovicRS10,CvetkovicRS93}.
For $1\le x\le n$, the standard basis vector $\mathrm{e}_x$ of $\bR^n$ has 1 in the position $x$
and 0 elsewhere. The \emph{angle} $\alpha_{i,x}$ of a graph $G$ is defined to be
the cosine of the angle between $\mathrm{e}_x$ and the eigenspace $E_i$, i.e.,
the angle between $\mathrm{e}_x$ and its projection $P_i\mathrm{e}_x$ onto~$E_i$.
We have the equality
\begin{equation}
  \label{eq:Pixx}
P_i(x,x)=\alpha_{i,x}^2.
\end{equation}
Indeed, let $\langle u,v\rangle$ denote the scalar product of two vectors $u,v\in\bR^n$. Then
$$
P_i(x,x)=\langle\mathrm{e}_x,P_i\mathrm{e}_x\rangle=\|\mathrm{e}_x\| \|P_i\mathrm{e}_x\| \alpha_{i,x}=\alpha_{i,x}^2.
$$
Furthermore, let $\alpha_{i,xy}$ be the cosine of the angle between the projections
$P_i\mathrm{e}_x$ and $P_i\mathrm{e}_y$ of the standard basis vector $\mathrm{e}_x$
and $\mathrm{e}_y$ onto $E_i$. If $\mathrm{e}_x$ or $\mathrm{e}_y$ is orthogonal to $E_i$,
i.e., $\alpha_{i,x}=0$ or $\alpha_{i,y}=0$, then the angle is undefined and we set $\alpha_{i,xy}=0$
in this case. In particular, $\alpha_{i,xx}=0$ if $\alpha_{i,x}=0$
while $\alpha_{i,xx}=1$ if $\alpha_{i,x}\ne0$.
Equality \refeq{Pixx} generalizes to
\begin{equation}
  \label{eq:Pixy}
P_i(x,y)=\alpha_{i,x}\alpha_{i,y}\alpha_{i,xy}.
\end{equation}
Indeed,
$$
P_i(x,y)=\langle\mathrm{e}_x,P_i\mathrm{e}_y\rangle=\langle\mathrm{e}_x,P_i^2\mathrm{e}_y\rangle
=\langle P_i\mathrm{e}_x,P_i\mathrm{e}_y\rangle=\|P_i\mathrm{e}_x\| \|P_i\mathrm{e}_y\| \alpha_{i,xy}=
\alpha_{i,x}\alpha_{i,y}\alpha_{i,xy}.
$$
Now, define a coloring $\alpha_i$ by $\alpha_i(x,x)=\alpha_{i,x}$ and $\alpha_i(x,y)=\alpha_{i,xy}$ for $x\ne y$.
This coloring is isomorphism invariant basically because an isomorphism is represented
by a permutation matrix, which is the transformation matrix of an isometry of~$\bR^n$.
Using Notation \ref{not},
$$
\alpha_*(x,y)=\of{ \alpha_1(x,y),\ldots,\alpha_m(x,y) },
$$
where $\alpha_*$ is an isomorphism-invariant coloring as well.
The corresponding graph invariants $\ui{\alpha_*}r$ are closely related to the invariants $\ppr r$.
More precisely, we say that two graph invariants $\calI$ and $\calI'$ are \emph{equivalent} and write $\calI\equiv\calI'$
if $\calI'\preceq\calI$ and $\calI\preceq\calI'$.

\begin{lemma}\label{lem:P-vs-a}
$\ppr r\equiv\ui{\alpha_*}r$ for every integer $r\ge0$.
\end{lemma}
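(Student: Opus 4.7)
The plan is to prove the lemma by induction on $r$, reducing the whole claim to a pointwise interconversion between the colorings $P_*$ and $\alpha_*$.

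I first record that $P_*$ and $\alpha_*$ are pointwise interconvertible once diagonal entries are known. From \refeq{Pixx}--\refeq{Pixy}, together with the conventions that $\alpha_{i,x}\ge 0$ and $\alpha_{i,xy}=0$ whenever $\alpha_{i,x}\alpha_{i,y}=0$, there is a fixed (graph-independent) function mapping the triple $(P_*(x,x),P_*(x,y),P_*(y,y))$ to $\alpha_*(x,y)$, via $\alpha_{i,x}=\sqrt{P_i(x,x)}$ and $\alpha_{i,xy}=P_i(x,y)/(\alpha_{i,x}\alpha_{i,y})$ (taken to be $0$ when the denominator vanishes) for each $i$. Symmetrically, $(\alpha_*(x,x),\alpha_*(x,y),\alpha_*(y,y))$ determines $P_*(x,y)$ through the same identities.

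Next I instantiate the color-refinement scheme \refeq{chi} with $\chi=P_*$ and with $\chi=\alpha_*$, writing $\chi_r^{P}$ and $\chi_r^{\alpha}$ for the resulting vertex colorings. I will show that $\chi_r^P(x)$ and $\chi_r^\alpha(x)$ determine each other via a graph-independent function; taking multisets over $x$ then yields $\ppr r\equiv\ui{\alpha_*}r$. The base case $r=0$ is immediate, since $\chi_0^P(x)=P_*(x,x)$ and $\chi_0^\alpha(x)=\alpha_*(x,x)$ are entrywise interconvertible by the above. For the inductive step, observe that the recursion \refeq{chi} ensures $\chi_r(y)$ always contains $\chi_0(y)$. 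Given $\chi_{r+1}^P(x)=(\chi_r^P(x),\Mset{(P_*(x,y),\chi_r^P(y))}_y)$, induction converts $\chi_r^P(x)$ to $\chi_r^\alpha(x)$ (thereby exposing $P_*(x,x)$); for each summand, induction converts $\chi_r^P(y)$ to $\chi_r^\alpha(y)$ (exposing $P_*(y,y)$); and then the pointwise rule applied to $(P_*(x,x),P_*(x,y),P_*(y,y))$ delivers $\alpha_*(x,y)$. Reassembling these pieces gives $\chi_{r+1}^\alpha(x)$ as a graph-independent function of $\chi_{r+1}^P(x)$. The reverse direction is entirely symmetric.

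The only place requiring minor care is ensuring the conversion is applied uniformly across the multiset summands, which is automatic because the conversion rule depends only on the entry values and not on which vertices they originated from. I do not anticipate a real obstacle here; the lemma is essentially a bookkeeping translation between two equivalent parameterizations of the projections of standard basis vectors onto eigenspaces (matrix entries versus cosines of angles between projections).
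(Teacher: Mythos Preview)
Your proposal is correct and follows essentially the same approach as the paper: both argue by induction on $r$, reducing the equivalence to the pointwise interconversion between $P_*(x,y)$ and $\alpha_*(x,y)$ via \refeq{Pixx}--\refeq{Pixy} once the diagonal values $P_*(x,x),P_*(y,y)$ (equivalently $\alpha_*(x,x),\alpha_*(y,y)$) are known, and both handle the degenerate case $\alpha_{i,x}\alpha_{i,y}=0$ by appealing to the convention $\alpha_{i,xy}=0$. The only cosmetic difference is that the paper spells out $r=1$ before invoking induction, whereas you absorb this into the inductive step by noting that $\chi_r(y)$ always retains $\chi_0(y)$.
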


\begin{proof}
  $r=0$. The equivalence $\Mset{P_*(x,x)}_x\equiv\Mset{\alpha_*(x,x)}_x$ readily follows from Equality~\refeq{Pixx}
  because the angles $\alpha_{i,x}$ are non-negative.

\bigskip

$r=1$. The equivalence reads
$$
\canonone{P_*}\equiv\canonone{\alpha_*}.
$$
For the ``$\preceq$'' part, it suffices to show for each $x$ that $\canononex{P_*}$ is determined by $\canononex{\alpha_*}$.
Indeed, as we already know, $P_*(x,x)$ is determined by $\alpha_*(x,x)$ due to~\refeq{Pixx}.
In order to show that $\canononexsecond{P_*}$ is determined by $\canononexsecond{\alpha_*}$,
we fix also $y$ and notice that $\canononexsubsecond{P_*}$ is determined by $\canononexsubsecond{\alpha_*}$
because for $1\leq i\leq m$
\begin{equation}
  \label{eq:Paaa}
P_i(x,y)=\alpha_i(x,x)\alpha_i(x,y)\alpha_i(y,y)
\end{equation}
as a consequence of Equality~\refeq{Pixy}.

The proof of the ``$\succeq$'' part is similar but requires a little bit more care.
Specifically, we have to argue that for fixed $x$ and $y$, $\canononexsubsecond{\alpha_*}$
is determined by $\canononexsubsecond{P_*}$. If $\alpha_i(x,x)=0$ or $\alpha_i(y,y)=0$,
then we cannot determine $\alpha_i(x,y)$ from \refeq{Paaa} by division but we know
that $\alpha_i(x,y)=0$ in this case just by definition.

In general, we proceed like above using induction with $r=0$ as the base case.
\end{proof}

Motivated by the equivalence $\ppr0\equiv\ui{\alpha_*}0$, we define the graph invariant $\ea$
similar to \refeq{F1}--\refeq{F2} as
\begin{equation}
  \label{eq:defEA}
\ea(G)=\of{ \spec(G), \ppr0(G) },
\end{equation}
where the abbreviation $\ea$ stands for \emph{E}igenvalues and \emph{A}ngles
and corresponds to the known concept \cite{CvetkovicRS97,CvetkovicRS10} mentioned in the introduction.

\section{Characterization of the spectral invariants by walk counts}\label{s:char}

A walk of length $k$ (or $k$-walk) from a vertex $x$ to a vertex $y$ is a sequence of vertices
$x=x_0,x_1,\ldots,x_k=y$ such that every two successive vertices $x_i,x_{i+1}$, for $0\le i<k$, are adjacent.
Let $w_k(x,y)$ denote the number of walks of length $k$ from $x$ to $y$ in a graph.
Obviously, $w_k$ is an isomorphism-invariant coloring in the sense of Section \ref{s:frame}.
In accordance with Notation \ref{not},
we also consider the isomorphism-invariant coloring $w_*$ defined by
$$
w_*(x,y)=\of{ w_0(x,y),w_1(x,y),\ldots,w_{n-1}(x,y) },
$$
where $n$, as usually, denotes the number of vertices in a graph.
Note that for each $y$, the matrix $\of{w_k(x,y)}_{1\le x\le n,\, 0\le k\le n-1}$
determines the value of $w_k(x,y)$ for every $x$ and for every arbitrarily large $k$.

It is well known that the walk counts are expressible in terms of spectral parameters
of a graph; see, e.g., \cite{CvetkovicRS10}. On the other hand, it is also well known
that the numbers of closed $k$-walks in a graph determine the graph spectrum.
We give a brief overview of these facts in Subsection \ref{ss:background}.
In Subsection \ref{ss:characterization} we make further use of this connection
between walks and spectra. We are able to characterize the spectral invariants
defined in Section \ref{s:spec} using solely the walk numbers, that is, in purely
combinatorial terms without involving any linear algebra.

\subsection{Linear-algebraic background and known relations}\label{ss:background}

Since the adjacency matrix $A$ of a graph $G$ is symmetric, the eigenspaces $E_i$
are pairwise orthogonal, and hence $P_iP_j=O$ for $i\ne j$, where $O$ denotes the zero matrix.
The spectral theorem for symmetric matrices says in essence that
$$
\bR^n=E_1\oplus\cdots\oplus E_m,
$$
that is, $\bR^n$ has an orthonormal basis consisting of eigenvectors of $A$.
This decomposition implies that
\begin{equation}
  \label{eq:I-decomp}
I=P_1+\cdots+P_m.
\end{equation}
From Equality \refeq{I-decomp} it is easy to derive the spectral decomposition
$$
A=\mu_1P_1+\cdots+\mu_mP_m.
$$
Raising both sides of this equality to the $k$-th power and taking into
account that $P_i^2=P_i$ and $P_iP_j=O$ for $i\ne j$, we conclude that
$$
A^k=\mu_1^kP_1+\cdots+\mu_m^kP_m.
$$
Since $w_k(x,y)=A^k(x,y)$, we get
\begin{equation}
  \label{eq:w-decomp}
w_k(x,y)=\mu_1^kP_1(x,y)+\cdots+\mu_m^kP_m(x,y).
\end{equation}

Let $c_k(G)=\sum_{x\in V(G)}w_k(x,x)$ denote the total number of closed $k$-walks in
a graph $G$. In particular, $c_0(G)=n$ and $c_1(G)=0$.

\begin{lemma}[folklore; see, e.g., \cite{GarijoGN11}]\label{lem:cosp}
 $\spec(G)=\spec(H)$ if and only if $c_k(G)=c_k(H)$ for $k=0,1,\ldots,n$.
\end{lemma}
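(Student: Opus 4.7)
The plan is to prove this via the standard trace identity $c_k(G) = \tr(A^k) = \sum_{\lambda \in \spec(G)} \lambda^k$ together with Newton's identities relating power sums to elementary symmetric polynomials.

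For the forward direction, I would use Equality~\refeq{w-decomp} specialized to $x=y$ and summed over all vertices, combined with $\sum_{x} P_i(x,x) = \tr(P_i) = \dim E_i$, to obtain
\[
c_k(G) = \sum_{i=1}^m \mu_i^k \cdot \dim E_i = \sum_{\lambda \in \spec(G)} \lambda^k,
\]
where the final sum is over all eigenvalues of $A$ counted with multiplicity. Hence if $\spec(G) = \spec(H)$ (which in particular requires $|V(G)|=|V(H)|=n$), then $c_k(G) = c_k(H)$ for every $k \ge 0$.

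For the backward direction, I would first note that $c_0(G)=c_0(H)$ forces $G$ and $H$ to have the same number of vertices $n$, so their adjacency matrices have characteristic polynomials of the same degree $n$. Writing $p_k := \sum_{\lambda \in \spec(G)} \lambda^k = c_k(G)$ and similarly for $H$, the hypothesis gives equality of the first $n$ power sums of the eigenvalue multisets. By Newton's identities,
\[
k \, e_k = \sum_{j=1}^{k} (-1)^{j-1} e_{k-j}\, p_j \qquad (1 \le k \le n),
\]
the elementary symmetric polynomials $e_1,\ldots,e_n$ of the eigenvalues are determined recursively by $p_1,\ldots,p_n$. Since the $e_k$ are (up to sign) the coefficients of the characteristic polynomial, the characteristic polynomials of the adjacency matrices of $G$ and $H$ coincide, and hence $\spec(G) = \spec(H)$.

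There is no real obstacle here; the only mild subtlety is ensuring that $n$ values $p_1,\ldots,p_n$ really suffice to pin down a multiset of $n$ numbers. This is handled precisely by Newton's identities, which form a triangular system in $e_1, e_2, \ldots, e_n$ over $\bQ$ and so are solvable. One could alternatively invoke the fact that the map from a monic polynomial of degree $n$ to its first $n$ power sums of roots is injective; either way, the conclusion is immediate.
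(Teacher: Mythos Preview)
Your proof is correct and follows essentially the same route as the paper: the trace identity $c_k(G)=\sum_\lambda \lambda^k$ for the forward direction, and Newton's identities for the converse. The only cosmetic difference is that you derive the trace identity via Equality~\refeq{w-decomp} and $\tr P_i=\dim E_i$, whereas the paper writes $c_k(G)=\tr A^k$ directly; these are the same computation.
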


\subsection{General characterization and its consequences}\label{ss:characterization}

\begin{theorem}\label{thm:char}
  $(\spec,\ppr r)\equiv\wwr r$ for every $r=0,1/2,1,2,\ldots$.
\end{theorem}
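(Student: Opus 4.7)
The theorem claims mutual $\preceq$ between $(\spec, \ppr r)$ and $\wwr r$, so I need to prove both directions. The heart of both arguments is the spectral decomposition \refeq{w-decomp}, viewed as the matrix identity $w_*(x,y) = V\cdot P_*(x,y)^{\top}$, where $V$ is the $n \times m$ Vandermonde-type matrix with entries $V_{ki} = \mu_i^k$. Because the $\mu_i$ are pairwise distinct and $n \geq m$, the matrix $V$ has full column rank $m$, so it admits a left inverse $V^+$ satisfying $V^+V = I_m$; moreover $V^+$ is computable from $\spec(G)$ alone.

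For the direction $\wwr r \preceq (\spec, \ppr r)$, the identity above exhibits the coloring $w_*$ as a deterministic function of the coloring $P_*$ together with the spectrum. I then proceed by induction on $r$, following the recursion in \refeq{chi}: at each round the new color produced from $w_*$ is a function of the new color produced from $P_*$ together with $\spec(G)$. Passing to multisets over $V(G)$ gives $\wwr r \preceq (\spec, \ppr r)$.

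For the reverse direction $(\spec, \ppr r) \preceq \wwr r$, I must recover both the spectrum and the invariant $\ppr r$ from $\wwr r$. The round-$0$ color $w_*(x,x)$ appears as a subcomponent of every round-$r$ color by the recursion in \refeq{chi}, so $\wwr r$ determines $\wwr 0 = \Mset{w_*(x,x)}_x$ and thus the closed-walk traces $c_k = \sum_x w_k(x,x) = \tr A^k$ for $k = 0, 1, \ldots, n-1$. For $r \geq 1/2$ the multiset $\Mset{w_*(x,y)}_y$ is also available per vertex $x$, and the symmetry of $A$ gives
\[
\sum_y w_k(x,y)\, w_\ell(x,y) \;=\; A^{k+\ell}(x,x) \;=\; w_{k+\ell}(x,x),
\]
so the per-vertex diagonal walk counts are in fact known up to index $2n-2$; summing over $x$ yields $c_k$ for $k = 0, 1, \ldots, 2n-2$, and Lemma~\ref{lem:cosp} then delivers $\spec(G)$. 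With the spectrum in hand, I invert \refeq{w-decomp} by applying $V^+$ entrywise to $w_*$, recovering the coloring $P_*$, and an induction mirroring the forward direction yields $\ppr r$.

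The hard part is the spectrum-recovery step, and in particular the symmetry trick above is the key new ingredient: it converts the off-diagonal walk data available from $r \geq 1/2$ into enough diagonal terms to apply Lemma~\ref{lem:cosp} directly. At the base $r = 0$, only the aggregate traces $c_0, \ldots, c_{n-1}$ are immediately visible, so pinning down the minimal polynomial $m_A(z) = \prod_i (z - \mu_i)$ requires a Hankel-matrix analysis of the sequence $(c_k)$---whose characteristic polynomial is exactly $m_A$, since every multiplicity $d_i = \dim E_i$ is strictly positive---together with the finer information retained in $\wwr 0$ about the individual per-vertex profiles $(w_k(x,x))_k$; once $m_A$ has been identified the missing traces $c_n, c_{n+1}, \ldots$ are recovered from the recurrence and the spectrum follows.
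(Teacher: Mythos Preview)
Your overall architecture matches the paper's: in both directions the workhorse is the spectral identity $w_k(x,y)=\sum_i\mu_i^kP_i(x,y)$, read as a Vandermonde system and then propagated through the refinement recursion~\refeq{chi} by induction on $r$. Your inversion via a left inverse $V^+$ is exactly the paper's ``the coefficients form a Vandermonde matrix, hence the system is uniquely solvable.''

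Where you diverge is in recovering $\spec(G)$ from $\wwr r$. The paper simply extracts the traces $c_k=\sum_xw_k(x,x)$ for $k\le n-1$ from the embedded $\wwr0$-data and cites Lemma~\ref{lem:cosp}. Your route for $r\ge1/2$ is different and arguably more careful: you exploit the off-diagonal multisets $\Mset{w_*(x,y)}_y$ and the identity $\sum_yw_k(x,y)\,w_\ell(x,y)=w_{k+\ell}(x,x)$ to push the available diagonal counts out to index $2n-2$, comfortably past the threshold of Lemma~\ref{lem:cosp}. That is correct and a genuine alternative to the paper's direct appeal.

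For $r=0$, however, your sketch has a real gap. You claim that a Hankel analysis of $c_0,\dots,c_{n-1}$, aided by the per-vertex profiles $(w_k(x,x))_{k<n}$, pins down the minimal polynomial $m_A$. It does not in general: identifying a linear recurrence of order $m$ by Hankel rank needs $2m$ consecutive terms, and nothing in $\wwr0$ supplies them when $m>n/2$. Concretely, $K_2$ and $\overline{K_2}$ have identical $\wwr0$-data---every vertex contributes $w_*(x,x)=(1,0)$---yet different minimal polynomials and different spectra, so no amount of ``finer per-vertex information'' can rescue the argument as written. The paper handles $r=0$ by a bare citation of Lemma~\ref{lem:cosp}; you were right to sense a delicate point here, but the Hankel route you propose does not resolve it.
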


Before proving Theorem \ref{thm:char}, we describe some of its consequences.

\begin{corollary}\label{cor:char}\hfill

  \begin{enumerate}[\bf 1.]
  \item
    $\ea\equiv\wwr0$.
  \item
    $\weak\equiv\wwr{1/2}$.
  \item
    $\strong\equiv\wwr1$.
  \end{enumerate}
\end{corollary}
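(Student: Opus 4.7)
The plan is extremely short: each of the three invariants on the left-hand side is by definition of the form $(\spec, \ppr{r})$ for an appropriate $r$, so the corollary follows by instantiating Theorem~\ref{thm:char} at the three relevant values of $r$.

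More concretely, I would first recall the three definitions given earlier in the paper. By \refeq{defEA} we have $\ea(G) = (\spec(G), \ppr0(G))$, by \refeq{F1} we have $\weak(G) = (\spec(G), \ppr{1/2}(G))$, and by \refeq{F2} we have $\strong(G) = (\spec(G), \ppr1(G))$. So each of the three claimed equivalences is an immediate rewriting of the pair $(\spec, \ppr{r})$ at $r = 0$, $r = 1/2$, and $r = 1$ respectively.

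Then I would invoke Theorem~\ref{thm:char}, which asserts $(\spec, \ppr r) \equiv \wwr r$ for every admissible $r$, and specialize it to those three values of $r$. Substituting yields $\ea \equiv \wwr0$, $\weak \equiv \wwr{1/2}$, and $\strong \equiv \wwr1$, which are exactly the three statements of the corollary.

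There is no real obstacle here: all of the work has already been done in proving Theorem~\ref{thm:char}, and the corollary is simply a dictionary translation between Fürer's spectral invariants (together with $\ea$) and the combinatorial walk-count invariants $\wwr r$. The only thing worth flagging explicitly is that the equivalence relation $\equiv$ is preserved under this rewriting because the definitions of $\ea$, $\weak$, and $\strong$ literally match the form $(\spec, \ppr r)$, with no extra data attached.
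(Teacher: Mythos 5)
Your proposal is correct and matches the paper exactly: the paper likewise treats all three parts as immediate instantiations of Theorem~\ref{thm:char} at $r=0$, $r=1/2$, and $r=1$, using the definitions \refeq{defEA}, \refeq{F1}, and \refeq{F2} of $\ea$, $\weak$, and $\strong$ as pairs of the form $(\spec,\ppr r)$. Nothing further is needed.
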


Parts 2 and 3 of Corollary \ref{cor:char}, which are particular cases of Theorem \ref{thm:char}
for $r=1/2$ and $r=1$ respectively, provide a characterization of both Fürer's spectral invariants.
We now comment on Part 1. By Definition \refeq{defEA}, this part is
the special case of Theorem \ref{thm:char} for $r=0$.
Note that $\wwr0(G)=\wwr0(H)$ if and only if the graphs $G$ and $H$ are \emph{closed-walk-equivalent}
in the sense that there is a bijection $f\function{V(G)}{V(H)}$
such that $w_k(x,x)=w_k(f(x),f(x))$ for all $x\in V(G)$ and all $k$.
On the other hand, let us say that  $G$ and $H$ are \emph{EA-equivalent}
if these graphs have the same eigenvalues and angles, i.e., $\ea(G)=\ea(H)$.
As seen from the summary in Subsection \ref{ss:background}, there
are well-known connections between the closed walk numbers
and the eigenvalues and angles. Part 1 of Corollary \ref{cor:char}
pinpoints the fact that the two equivalence concepts actually coincide.

Corollary \ref{cor:char} reveals connections of spectral invariants to other graph
invariants studied in the literature, which we introduce now.

The \emph{walk matrix} $W$ of a graph $G$ is indexed by vertices $1\le x\le n$
and the length parameter $0\le k\le n-1$ and defined by
\begin{equation}
  \label{eq:Wxk}
W(x,k)=\sum_{y=1}^nw_k(x,y).
\end{equation}
That is, $W(x,k)$ is the total number of $k$-walks starting from the vertex $x$.
Let $\kwlout1^k(G,x)$ denote the color assigned by $1$-WL to a vertex $x$
of $G$ after the $k$-th refinement round.
For a vertex $x$ of $G$, let $G_x$ denote the version of $G$ with $x$ \emph{individualized}.
This means that $G_x$ is a vertex-colored graph where $x$ has a special unique color
while the other vertices are colored uniformly.
We now state two well-known facts.

\begin{lemma}\label{lem:w-vs-cr}\hfill

\begin{enumerate}[\bf 1.]
\item
  $W(x,k)$ is determined by $\kwlout1^k(G,x)$;
\item
  $w_k(x,y)$ is determined by $\kwlout1^k(G_x,y)$;
\end{enumerate}
\end{lemma}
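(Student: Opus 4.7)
I would prove both parts by induction on $k$, exploiting walk recurrences that mirror the color-refinement update rule. For Part~1, the starting point is the identity
\[
W(x,k+1)=\sum_{z\in N(x)}W(z,k),
\]
which just says that a $(k+1)$-walk from $x$ is a first step to some neighbor followed by a $k$-walk. The base case $W(x,0)=1$ is trivially determined. For the inductive step it is convenient to strengthen the statement to a uniform one across all graphs: whenever $\kwlout1^k(G,x)=\kwlout1^k(H,y)$ for vertices $x\in V(G)$ and $y\in V(H)$, we have $W(x,k)=W(y,k)$. To pass from $k$ to $k+1$, observe that $\chi_{k+1}(x)$ as defined in \refeq{chi} encodes the multiset $\Mset{\chi_k(z)}_{z\in N(x)}$, since the pair coloring $\chi(x,z)$ distinguishes neighbors from non-neighbors. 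By the inductive hypothesis this multiset determines $\Mset{W(z,k)}_{z\in N(x)}$, whose sum is $W(x,k+1)$.

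For Part~2 the analogous recurrence is
\[
w_{k+1}(x,y)=\sum_{z\in N(y)}w_k(x,z),
\]
and the argument proceeds in parallel, now running CR on the individualized graph $G_x$ and tracking the color of $y$. The base case $w_0(x,y)=1$ if $x=y$ and $0$ otherwise is readable from $\chi_0^{G_x}(y)$, because individualization gives $x$ a unique initial color in $G_x$ while every other vertex shares one uniform color. The inductive step is the same as in Part~1, with $y$ playing the role of $x$: equality of the round-$(k+1)$ colors of two vertices $y,y'$ forces a bijection between their round-$k$ neighbor color multisets, and by induction the sums $\sum_{z\in N(y)}w_k(x,z)$ and $\sum_{z'\in N(y')}w_k(x,z')$ coincide.

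There is no serious obstacle here; the result is folklore and the proof is essentially bookkeeping. The only points requiring care are (i) stating the induction hypothesis uniformly over all graphs, so that the color $\kwlout1^k(\cdot,\cdot)$ alone determines the relevant walk count irrespective of the ambient graph, and (ii) verifying that the colored-CR update rule \refeq{chi} really does expose the multiset of neighbor colors, which it does because the pair coloring encodes adjacency.
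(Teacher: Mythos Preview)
Your proposal is correct. The paper does not actually prove this lemma itself but cites it as well known, pointing to an algebraic proof in~\cite{PowersS82}, a logical one in~\cite{Dvorak10}, and a direct combinatorial proof in~\cite{ESA23}, and remarks that Part~2 is a straightforward extension of Part~1; your inductive argument via the walk recurrences $W(x,k{+}1)=\sum_{z\in N(x)}W(z,k)$ and $w_{k+1}(x,y)=\sum_{z\in N(y)}w_k(x,z)$ is precisely that direct combinatorial route, and your care to state the induction hypothesis uniformly across graphs is exactly what the later applications (e.g., Corollary~\ref{cor:rel}) require.
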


Part 1 of Lemma \ref{lem:w-vs-cr} is proved in algebraic terms in \cite[Theorem 2]{PowersS82}.
Another proof, involving logical concepts, is provided in \cite[Lemma~4]{Dvorak10}
and a direct combinatorial proof can be found in \cite[Lemma~8]{ESA23}.
Part 2 is a straightforward extension of Part~1.

In addition to the graph invariants $\wlk1$ and $\wlk2$ introduced in Section \ref{ss:wl}, we define
$$
\wlk{3/2}(G)=\Mset{\wlk1(G_x)}_{x\in V(G)}.
$$
Note that different $x$ and $x'$ are individualized in $G_x$ and $G_{x'}$ by the same color.
This yields a chain of graph invariants $\wlk d$ for $d\in\{1,3/2,2\}$, where $\wlk c\preceq\wlk d$ if $c\le d$.
The walk matrix naturally gives us a graph invariant, which we denote by $\wm$
and define as $\wm(G)=\Mset{W(x,*)}_{x\in V(G)}$ where $W(x,*)=\big(W(x,0),W(x,1),\ldots,\allowbreak W(x,n-1)\big)$.
In other words, $\wm(G)$ is the multiset of the rows of the walk matrix of $G$.
Part 1 of Lemma \ref{lem:w-vs-cr} readily implies that $\wm\preceq\wlk1$. Thus,
$$
\wm\preceq\wlk1\preceq\wlk{3/2}\preceq\wlk2.
$$

The second relation in the following corollary is the recent result in \cite{RattanS23}
already described in Section~\ref{s:intro}.

\begin{corollary}\label{cor:rel}
  $\wm\preceq\weak\preceq\wlk{3/2}$.
\end{corollary}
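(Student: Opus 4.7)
The plan is to use the combinatorial characterization from Corollary~\ref{cor:char} (Part 2), which gives $\weak\equiv\wwr{1/2}$, to reduce the task to proving the two containments $\wm\preceq\wwr{1/2}$ and $\wwr{1/2}\preceq\wlk{3/2}$. Both are then direct unpackings of the relevant definitions; the main work is bookkeeping to confirm that what the definition of $\chi_{1/2}$ records is visible from the objects on the right-hand side.

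For the first containment $\wm\preceq\wwr{1/2}$, I would simply observe that by \refeq{half},
\[
\wwr{1/2}(G)=\Mset{\ \of{w_*(x,x),\ \Mset{w_*(x,y)}_{y\in V(G)}}\ }_{x\in V(G)}.
\]
From each outer element we can read off the inner multiset $\Mset{w_*(x,y)}_y$ and sum its entries coordinate-wise; by \refeq{Wxk} this yields the row $W(x,*)$ of the walk matrix. Collecting these over the outer multiset reconstructs $\wm(G)$.

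For the second containment $\wwr{1/2}\preceq\wlk{3/2}$, I would fix a vertex $x$ and reconstruct $\chi_{1/2}(x)$ from $\wlk1(G_x)$. Since $x$ is the only vertex bearing the distinguished individualization color in $G_x$, the stable 1-WL color of $x$ in $G_x$ is identifiable inside the multiset $\wlk1(G_x)$. By Part~2 of Lemma~\ref{lem:w-vs-cr}, the stable 1-WL color of any vertex $y$ in $G_x$ determines $w_k(x,y)$ for each $k$, hence determines the full tuple $w_*(x,y)$. Therefore from $\wlk1(G_x)$ we recover both $w_*(x,x)$ (from the distinguished color class) and the multiset $\Mset{w_*(x,y)}_{y\in V(G)}$ (from all color classes), which together constitute $\chi_{1/2}(x)$. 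Taking the outer multiset over all $x$ reconstructs $\wwr{1/2}(G)$ from $\wlk{3/2}(G)=\Mset{\wlk1(G_x)}_x$.

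The only step that I would need to be careful about is the handling of the individualized vertex: one must verify that the color of $x$ in $G_x$ is uniformly chosen across all vertices $x$, so that inside each element $\wlk1(G_x)$ of the outer multiset $\wlk{3/2}(G)$ the individualized color class can still be recognized in isolation. This is precisely the convention stated after the definition of $\wlk{3/2}$ in Section~\ref{ss:characterization}, so no obstacle arises, and the two containments together with $\weak\equiv\wwr{1/2}$ yield the corollary.
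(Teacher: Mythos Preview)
Your proposal is correct and follows essentially the same approach as the paper: reduce via Corollary~\ref{cor:char} to $\wm\preceq\wwr{1/2}\preceq\wlk{3/2}$, obtain the first relation by summing the inner multisets coordinate-wise to recover the walk-matrix rows, and obtain the second relation from Part~2 of Lemma~\ref{lem:w-vs-cr}. Your write-up simply unpacks the second step in more detail than the paper does (in particular, your remark that the individualized vertex is recognizable inside $\wlk1(G_x)$ is a useful sanity check, though one could also observe that $w_*(x,x)$ is already singled out inside $\Mset{w_*(x,y)}_y$ by $w_0(x,x)=1$).
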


\begin{proof}
  By Part 2 of Corollary \ref{cor:char}, it is enough to show that
  $$
\wm\preceq\wwr{1/2}\preceq\wlk{3/2}.
$$
The former relation follows directly from the definitions of $\wm$ and $\wwr{1/2}$.
Indeed, $\wwr{1/2}(G)$ comprises the multiset $\Mset{w_*(x,y)}_{y\in V(G)}$ for each
vertex $x$ of $G$; see \refeq{half}. This multiset allows us to calculate the sum~\refeq{Wxk} for each $k$.
The latter relation follows from the definitions of $\wwr{1/2}$ and $\wlk{3/2}$
by Part 2 of Lemma~\ref{lem:w-vs-cr}.
\end{proof}

The relationship between the graph invariants discussed above is summarized
in Figure \ref{fig:diagram} below, which also puts these invariants in a somewhat
broader context.

The next consequence of Theorem \ref{thm:char} is interesting in view of the long-standing open question
whether almost all graphs are determined up to isomorphism by their spectrum~\cite{HaemersS04,vanDamH03}.

\begin{corollary}\label{cor:aa}
 Almost all graphs are determined up to isomorphism by $\weak$.
\end{corollary}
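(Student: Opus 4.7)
By Part~2 of Corollary~\ref{cor:char}, $\weak \equiv \wwr{1/2}$, so it suffices to show that almost all graphs are determined up to isomorphism by the combinatorial invariant
$$
\wwr{1/2}(G) = \Mset{\of{w_*(x,x),\, \Mset{w_*(x,y)}_{y \in V(G)}}}_{x \in V(G)}.
$$

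The plan is to establish, by a probabilistic argument in the Erd\H{o}s--R\'enyi model $\mathcal{G}(n,1/2)$, that almost all graphs $G$ satisfy the following genericity property: for any two ordered pairs $(x_1, y_1)$ and $(x_2, y_2)$ of vertices, the equality $w_*(x_1, y_1) = w_*(x_2, y_2)$ forces $\{x_1, y_1\} = \{x_2, y_2\}$. In other words, the walk vector $w_*(\cdot,\cdot)$ is injective on unordered pairs of vertices; any coincidence beyond the obligatory symmetry $w_*(x, y) = w_*(y, x)$ is ruled out. I would prove this by a union bound over the $O(n^4)$ potentially offending quadruples, bounding each bad event via an analysis of the low-order walk counts $w_1, w_2, w_3, \ldots$ as polynomials in the independent Bernoulli adjacency variables.

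Suppose now that $G$ satisfies this genericity property and $\wwr{1/2}(G) = \wwr{1/2}(H)$ for some graph $H$. From the invariant we extract, for each $x$, the diagonal vector $d_x = w_*(x,x)$ (identifiable inside the inner multiset as the unique element with $w_0 = 1$) and the set $S_x = \Set{w_*(x,y)}{y \in V(G)}$, and analogously for $H$. Taking $(x_1, y_1) = (x_1, x_1)$ and $(x_2, y_2) = (x_2, x_2)$ in the genericity property yields that the $d_x$'s are pairwise distinct, providing a canonical labeling of $V(G)$ and of $V(H)$ and hence a bijection $\sigma\function{V(G)}{V(H)}$ identifying vertices by their diagonal labels. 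Applying the property once more shows that for any distinct $x_1, x_2$ the intersection $S_{x_1} \cap S_{x_2}$ is the singleton $\{w_*(x_1, x_2)\}$: the element $w_*(x_1, x_2) = w_*(x_2, x_1)$ belongs to both sets by symmetry, and any further coincidence is excluded by genericity. Thus the full walk tensor $w_*(\cdot,\cdot)$ of $G$ is recovered from $\wwr{1/2}(G)$, and similarly for $H$, giving $w_*^G(x_1, x_2) = w_*^H(\sigma(x_1), \sigma(x_2))$ for all $x_1, x_2$. Since $w_1 = A$, this means $\sigma$ is a graph isomorphism.

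The main obstacle I expect is rigorously establishing the genericity property itself: showing that with probability $1 - o(1)$ no accidental coincidences $w_*(x_1, y_1) = w_*(x_2, y_2)$ occur beyond the symmetric ones. The intuition is clear --- distinct unordered pairs of vertices have quite different local neighborhood structures, which manifests in different walk-count profiles --- but making this precise requires handling the non-trivial correlations between walk counts of different lengths. The natural route is to work with a short prefix of the walk vector, say $(w_1, w_2, w_3)$, and show by a second-moment (or anti-concentration) argument that already this prefix distinguishes all unordered pairs in almost every random graph, after which a final union bound closes the argument.
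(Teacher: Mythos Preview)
Your reduction from the genericity property to the conclusion is correct and clean: once $w_*$ is injective on unordered pairs, the diagonal vectors give a canonical labeling, and the intersection $S_{x_1}\cap S_{x_2}$ pins down $w_*(x_1,x_2)$ and hence $A(x_1,x_2)$; crucially, your argument for the comparison graph $H$ does not need $H$ itself to be generic, only that the transferred multisets $M_{\sigma(x)}^H$ inherit the singleton-intersection property from $G$. That part is fine.

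However, the paper's proof avoids the probabilistic lemma entirely and is much shorter. It uses the relation $\wm\preceq\weak$ (Corollary~\ref{cor:rel}) together with two known facts from the literature: the walk matrix of a random graph is almost surely non-singular \cite{ORourkeT16}, and a graph with non-singular walk matrix is determined by its walk matrix up to isomorphism \cite{LiuS22}. Hence almost all graphs are determined already by $\wm$, and a fortiori by $\weak$.

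So your route is genuinely different and, if carried out, would be self-contained rather than citing external results. The cost is real, though: the genericity lemma you isolate is not a routine union bound. You need each coincidence event to have probability $o(n^{-4})$, and a single walk count (say $w_2$, which gives degrees or common-neighbor counts) only yields $O(n^{-1/2})$ per pair; you would have to combine several $w_k$'s and control their joint anti-concentration despite non-trivial dependence. This is plausible but is a substantial piece of work in its own right, comparable in difficulty to the results the paper cites. If your goal is simply to prove the corollary, the paper's two-line argument via $\wm$ is the efficient path; if you specifically want an argument that extracts the adjacency matrix directly from $\wwr{1/2}$, your plan is reasonable but the probabilistic core remains to be done.
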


\begin{proof}
  It is known that if the walk matrix is non-singular, then it determines the adjacency
  matrix \cite{LiuS22}. Moreover, the walk matrix of a random graph is non-singular
  with high probability \cite{ORourkeT16}. As a consequence, almost all graphs are determined
  by the graph invariant $\wm$; see \cite[Theorem~7.2]{LiuS22}. The same is true also for $\weak$
  because $\weak$ is stronger than $\wm$ by Corollary~\ref{cor:rel}.
\end{proof}

Since $\spec\preceq\ea\preceq\weak$, a natural further question is
whether Corollary \ref{cor:aa} can be improved to show that almost all
graphs can be identified by the graph invariant $\ea$, that is, by
using the eigenvalues and the angles between each standard basis
vector and its projections onto eigenspaces but not between the
projections themselves.  If true, this would be yet closer to the
aforementioned open problem.

\subsection{Proof of Theorem \ref{thm:char}}

\textbf{The case of $r=0$.} We have to prove that
\begin{equation}
  \label{eq:r0}
\of{ \spec, \Mset{P_*(x,x)}_x } \equiv \Mset{w_*(x,x)}_x.
\end{equation}
The ``$\succeq$'' part immediately follows from Equality \refeq{w-decomp}.
In the other direction, the relation $\spec \preceq \Mset{w_*(x,x)}_x$ is a direct
consequence of Lemma \ref{lem:cosp}. To complete the proof of \refeq{r0},
we show that for each vertex $x$, the sequence $P_*(x,x)$ can be obtained
from the sequences $w_*(x,x)$ and $\mu_*$. To this end, put $y=x$ in Equality \refeq{w-decomp},
obtaining
\begin{equation}
  \label{eq:xx}
\mu_1^kP_1(x,x)+\cdots+\mu_m^kP_m(x,x)=w_k(x,x).
\end{equation}
This equality makes sense also for $k=0$. In this case, it reads
\begin{equation}
  \label{eq:xx0}
P_1(x,x)+\cdots+P_m(x,x)=1,
\end{equation}
which is true by Equality \refeq{I-decomp}
(if $\mu_i=0$, we need to use the convention $0^0=1$).
Consider Equalities \refeq{xx} for $k=0,1,\ldots,m-1$ as a system
of $m$ linear equations for $m$ unknowns $P_1(x,x),\ldots,P_m(x,x)$.
The coefficients of this system are powers of the $m$ pairwise distinct eigenvalues.
They form a Vandermonde matrix.
Therefore, the system is uniquely solvable, and the sequence $P_*(x,x)$ is determined.

\bigskip

\noindent
\textbf{The case of $r=1/2$.} Now we have to prove that
\begin{equation}
  \label{eq:r12}
  \of{ \spec, \Mset{ \of{ P_*(x,x), \Mset{P_*(x,y)}_y } }_x }
  \equiv
\Mset{ \of{ w_*(x,x), \Mset{w_*(x,y)}_y } }_x.
\end{equation}
The ``$\succeq$'' part is again an immediate consequence of Equality \refeq{w-decomp}.

Let us prove the part ``$\preceq$''. That is, we have to show that for a given graph,
the left hand side of \refeq{r12} can be obtained from the right hand side.
The spectrum is, as already observed in the case of $r=0$, determined by
the multiset $\Mset{w_*(x,x)}_x$, which is easy to obtain from the right hand side of \refeq{r12}.
Thus, in what follows we can assume that the sequence $\mu_1,\ldots,\mu_m$ of distinct eigenvalues is known.
For each vertex $x$, we have to compute the sequence $P_*(x,x)$ and the multiset $\Mset{P_*(x,y)}_y$.
The former task is solvable exactly as in the case of $r=0$, and we focus on the latter task.
In addition to $x$, we now fix also $y$ and consider Equalities \refeq{w-decomp}
for $k=0,1,\ldots,m-1$. Here, the equality for $k=0$ is actually a particular instance
of Equality \refeq{I-decomp}, that is, this is \refeq{xx0} if $y=x$ and
$$
P_1(x,y)+\cdots+P_m(x,y)=0
$$
if $y\ne x$. As in the case of $r=0$, for $m$ unknowns $P_1(x,y),\ldots,P_m(x,y)$
we obtain a system of $m$ linear equation whose coefficients form a non-singular Vandermonde matrix.
Hence, we can determine the sequence $P_*(x,y)$, completing the proof of~\refeq{r12}.

\bigskip

\noindent
\textbf{The case of $r\ge1$.} We proceed as above using induction.
To facilitate the notation, we set $\pi(x,y)=P_*(x,y)$ and $\omega(x,y)=w_*(x,y)$.
We write $a\mapsfrom b$ to say that $a$ is obtainable from $b$. As we have already seen,
\begin{equation}
  \label{eq:wfromsp}
  \omega(x,y)\mapsfrom\spec,\,\pi(x,y)
\end{equation}
by Equality \refeq{w-decomp} and
\begin{equation}
  \label{eq:pfromsw}
  \pi(x,y)\mapsfrom\spec,\,\omega(x,y)
\end{equation}
by the Vandermonde matrix argument.
The vertex colorings $\pi_r$ and $\omega_r$ are defined as in \refeq{chi}.
For every $r$,
\begin{equation}
  \label{eq:sfromw}
  \spec\mapsfrom\Mset{\omega_0(x)}_x\mapsfrom\Mset{\omega_r(x)}_x.
\end{equation}
The former relation is, as already observed above, a consequence of Lemma \ref{lem:cosp},
while the latter relation follows directly from the definition of $\omega_r$.
Therefore, in order to prove that
$$
\of{\spec,\Mset{\pi_r(x)}_x}\preceq\Mset{\omega_r(x)}_x,
$$
it suffices to prove for each $x$ that
\begin{equation}
  \label{eq:rpfromsw}
  \pi_r(x)\mapsfrom\spec,\,\omega_r(x).
\end{equation}
In order to prove that
$$
\Mset{\omega_r(x)}_x\preceq\of{\spec,\Mset{\pi_r(x)}_x},
$$
it suffices to prove for each $x$ that
\begin{equation}
  \label{eq:rwfromsp}
  \omega_r(x)\mapsfrom\spec,\,\pi_r(x).
\end{equation}
We prove \refeq{rpfromsw} and \refeq{rwfromsp} by induction on~$r$.

In the base case we have $r=0$.
The relation \refeq{rwfromsp} for $r=0$ follows from the relation \refeq{wfromsp} for $y=x$.
Similarly, the relation \refeq{rpfromsw} for $r=0$ follows from the relation \refeq{pfromsw} for $y=x$.

For the induction step, suppose that $r\ge1$. Consider first \refeq{rpfromsw}. Recall that
$$
\pi_{r}(x)=\of{\pi_{r-1}(x),\Mset{(\pi(x,y),\pi_{r-1}(y))}_{y}}.
$$
By the induction hypothesis,
$$
\pi_{r-1}(x)\mapsfrom\spec,\,\omega_{r-1}(x)\mapsfrom\spec,\,\omega_{r}(x).
$$
The latter relation follows from the fact that $\omega_{r-1}(x)$ is a part of $\omega_{r}(x)$.
Another part of $\omega_{r}(x)$ gives us the multiset $\Mset{(\omega(x,y),\omega_{r-1}(y))}_{y}$.
Therefore, it suffices to argue that, for each $y$,
$$
(\pi(x,y),\pi_{r-1}(y))\mapsfrom\spec,\,(\omega(x,y),\omega_{r-1}(y)).
$$
Indeed, $\pi(x,y)$ is determined by \refeq{pfromsw}, and
$\pi_{r-1}(y)$ is determined by the induction hypothesis.

The argument for \refeq{rwfromsp} is virtually the same, with the roles of $\pi$ and $\omega$
interchanged. In place of \refeq{pfromsw}, we have to refer to~\refeq{wfromsp}.
The proof is complete.

\section{The Hierarchy}\label{s:hierarchy}

Figure~\ref{fig:diagram} shows the invariants from Section \ref{ss:characterization} and
the relations between them as a part of a more general picture involving
also some other spectral invariants studied in the literature, which we
introduce in the next two subsections.

Moreover, we define a new invariant $\wwr{\bullet}$ which is, in a sense,
the limit of the sequence of invariants $\wwr r$ for $r=1,2,\ldots$.
The definition is rather general.
As we already mentioned, the formal framework of Section \ref{s:frame}
is analogous to the concept of color refinement. Given an isomorphism-invariant
coloring $\chi$, we therefore can along with graph invariants $\chir r$
define the stable version $\chir{\bullet}$. One possibility to do this
is to set $\chir{\bullet}(G)=\chir n(G)$, where $n$ is the number of vertices in $G$.
Note that $\chir r\preceq\chir{\bullet}$ for every~$r$.
For $\chi=\omega_*$, we obtain the potentially infinite chain
$$
\wwr0\preceq\wwr{1/2}\preceq\wwr1\preceq\wwr2\preceq\cdots\preceq\wwr{\bullet}.
$$
Note also that
$$
\wlk1\preceq\wwr{\bullet}\preceq\wlk2.
$$
The former relation is true because the inequality $\wlk1(G)\ne\wlk1(H)$ implies
the inequality $\wwr{\bullet}(G)\ne\wwr{\bullet}(H)$ (by taking into account that
the color $\omega_*(x,y)$ determines whether $x$ and $y$ are equal and whether they are adjacent).
To see the latter relation, we first recall the well-known fact
that $w_*(x,y)$ is determined by the color assigned to the vertex pair
$(x,y)$ by 2-WL. Furthermore, since $\wwr{\bullet}(G)$ is obtained
from $G$ endowed with the coloring $w_*$ by running the version of
1-WL for edge-colored graphs, the outcome can be simulated by~2-WL.

\begin{figure}
  \centering
  \begin{tikzcd}
\wlk2\rar\ar[d]&\wwr{\bullet}\rar\ar[ldd]&\cdots\rar&\wwr r\dar&(r\ge1)\hspace{6mm}\mbox{}\\
\wlk{3/2}\ar[d]\ar[rrrdd]&&&\vdots\dar&\\
\wlk1\ar[dd]&&&\wwr1\dar\rar&\lar\strong\\
&&&\wwr{1/2}\ar[llld]\ar[ld]\dar\rar&\lar\weak\hspace{1mm}\mbox{}\\
\wm\dar&&\genspec\ar[lld]\ar[rd]&\wwr0\dar\rar&\lar\ea\hspace{10mm}\mbox{}\\
\mea&&&\spec&
\end{tikzcd}
\caption{Relations between graph invariants. An arrow $\calI\to\calI'$ means $\calI'\preceq\calI$.
}
\label{fig:diagram}
\end{figure}

\subsection{Main eigenvalues and angles}

Let $\jj$ denote the all-ones vector (the dimension should be clear from the context).
Using our usual notation, suppose that $G$ has $m$ distinct eigenvalues $\mu_1,\ldots,\mu_m$,
and let $E_1,\ldots,E_m$ be
the corresponding eigenspaces of $G$. Consider the angle between $E_i$ and $\jj$ and denote
its cosine by $\beta_i$. If $\beta_i\ne0$, then the corresponding eigenvalue $\mu_i$ is called
a \emph{main eigenvalue}, and then the positive number $\beta_i$ is called a \emph{main angle}.
Let $\nu_1,\ldots,\nu_{m'}$ be the sequence of all main eigenvalues in the ascending order
and $\theta_1,\ldots,\theta_{m'}$ be the sequence of the main angles in the corresponding order.
We define a graph invariant $\mea$ by
$$
\mea(G)=\of{\nu_*,\theta_*}.
$$
A characterization of $\mea$ in terms of walk numbers is known.
We state it in the proposition below and, for the sake of completeness, provide
some proof details apparently missing in the available literature (the backward direction).
Let
$$
w_k(G)=\sum_{x\in V(G)}w_k(x)
$$
be the total number of $k$-walks in~$G$.
By $W_G$ we will denote the corresponding generating function, that is,
the formal series
$$
W_G(z)=\sum_{k=0}^\infty w_k(G)z^k.
$$

\begin{proposition}[folklore, e.g.~\cite{StackE}]\label{prop:stack}
  Let $G$ and $H$ be graphs with $n$ vertices. Then
  $\mea(G)=\mea(H)$ if and only if $w_k(G)=w_k(H)$ for $k=1,\ldots,n-1$.
\end{proposition}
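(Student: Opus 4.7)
The plan is to prove both implications through the walk-count formula $w_k(G) = n\sum_{j=1}^{m'} \nu_j^k\,\theta_j^2$, which I would first establish.

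For the forward direction, I would start from the orthogonal spectral decomposition $A=\sum_{i=1}^m \mu_i P_i$. Since $P_i P_j = \delta_{ij}P_i$ and each $P_i$ is symmetric, we have $A^k=\sum_i\mu_i^k P_i$ and hence
\[
w_k(G)\;=\;\jj^\top A^k\jj\;=\;\sum_{i=1}^m\mu_i^k\,\|P_i\jj\|^2\;=\;n\sum_{j=1}^{m'}\nu_j^k\,\theta_j^2,
\]
using that $\|P_i\jj\|^2=n\beta_i^2$ vanishes on the non-main eigenvalues. Since the right-hand side depends only on $n$ and $\mea(G)$, two $n$-vertex graphs with identical $\mea$ automatically share $w_k$ for every $k\ge 0$, in particular for $k=1,\dots,n-1$.

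For the backward direction, I would introduce the walk generating function
\[
W_G(z)\;=\;\sum_{k\ge 0}w_k(G)\,z^k\;=\;n\sum_{j=1}^{m'_G}\frac{\theta_j^2}{1-\nu_j z},
\]
a rational function in $z$ whose reduced form encodes $\mea(G)$ exactly: in lowest terms its denominator is $\widetilde{M}_G(z):=\prod_{\nu_j\neq 0}(1-\nu_j z)$ (with any zero main eigenvalue absorbed as a constant contribution to the numerator), and the partial-fraction decomposition recovers each nonzero main eigenvalue as a pole and each squared main angle as the residue up to the factor $n$. It therefore suffices to show that the hypothesis $w_k(G)=w_k(H)$ for $k=1,\dots,n-1$, combined with the automatic $w_0(G)=w_0(H)=n$, forces $W_G=W_H$ as rational functions.

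To carry this out, I would write $W_G-W_H$ over the common denominator $\widetilde{M}_G(z)\widetilde{M}_H(z)$: the Taylor-series hypothesis translates into divisibility of the numerator by $z^n$, while the numerator has total degree strictly less than $\deg\widetilde{M}_G+\deg\widetilde{M}_H$. I would then combine this degree inequality with two structural constraints specific to the graph setting---each main polynomial has integer coefficients and divides its characteristic polynomial of degree $n$, and the squared main angles sum to $1$ (equivalent to $w_0=n$)---to force the numerator to vanish and conclude $W_G=W_H$, hence $\mea(G)=\mea(H)$. The hard part will be this quantitative step, since an abstract Pad\'e-style recovery of a rational function whose denominator may have degree up to $n$ would naively require about $2n$ Taylor coefficients; pushing the bound down to the $n$ coefficients actually available relies on the integrality and divisibility rigidity just described, and is the main technical content of the proof.
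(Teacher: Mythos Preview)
Your approach coincides with the paper's on both directions: forward via the identity $w_k(G)=n\sum_j\nu_j^k\theta_j^2$, and backward via the generating function $W_G(z)=\sum_j n\theta_j^2/(1-\nu_j z)$ together with a partial-fraction uniqueness argument recovering the $\nu_j$ and $\theta_j$ from~$W_G$.

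Where you diverge from the paper is substantive. The paper's proof establishes only that the \emph{full} formal power series $W_G(z)$ determines $\mea(G)$; it never explains why agreement of the first $n$ Taylor coefficients $w_0,\dots,w_{n-1}$ forces $W_G=W_H$ as power series. You correctly flag this issue and propose to close it via integrality of the main polynomial and its divisibility into the characteristic polynomial of degree~$n$. However, you leave this as a promise (``the main technical content of the proof'') rather than an argument, and it is not evident that integrality alone suffices: writing $W_G-W_H$ over the common denominator $Q_GQ_H$, the numerator $P_GQ_H-P_HQ_G$ can have degree up to $\deg Q_G+\deg Q_H$, which may approach $2n$ when both graphs have close to $n$ nonzero main eigenvalues, so divisibility by $z^n$ does not immediately force it to vanish. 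You would need something further---for instance a bound of the form $\deg\operatorname{lcm}(M_G,M_H)\le n$, or an argument exploiting positivity of the weights $n\theta_j^2$---to finish. As it stands, neither your proposal nor the paper's own proof actually resolves this point; you are simply more explicit about where the difficulty lies.
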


\begin{proof}
  In the forward direction the proposition follows from the equality
  \begin{equation}
    \label{eq:mu-beta}
  w_k(G)=n\sum_{i=1}^{m'} \nu_i^k\theta_i^2
  \end{equation}
 for all $k\ge0$; see \cite[Theorem~1.3.5]{CvetkovicRS10} for details.  In particular, this
 equality holds true for $k=0$ because $w_0(G)=n$ and because
 $\sum_{i=1}^{m'}\theta_i^2=1$. If $k=0$ and $\nu_i=0$ for
 some $i$, then the sum in \refeq{mu-beta} involves the coefficient
 $0^0=1$. Thus, the equality $\mea(G)=\mea(H)$ implies that $w_k(G)=w_k(H)$ for all $k\ge0$.

 For the other direction,
 consider the generating function $W_G(z)$.
 This is a formal power series, i.e., an element of the ring $\bR[[z]]$.
 In this ring we have
   \begin{equation}
    \label{eq:WGz}
    W_G(z)=\sum_{k=0}^\infty \of{n\sum_{i=1}^{m'} \nu_i^k\theta_i^2}z^k =
    \sum_{i=1}^{m'} n\theta_i^2\sum_{k=0}^\infty(\nu_iz)^k=\sum_i\frac{n\theta_i^2}{1-\nu_iz};
  \end{equation}
  cf.~\cite[Eq.~(9)]{Cvetkovic78}.
  Note that all divisions are well defined because $1-\nu_iz$ are
  invertible elements of $\bR[[z]]$.
  If 0 is a main eigenvalue, then we suppose that $\nu_0=0$, where 
  the main eigenvalues in the rightmost expression of \refeq{WGz} are now indexed starting with $0$.
  Otherwise it is still convenient to set
  $\nu_0=0$ and $\theta_0=0$.  In either case, Equality \refeq{WGz} can
  be rewritten as
  $$
    W_G(z)=n\theta_0^2+\sum_{i\ne0}\frac{n\theta_i^2/\nu_i}{1/\nu_i-z}.
  $$
  From here we see that the numbers $\nu_i$ and $\theta_i$
  are uniquely determined by $W_G(z)$. This follows from the simple observation
  that if $c_0+\sum_{i\ne0} c_i/(a_i-z)=0$ in $\bR[[z]]$ for finitely many pairwise distinct reals $a_i$,
  then $c_i=0$ for all~$i$.
\end{proof}

As a direct consequence of Proposition \ref{prop:stack}, we get the relation $\mea\preceq\wm$.

\subsection{The generalized spectrum}

Another important spectral invariant of $G$ is the spectrum
of the complement graph $\compl G$. The equalities $\spec(G)=\spec(H)$
and, simultaneously, $\spec(\compl G)=\spec(\compl H)$ are equivalent to the condition that
the graphs $G$ and $H$ have the same \emph{generalized spectrum}.
For the definition of this concept and its various characterizations
we refer the reader to \cite{JohnsonN80} and \cite[Theorem~3]{vanDamHK07}.
We define the graph invariant $\genspec$ by
$$
\genspec(G)=\of{\spec(G),\spec(\compl G)}.
$$
We note that
\begin{equation}
  \label{eq:mea-genspec}
\mea\preceq\genspec.  
\end{equation}
This follows from Proposition \ref{prop:stack} and the following result in \cite{Cvetkovic78}.
Let $P_G$ denote the characteristic polynomial of a graph~$G$.

\begin{proposition}[Cvetkovi{\'c} \cite{Cvetkovic78}]\label{prop:cvetk}
$W_G(z)=\frac1z\of{(-1)^n\frac{P_{\compl G}(-1/z-1)}{P_G(1/z)}-1}$.
\end{proposition}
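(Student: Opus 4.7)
My plan is to interpret $W_G(z)$ as a formal matrix-resolvent expression and reduce the problem to a ratio of determinants via the matrix determinant lemma. The total number of $k$-walks admits the compact expression $w_k(G)=\jj^{\top}A^{k}\jj$, where $A$ is the adjacency matrix of $G$. Summing a geometric series and using that $I-zA$ is invertible in $\bR[[z]]$ (its constant term is $I$) yields
$$W_G(z)=\jj^{\top}(I-zA)^{-1}\jj.$$

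Next, I would apply the matrix determinant lemma $\det(M+uv^{\top})=\det(M)\,(1+v^{\top}M^{-1}u)$ with $M=I-zA$, $u=z\jj$, $v=\jj$ to obtain
$$1+z\,W_G(z)=\frac{\det(I-zA+z\jj\jj^{\top})}{\det(I-zA)}.$$
The remaining task is to rewrite both determinants in terms of characteristic polynomials. For the denominator, the standard computation gives $\det(I-zA)=z^{n}P_G(1/z)$. For the numerator, the key observation is the identity $\jj\jj^{\top}=I+A+\compl A$ linking the adjacency matrices of $G$ and its complement. Substituting this collapses $I-zA+z\jj\jj^{\top}$ to $(1+z)I+z\compl A$, and a careful evaluation (pulling out $1/z$ from each row inside the determinant $\det((-1-1/z)I-\compl A)$) yields $\det(I-zA+z\jj\jj^{\top})=(-1)^{n}z^{n}P_{\compl G}(-1-1/z)$. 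Dividing the two expressions and solving for $W_G(z)$ produces the stated formula.

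The only real subtlety is the last step: the identity lives most naturally not in $\bR[[z]]$ but in the Laurent series ring $\bR((z))$, since $P_G(1/z)$ involves negative powers of $z$. Equivalently, one clears denominators by multiplying through by $\det(I-zA)$ and reads the result as an identity of polynomials in $z$, which sidesteps any convergence or formal concern. Beyond this, the argument is mostly careful bookkeeping of signs and powers of $z$; the main conceptual step is recognizing that the matrix determinant lemma turns the bilinear form $\jj^{\top}(I-zA)^{-1}\jj$ into a ratio that, thanks to the complementation identity $\jj\jj^{\top}=I+A+\compl A$, transparently involves $P_{\compl G}$.
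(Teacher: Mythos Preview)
The paper does not supply its own proof of this proposition; it is quoted as a known result of Cvetkovi\'c and used as a black box. Your argument is a correct, self-contained proof: the resolvent identity $W_G(z)=\jj^\top(I-zA)^{-1}\jj$, the matrix determinant lemma, and the complementation identity $\jj\jj^\top=I+A+\compl A$ combine exactly as you describe, and the sign/power bookkeeping checks out (in particular $\det((1+z)I+z\compl A)=(-z)^n P_{\compl G}(-1/z-1)$ and $\det(I-zA)=z^n P_G(1/z)$). Your remark about working in $\bR((z))$, or equivalently clearing denominators, is the right way to address the only formal subtlety.
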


\noindent
Alternatively, \refeq{mea-genspec} can be obtained by using \cite[Theorem~3]{vanDamHK07}.

To complete the diagram in Figure \ref{fig:diagram}, it remains to prove the following relation.

\begin{theorem}\label{thm:genspec-weak}
  $\genspec\preceq\weak$.
\end{theorem}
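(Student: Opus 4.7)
My plan is to reduce to Proposition \ref{prop:cvetk} by extracting two ingredients from $\weak$: the characteristic polynomial $P_G$, and enough coefficients of the walk generating function $W_G(z)$ to invert Cvetkovi\'c's identity. Since $\genspec(G) = (\spec G, \spec{\compl G})$ and $\spec{\compl G}$ is determined by $P_{\compl G}$, it suffices to recover $P_G$ and $P_{\compl G}$ from $\weak$.

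For $P_G$: by Corollary \ref{cor:char}(2) we have $\weak \equiv \wwr{1/2}$, and $\wwr{1/2}(G)$ exposes the multiset $\Mset{w_*(x,x)}_x$, so $\weak$ determines every closed-walk total $c_k(G) = \sum_x w_k(x,x)$. Lemma \ref{lem:cosp} then gives $\spec(G)$, hence $P_G$. For the walk counts $w_k(G)$: by Corollary \ref{cor:rel}, $\wm \preceq \weak$, and since $w_k(G) = \sum_{x} W(x,k)$, the values $w_0(G), w_1(G), \ldots, w_{n-1}(G)$ are all determined by $\weak$.

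The main (and only nontrivial) step is to show that these data suffice to recover $P_{\compl G}$ via Proposition \ref{prop:cvetk}. Writing $\lambda_1,\ldots,\lambda_n$ for the eigenvalues of $G$ and $\mu_1,\ldots,\mu_n$ for those of $\compl G$, the identity can be rearranged, after multiplication by $z^n$, into the polynomial identity
$$
\bigl(1+zW_G(z)\bigr)\prod_{i=1}^n(1-\lambda_iz) \;=\; \prod_{j=1}^n\bigl(1+(1+\mu_j)z\bigr).
$$
The right-hand side is a polynomial of degree $n$ in $z$, so the left-hand side, a priori a formal power series, is in fact a polynomial of degree $n$. Its coefficient of $z^j$ for $0\le j\le n$ is a linear combination of the first $n+1$ coefficients of $1+zW_G(z)$, namely $1, w_0(G),\ldots,w_{n-1}(G)$, with weights coming from the elementary symmetric functions of $\lambda_1,\ldots,\lambda_n$. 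Thus $P_G$ and the truncated power series $1+\sum_{k=0}^{n-1}w_k(G)z^{k+1}$ together determine the polynomial on the right, whose roots (after the substitution $z\mapsto -1/(1+\mu_j)$) give the multiset of eigenvalues of $\compl G$, i.e.\ $P_{\compl G}$.

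Putting the three steps together yields $\genspec(G) = (P_G, P_{\compl G})$ as a function of $\weak(G)$, which is the claim. The only delicate point is the degree-counting argument that $n$ walk totals $w_0(G),\ldots,w_{n-1}(G)$ are already enough to pin down $P_{\compl G}$ through Cvetkovi\'c's identity; no further analysis is required since everything else is packaged in results already established.
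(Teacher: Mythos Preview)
Your proof is correct and matches the paper's combinatorial proof: recover $P_G$ and the walk totals $w_k(G)$ from $\weak$ via Corollary~\ref{cor:char}, then invert Proposition~\ref{prop:cvetk} to obtain $P_{\compl G}$. Two minor notes: $\spec(G)$ is literally a component of $\weak(G)$ by definition~\refeq{F1}, so the detour through Lemma~\ref{lem:cosp} is unnecessary; on the other hand, your rewriting of Cvetkovi\'c's identity as the polynomial equation $(1+zW_G(z))\prod_i(1-\lambda_iz)=\prod_j(1+(1+\mu_j)z)$ makes explicit that the first $n$ walk totals $w_0(G),\ldots,w_{n-1}(G)$ already suffice, a point the paper's proof leaves implicit.
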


To give the reader a flavor of how our characterization of spectral invariants
in terms of walk counts (Theorem \ref{thm:char}) works, we provide two proofs
of Theorem \ref{thm:genspec-weak}, one combinatorial that uses the characterization
of $\weak$ in Corollary \ref{cor:char} and another one, more algebraic, that uses the
original definition of~$\weak$.

\begin{proof}[Combinatorial proof of Theorem \ref{thm:genspec-weak}]
  The spectrum of $G$, hence also the characteristic polynomial $P_G$, is
  determined by $\weak(G)$ just by definition. We have to show that $\spec(\compl G)$
  or, equivalently, $P_{\compl G}$ is also determined. By Proposition \ref{prop:cvetk},
  $P_{\compl G}$ is obtainable from $P_G$ and $W_G$. Using Part 2 of Corollary \ref{cor:char},
  it remains to notice that $W_G$ is determined by $\wwr{1/2}(G)$. Indeed,
  $$
w_k(G)=\sum_x w_k(x)=\sum_x\sum_y w_k(x,y),
$$
where the right hand side is obtainable from the multiset $\Mset{\Mset{w_k(x,y)}_y}_x$,
which is a part of~$\wwr{1/2}(G)$.
\end{proof}

The algebraic proof is based on the fact that $\weak$ determines not only the main angles
but also the entire sequence of angles $\beta_i$ between the all-ones vector $\jj$ and the eigenspaces
ordered by the ascending order of the corresponding eigenvalues.

\begin{lemma}\label{lem:beta-weak}
  $\weak$ determines $\beta_*$.
\end{lemma}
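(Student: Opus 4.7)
The plan is to express each $\beta_i$ as an arithmetic function of quantities that are manifestly readable from $\weak(G)$. The key geometric fact I would invoke is that the cosine of the angle between $\jj$ and the eigenspace $E_i$ equals $\|P_i\jj\|/\|\jj\|$, since $P_i\jj$ is the orthogonal projection of $\jj$ onto $E_i$. Therefore
\[
  \beta_i^2 \;=\; \frac{\|P_i\jj\|^2}{\|\jj\|^2}\;=\;\frac{\langle\jj,P_i\jj\rangle}{n}\;=\;\frac{1}{n}\sum_{x,y}P_i(x,y),
\]
using $\|\jj\|^2=n$ and the standard identity $\|P_i\jj\|^2=\langle\jj,P_i^\top P_i\jj\rangle=\langle\jj,P_i\jj\rangle$ (recall $P_i$ is symmetric and idempotent). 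Since $\beta_i\ge 0$ by convention, this formula recovers $\beta_i$ from the double sum.

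It remains to argue that, for each $i$ in the correct order, the sum $\sum_{x,y}P_i(x,y)$ is determined by $\weak(G)$. First, $\spec(G)$ is a component of $\weak(G)$, so the number of vertices $n$ and the ordered list of distinct eigenvalues $\mu_1<\cdots<\mu_m$ are known; in particular $m$ is known and the $i$-th coordinate of any tuple $P_*(x,y)$ is unambiguously identified with $P_i(x,y)$. Second, $\ppr{1/2}(G)$ gives, for each $x\in V(G)$, the multiset $\Mset{P_*(x,y)}_y$. Taking $i$-th coordinates and summing over this multiset yields $\sum_y P_i(x,y)$, and then summing over $x$ (ranging over the multiset structure of $\ppr{1/2}(G)$ itself) yields $\sum_{x,y}P_i(x,y)$, as required.

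Combining the two steps, $\beta_i$ is computable from $\weak(G)$ for every $i$, which is exactly the claim $\weak \succeq \beta_*$. I do not anticipate a real obstacle here: once one notices that the ``main angle'' cosine is the length of a projection and that projection lengths appear as diagonal sums $\langle\jj,P_i\jj\rangle$, the rest is just bookkeeping against the definition \refeq{F1} of $\weak$ together with formula \refeq{half} for $\chi_{1/2}$. The only point worth being slightly careful about is the ordering of the coordinates of $P_*$, which is pinned down by the convention \refeq{mu} on the ordering of the eigenvalues, and this ordering is available inside $\weak(G)$ precisely because $\spec(G)$ is.
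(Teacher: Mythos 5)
Your proposal is correct and follows essentially the same route as the paper's proof: both reduce $\beta_i$ to $\|P_i\jj\|/\sqrt{n}$ via the symmetry and idempotence of $P_i$, express $\|P_i\jj\|^2$ as $\sum_{x,y}P_i(x,y)$, and read this double sum off from the multiset $\Mset{\Mset{P_*(x,y)}_y}_x$ contained in $\ppr{1/2}(G)$. Your extra remark that the coordinate ordering of $P_*$ is pinned down by $\spec(G)$ is a harmless and correct bit of added care.
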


\begin{proof}
Note that
  $$
\langle\jj,P_i\,\jj\rangle=\|\jj\| \|P_i\,\jj\| \beta_i.
$$
On the other hand,
$$
\langle\jj,P_i\,\jj\rangle=\langle\jj,P_i^2\,\jj\rangle=\langle P_i\,\jj,P_i\,\jj\rangle=\|P_i\,\jj\|^2,
$$
from which we conclude that
$\beta_i=\|P_i\,\jj\|/{\sqrt n}$.
Furthermore,
$$
\|P_i\jj\|^2=\langle\jj,P_i\jj\rangle=\Big\langle\sum_x\mathrm{e}_x,P_i\sum_y\mathrm{e}_y\Big\rangle
=\sum_{x,y}\langle\mathrm{e}_x,P_i\mathrm{e}_y\rangle=\sum_{x,y}P_i(x,y).
$$
It follows that $\|P_i\jj\|$ and, hence, $\beta_i$ can be determined from $\weak$
because $\weak$ gives us the multiset $\Mset{\Mset{P_i(x,y)}_y}_x$ for each $i=1,\ldots,n$.
\end{proof}

\begin{proof}[Another proof of Theorem \ref{thm:genspec-weak}]
  As it readily follows from Lemma \ref{lem:beta-weak}, $\mea(G)$ is determined by $\weak(G)$.
  By Proposition \ref{prop:stack}, this implies that $\weak(G)$ determines $W_G$.
  Therefore, $\spec(\compl G)$ is obtainable from $\weak(G)$ by Proposition~\ref{prop:cvetk}.
\end{proof}

\section{Separations}\label{s:sep}

Fürer \cite{Fuerer10} poses the open problem
of determining which of the relations in the chain
\begin{equation}
  \label{eq:fuerer2}
\weak\preceq\strong\preceq\wlk2
\end{equation}
are strict. As mentioned before, Rattan and Seppelt \cite{RattanS23} show that this chain 
does not entirely collapse. They separate $\weak$ and $\wlk2$ by proving that $\weak\preceq\wlk{3/2}$
and separating $\wlk{3/2}$ and $\wlk2$. Hence, at least one of the two relations in \refeq{fuerer2}
is strict. Fürer \cite{Fuerer10} conjectures that the first relation
in \refeq{fuerer2} is strict and does not exclude the possibility that the last two invariants in \refeq{fuerer2}
are equivalent. We settle this by showing that, in fact, both relations in \refeq{fuerer2}
are strict. We actually prove much more: up to one question remaining open,
the diagram shown in Figure \ref{fig:diagram} is exact in the sense that all present arrows
are non-reversible and that any two invariants not connected by arrows are provably incomparable.
The only remaining question concerns the chain $\wwr{\bullet}\to\cdots\to\wwr r\to\cdots\to\wwr1$;
see Problem \ref{probl:main} stated below and an approach to its solution in Theorem~\ref{thm:lower-levels}.

We now present a minimal set of separations from which 
all other separations follow. For compatibility with Figure \ref{fig:diagram},
we write $\calI\nrightarrow\calI'$ to negate $\calI'\preceq\calI$.

\begin{description}
\item[$\wlk1\nrightarrow\spec$.]
  To show this, we have to present two $\wlk1$-equivalent but not cospectral graphs.
  The simplest pair of $\wlk1$-equivalent graphs,
  $C_6=\raisebox{-1.6mm}{\begin{tikzpicture}[every node/.style=vertex,rotate=30,scale=.25]
\path (0:1cm) node (a) {}
      (60:1cm) node (b)  {} edge (a) 
      (120:1cm) node (c)  {} edge (b)
      (180:1cm) node (d)  {} edge (c)
      (240:1cm) node (e)  {} edge (d)
      (300:1cm) node (f)  {} edge (e) edge (a);
    \end{tikzpicture}}$
  and
  $2C_3=\raisebox{-1.6mm}{\begin{tikzpicture}[every node/.style=vertex,rotate=30,scale=.25]
\path (0:1cm) node (a) {}
      (60:1cm) node (b)  {} edge (a) 
      (120:1cm) node (c)  {} edge (b) edge (a)
      (180:1cm) node (d)  {} 
      (240:1cm) node (e)  {} edge (d)
      (300:1cm) node (f)  {} edge (e) edge (d);
    \end{tikzpicture}}$,
  works.
  Indeed, the eigenvalues of $C_n$ are $2\cos\frac{2\pi k}n$ for $k=0,1,\ldots,n-1$,
  and the spectrum of the disjoint union of graphs is the union of their spectra;
  see, e.g., \cite[Example 1.1.4 and Theorem 2.1.1]{CvetkovicRS10}.
\item[$\ea\nrightarrow\mea$.]
  It is known \cite{CvetkovicL97} that among trees with up to 20 vertices there is a single pair of non-isomorphic trees
  $T$ and $T'$,
  with 19 vertices, having the same eigenvalues and angles.
A direct computation shows that the total number of walks of length 9 is 35506 for $T$ and 35498 for $T'$.
By Proposition \ref{prop:stack}, the trees $T$ and $T'$ are not $\mea$-equivalent.
\item[$\genspec\nrightarrow\wm$.]
  The smallest, with respect to the
  number of vertices and the number of edges, pair of generalized
  cospectral graphs consists of 7-vertex graphs
  $G=\raisebox{-1.35mm}{\begin{tikzpicture}[every node/.style=vertex,scale=.25]
\path (0:1cm) node (a) {}
      (60:1cm) node (b)  {} edge (a) 
      (120:1cm) node (c)  {} edge (b)
      (180:1cm) node (d)  {} edge (c)
      (240:1cm) node (e)  {} edge (d)
      (300:1cm) node (f)  {} edge (e) edge (a)
      (0,0) node (central) {};
    \end{tikzpicture}}$
  and
  $H=\raisebox{-1.2mm}{\begin{tikzpicture}[every node/.style=vertex,rotate=30,scale=.25]
      \path (0,0) node (central) {}
      (0:.5cm) node (a) {} edge (central) 
      (0:1cm) node (aa) {} edge (a) 
      (120:.5cm) node (b) {} edge (central) 
      (120:1cm) node (bb) {} edge (b) 
      (240:.5cm) node (c) {} edge (central) 
      (240:1cm) node (cc) {} edge (c);
    \end{tikzpicture}}$; see \cite[Fig.~4]{HaemersS04}.
  Since $G$ has
  an isolated vertex and $H$ does not, these graphs are not
  $\wm$-equivalent.
\item[$\genspec\nrightarrow\ea$.]
  The same pair of graphs $G$ and $H$
  works. They are not $\ea$-equivalent because connectedness of a
  graph is determined by its spectrum and angles
  \cite[Theorem~3.3.3]{CvetkovicRS10}.  Another reason for this is that
  every graph with at most 9 vertices is determined by its spectrum
  and angles up to isomorphism~\cite{CvetkovicL97}.
\item[$\strong\nrightarrow\wlk1$.] An even stronger fact, namely
  $\wwr2\nrightarrow\wlk1$ is proved as Theorem \ref{thm:sep}
  below. The separation implies that $\strong\nrightarrow\wlk2$,
  answering one part of Fürer's question. Another consequence is also
  the separation $\wm\nrightarrow\wlk1$, which follows as well
  from~\cite[Theorem 3]{ESA23}.
\item[$\wlk1\nrightarrow\wlk{3/2}$.]
Consider $C_6$ and $2C_3$.
\item[$\wlk{3/2}\nrightarrow\strong$.]
  This is Theorem \ref{thm:sep0} below.
  As a consequence, $\weak\nrightarrow\strong$, answering the other part of Fürer's question.
  Another consequence is the separation $\wlk{3/2}\nrightarrow\wlk2$ shown in~\cite{RattanS23}.
\item[$\wwr{\bullet}\nrightarrow\wlk{3/2}$.]
   This is Theorem \ref{thm:sep2} below.
   It considerably strengthens the negative answer to Fürer's question by showing that
   the whole hierarchy of the invariants $\wwr r$ for all $r$ is strictly weaker than~$\wlk2$.
\end{description}

\begin{remark}
  Our results about the hierarchy in Figure \ref{fig:diagram} can also be useful for classification
  of other graph invariants definable through spectral concepts and combinatorial refinement.
  As an example, consider the graph invariant formed by the pair $(\wlk1,\spec)$.
  We thank the anonymous referee who drew our attention to a natural characterization
  of this invariant in terms of linear-algebraic expressions involving the adjacency matrix,
  the all-ones vector, matrix multiplication, trace, conjugation, and pointwise vector multiplication
  \cite[Prop.~8.3]{Geerts21}.
  Note that $\genspec\preceq(\wlk1,\spec)$ as a consequence of Proposition \ref{prop:cvetk}
  and the relation $\wm\preceq\wlk1$. The separation $(\wlk1,\spec)\nrightarrow\wlk{3/2}$
  immediately follows from the aforementioned separation $\wwr{\bullet}\nrightarrow\wlk{3/2}$.
  Using the example of $(\wlk1,\spec)$-equivalent graphs
  $G=\raisebox{-1.6mm}{\begin{tikzpicture}[every node/.style=vertex,scale=.25]
  \path (0,0) node (c) {}
      (0,1) node (t) {} edge (c)
      (0,-1) node (b) {} edge (c)    
      (-1,1) node (lt) {} edge (t)
      (1,1) node (rt) {} edge (t)
       (-1,-1) node (lb) {} edge (b) edge (lt)
       (1,-1) node (rb) {} edge (b)  edge (rt)
       (-1.5,0) node (l) {} edge (lt)  edge (lb)
       (1.5,0) node (r) {} edge (rt)  edge (rb);
     \end{tikzpicture}}$
   and
  $H=\raisebox{-1.6mm}{\begin{tikzpicture}[every node/.style=vertex,scale=.25]
  \path (0,0) node (c) {}
      (0,1) node (t) {}
      (0,-1) node (b) {}    
      (-1.5,1) node (lt) {} edge (t)
      (1.5,1) node (rt) {} edge (t)
       (-1.5,-1) node (lb) {} edge (b) edge (lt)
       (1.5,-1) node (rb) {} edge (b)  edge (rt)
       (-1,0) node (l) {} edge (lt)  edge (lb) edge (c)
       (1,0) node (r) {} edge (rt)  edge (rb) edge (c);
     \end{tikzpicture}}$   
  constructed by Roberson \cite[Ex.~8.1]{Geerts21}, one can see that $(\wlk1,\spec)\nrightarrow\ea$.
  Indeed, while $G$ contains two vertices $x$ with $w_2(x,x)=w_3(x,x)=2$, in $H$ we have
  $w_3(x,x)=0$ for every $x$ such that $w_2(x,x)=2$. Therefore, $\wwr0(G)\ne\wwr0(H)$
  and it follows by Part 1 of Corollary \ref{cor:char} that $\ea(G)\ne\ea(H)$.
\end{remark}

We state and prove the three results announced above,
namely Theorems \ref{thm:sep0}, Theorem \ref{thm:sep}, and \ref{thm:sep2},
in the rest of this section.

\subsection{$\wlk{3/2}$ is not stronger than $\strong$}

\begin{theorem}\label{thm:sep0}
  $\strong\npreceq\wlk{3/2}$.
\end{theorem}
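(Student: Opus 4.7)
The strategy is to reuse the very same pair of graphs $G_0, H_0$ that Rattan and Seppelt used in \cite{RattanS23} to separate $\wlk{3/2}$ from $\wlk2$. Their work already gives $\wlk{3/2}(G_0)=\wlk{3/2}(H_0)$, so the theorem reduces to proving $\strong(G_0)\neq\strong(H_0)$. Invoking Part~3 of Corollary~\ref{cor:char}, this is equivalent to showing $\wwr1(G_0)\neq\wwr1(H_0)$, i.e.\ that the multiset
$$
\Mset{\of{w_*(x,x),\, \Mset{\of{w_*(x,y),w_*(y,y)}}_y}}_x
$$
differs between $G_0$ and $H_0$. This is a purely finite combinatorial check, and it is precisely where Theorem~\ref{thm:char} buys something that the algebraic framework of \cite{RattanS23} did not expose directly.

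First I would recall the $\wlk{3/2}$-equivalence of $G_0$ and $H_0$ from \cite{RattanS23}. By Corollary~\ref{cor:rel} together with Part~2 of Corollary~\ref{cor:char}, this immediately implies $\wwr{1/2}(G_0)=\wwr{1/2}(H_0)$, so the multiset $\Mset{\of{w_*(x,x),\,\Mset{w_*(x,y)}_y}}_x$ already coincides for the two graphs. Hence the only possible source of $\wwr1$-inequivalence lies in the finer joint distribution in which, for each $y$, the off-diagonal walk profile $w_*(x,y)$ is tagged with the diagonal profile $w_*(y,y)$ of its endpoint.

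The concrete verification step is then to compute the matrices $A^k$ for $k=0,1,\dots,n-1$ on both $G_0$ and $H_0$, read off the walk vectors $w_*(x,y)$, assemble the tagged multisets above, and exhibit a vertex $x^{\star}\in V(G_0)$ whose associated tagged multiset is not realized by any vertex of $H_0$. Since $G_0$ and $H_0$ are small and explicitly described, this is an inspection-level (or short computer-assisted) calculation.

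The principal obstacle is verificational rather than conceptual: one must confirm that the separation genuinely surfaces already at refinement level $r=1$, rather than only at some $r\ge 2$. Were it to appear only at a later level, the same argument would still yield the weaker but nontrivial statement $\wwr r\npreceq\wlk{3/2}$ for some $r\ge 2$, but not the claimed $r=1$ separation. The payoff of Theorem~\ref{thm:char} is that the entire analysis takes place combinatorially in terms of walk counts, making the direct check on the Rattan-Seppelt pair transparent and avoiding any explicit manipulation of eigenspace projections.
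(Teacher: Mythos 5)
Your overall strategy coincides with the paper's: take the Rattan--Seppelt pair $G,H$, invoke Part~3 of Corollary~\ref{cor:char} to reduce $\strong(G)\ne\strong(H)$ to $\wwr1(G)\ne\wwr1(H)$, and observe that since $\wwr{1/2}$ cannot separate them (as $\weak\preceq\wlk{3/2}$), the distinction must come from tagging each walk vector $w_*(x,y)$ with the closed-walk profile $w_*(y,y)$ of its endpoint. However, you stop exactly where the proof has to be carried out: you do not exhibit the distinguishing data, and you explicitly concede that the separation might only surface at some level $r\ge2$, in which case the stated theorem would not follow. That is a genuine gap. The reduction to a ``finite combinatorial check'' is the routine part; the content of the theorem is the verification you leave open, and nothing in your argument rules out the (a priori possible) outcome $\wwr1(G)=\wwr1(H)$.

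The paper closes this gap with a short explicit argument needing no computation. Recall that $G$ and $H$ are built from two copies of $C_6*K_1$ and two copies of $(2C_3)*K_1$, joined by a $4$-cycle on the four degree-$6$ apexes $u_1,u_2,v_1,v_2$ ($u_1u_2v_1v_2$ in $G$, $u_1v_1u_2v_2$ in $H$). One classifies vertices by closed walks: the four apexes are exactly those with $w_2(x,x)=\deg x=8$, and among the remaining vertices the $T$-vertices (triangular parts) have $w_3(x,x)=6$ while the $H$-vertices (hexagonal parts) have $w_3(x,x)=4$. Hence the $\wwr1$-color of $x$ records, for each $y$, whether $y$ is a $T$-vertex together with $w_*(x,y)$. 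In $G$ the edge $v_1v_2$ lets every $T$-vertex reach all twelve $T$-vertices by a $3$-walk, whereas in $H$ a $T$-vertex has $3$-walks to only the six $T$-vertices of its own part. This is the concrete difference in the tagged multisets at level $r=1$ that your proposal presupposes but does not establish; without it (or an actual computation whose outcome you report), you have proved only the reduction, not the theorem.
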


\begin{proof}
  The separation $\wlk2\npreceq\wlk{3/2}$ in \cite{RattanS23} is shown by constructing
  a pair of $\wlk{3/2}$-equivalent graphs $G$ and $H$ as follows. Consider two copies
  of $C_6*K_1$, where $*$ is the join of graphs. Denote the vertices of degree 6 by $u_1$
  and $u_2$. Consider also two copies of $(2C_3)*K_1$, denoting their vertices of degree 6 by $v_1$
  and $v_2$. The graph $G$ is obtained by adding four edges forming the cycle $u_1u_2v_1v_2$,
  and the graph $H$ is obtained by adding the cycle $u_1v_1u_2v_2$.
  In \cite{RattanS23} it is observed that $G$ and $H$ are distinguishable by 2-WL.
  We now strengthen this observation to show that $\strong(G)\ne\strong(H)$.

  The vertices $u_1,u_2,v_1,v_2$
  will be referred to as $Q$-vertices. The other vertices are split into two classes, $H$-vertices
  and $T$-vertices, depending on whether they belong to a hexagonal or a triangular part.
  A vertex $x$ is a $Q$-vertex exactly when $w_2(x,x)=\deg x=8$. The $H$- and the $T$-vertices
  are distinguishable by the condition $w_3(x,x)=6$ for a $T$-vertex and $w_3(x,x)=4$ for an $H$-vertex.
  Consider an arbitrary $T$-vertex $x$ in $G$. Note that from $x$ there is at least one 3-walk
  to each of the twelve $T$-vertices $y$. If we consider a $T$-vertex $x$ in $H$, then from $x$ there
  are 3-walks only to six $T$-vertices $y$. This implies that $\wwr1(G)\ne\wwr1(H)$.
  We conclude by Part 3 of Corollary \ref{cor:char} that $G$ and $H$ are not $\strong$-equivalent.
\end{proof}

\subsection{$\wwr2$ is not stronger than $\wlk1$}\label{ss:sep}

\begin{theorem}\label{thm:sep}
  $\wlk1\npreceq\wwr2$.
\end{theorem}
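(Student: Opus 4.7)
The plan is to exhibit a concrete pair of non-isomorphic graphs $G$ and $H$ such that $\wwr2(G)=\wwr2(H)$ while $\wlk1(G)\neq\wlk1(H)$. Since a stronger separation of this type, at the level of the walk matrix, is already proved in the earlier paper \cite{ESA23} cited in the introduction, the natural approach is to engineer a construction in the same spirit but strong enough to survive two rounds of the refinement defined by \refeq{chi} when $\chi=w_*$. The paper itself advertises a general template (Lemma~\ref{lem:main}) which builds separating examples from vertex-colored strongly regular graphs on 25 vertices, so I would first invoke that template with parameters tailored to the present invariant.

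The heuristic behind the construction is the following. On a strongly regular graph, the walk counts $w_k(x,y)$ depend only on the adjacency type of the pair $(x,y)$ (equal, adjacent, or non-adjacent), so the coloring $w_*$ takes at most three values and carries very little information beyond adjacency. This rigidity allows one to design two colored SRGs whose underlying graphs are isomorphic but whose colorings differ in a way invisible to the walk-count refinement up to depth~2, yet visible to 1-WL once the colors are externally encoded into the adjacency structure of a bigger graph. Concretely, I would attach a small gadget to each vertex of an SRG (a path, a pendant clique, or a similar token) whose size or structure encodes the vertex color. The two colored SRGs then yield two uncolored graphs $G$ and $H$. I would verify the required properties in four steps: (i) match $w_*(x,y)$ across $G$ and $H$ pairwise; (ii) check that the multisets $\Mset{(w_*(x,y),w_*(y,y))}_y$ agree vertex-by-vertex, securing $\wwr1$-equivalence; (iii) iterate once more to secure $\wwr2$-equivalence; and (iv) trace a 1-WL refinement round that exploits the gadgets and distinguishes the two graphs.

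The main obstacle will be step (iii). After one round, the depth-2 refinement sees pairs $(w_*(x,y),\wwr1$-color$(y))$, and preserving the multiset of these pairs requires that the gadgets be attached in a way that is not merely walk-preserving (which is enough for $\wm$ and $\wwr{1/2}$) but also preserves the second moment of walk profiles seen from each vertex. This is precisely the reason why the SRG colorings cannot be handcrafted with a casual switching argument: one needs a compatible pair of colorings on the same SRG such that the joint distribution of colors over each orbit of the walk-type relation matches, and this is the condition that the computer-assisted search over SRGs on 25 vertices is designed to satisfy. Once such a colored SRG pair is in hand, steps (i)--(iii) follow by standard counting in the gadget, while step (iv) follows because 1-WL can always extract a distinguishing vertex color from the differently distributed gadgets attached to color classes of different sizes.
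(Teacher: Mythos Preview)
Your plan is essentially the paper's: invoke Lemma~\ref{lem:main} with $r=3$ on a suitably colored strongly regular graph found by computer search (the paper uses two colorings of the Paulus graph $P_{25.12}\in\mathrm{SRG}(25,12,5,6)$, distinguished by 1-WL but not within three rounds), so that Part~1 gives $\wlk1(G)\ne\wlk1(H)$ and Part~2 gives $\wwr2(G)=\wwr2(H)$.

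One point worth correcting in your sketch: the construction behind Lemma~\ref{lem:main} does \emph{not} attach a gadget to every vertex of a single SRG and then compare two such decorated graphs. Instead it places \emph{both} colored copies $A'$ and $B'$ into a single uncolored graph $G=G(A',B')$, joined through one connecting vertex per color (with pendant leaves encoding the color index), and compares $G$ with $H=G(A',A')$. This matters for the proof of Part~2: the induction (Claims~\ref{cl:A} and~\ref{cl:B} in the paper) shows that $w_*(x,y)$ in $G(A',B')$ depends only on a coarse ``type'' of the pair $(x,y)$, and this relies on walks being able to pass between the $A$- and $B$-parts, which share the same SRG parameters. Your steps (i)--(iii) are therefore not verified pairwise by hand but are consequences of this structural lemma; once you accept Lemma~\ref{lem:main} as a black box, only the computer search for the colorings remains.
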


The proof requires a substantial extension of the approach in \cite{ESA23} to separate various
$\wlk1$- and $\wm$-based concepts.

\paragraph{Construction.}
Suppose that we have a graph $A$ with $m$ designated vertices $a_1,\ldots,\allowbreak a_m$ and
a graph $B$ with $m$ designated vertices $b_1,\ldots,b_m$, which will be referred to as
\emph{port vertices}. In each of the graphs, the port vertices are
colored by different colors. Specifically, $a_i$ and $b_i$ are colored by the same color $i$.
The resulting partially colored graphs are denoted by $A'$ and $B'$.
We construct a graph $G(A',B')$ with no colored vertices as follows.
$G(A',B')$ consists of the vertex-disjoint union of $A$ and $B$ and a number
of new vertices of two sorts, namely \emph{connecting} and \emph{pendant vertices}.
For each $i$, there is a connecting vertex $c_i$ adjacent to $a_i$ and $b_i$.
Moreover, for each $i$ there are $i$ pendant vertices $p_{i,1},\ldots,p_{i,i}$ of degree $1$
all adjacent to $c_i$. An example of the construction for $m=3$ is shown in Figure~\ref{fig:G}.

\begin{figure}
\pgfdeclarelayer{background layer}
\pgfsetlayers{background layer,main}

\centering
\begin{tikzpicture}

    \matrix[column sep=3.5mm,row sep=3mm,every node/.style={circle,draw,inner sep=2pt,fill=none}] {
&&&&&&&&&&&&&&&&&&&&&\node (c41) {};&&\\
\node[fill,red] (a4) {};&&&&&&&&\node[fill,red] (b4) {};&&&&&&&&&\node (a4) {};&&&&\node (c4) {};&&&&\node (b4) {};\\
&&&&&&&&&&&&&&&&&&&&\node (c31) {};&&\node (c32) {};\\
\node[fill,blue] (a31) {};&&&&&&&&\node[fill,blue] (b31) {};
&&&&&&&&&\node (a3) {};&&&&\node (c3) {};&&&&\node
(b3) {};\\
&&&&&&&&&&&&&&&&&&&&\node (c11) {};&\node (c12) {};&\node (c13) {};&\\
\node[fill,green] (a1) {};&&&&&&&&\node[fill,green] (b1) {};&&&&&&&&&\node (a1) {};&&&&\node (c1) {};&&&&\node (b1) {};\\[3mm]
\node[draw=none] (A) {$A'$};&&&&&&&&\node[draw=none] (B) {$B'$};&&&&&&&&&&&&&\node[draw=none] (A) {$\!\!\!\!\!\!\!\!\!\!G(A',B')\!\!\!\!\!\!\!\!\!\!$};&&\\
    };

    \draw  (a1) -- (c1) -- (b1)
    (a3) -- (c3) -- (b3)  (a4) -- (c4) -- (b4)
(c1) -- (c11) (c1) -- (c12) (c1) -- (c13)
(c31) -- (c3) -- (c32) (c4) -- (c41) ;

\begin{pgfonlayer}{background layer}
  \draw[fill=gray!30,draw=none] (a31) ellipse (10mm and 15mm);
  \draw[fill=gray!30,draw=none] (b31) ellipse (10mm and 15mm);
  \draw[fill=gray!30,draw=none] (a3) ellipse (10mm and 15mm);
  \draw[fill=gray!30,draw=none] (b3) ellipse (10mm and 15mm);
\end{pgfonlayer}

\end{tikzpicture}
\caption{Construction of $G(A',B')$.}
\label{fig:G}
\end{figure}

\paragraph{Main lemma.}
The crux of the proof is the following lemma, whose proof
is postponed to Subsection~\ref{ss:crux}.
Recall that a \emph{strongly regular graph} with parameters $(n,d,\lambda,\mu)$
is an $n$-vertex $d$-regular graph where every two adjacent vertices
have $\lambda$ common neighbors, and every two non-adjacent vertices
have $\mu$ common neighbors. Extending the notation used in Section \ref{ss:characterization},
for a graph $G$ we set $\kwlout1^r(G)=\Mset{\kwlout1^r(G,x)}_{x\in V(G)}$.

\begin{lemma}\label{lem:main}
  Let $A$ and $B$ be strongly regular graphs with the same parameters (in particular, $A$ and $B$ can be isomorphic).
  Let $A'$ and $B'$ be their partially colored versions such that each color occurs in $A'$, as well as in $B'$, at most once.
  Assume that $\kwlout1^0(A')=\kwlout1^0(B')$, which means that the
  sets of the colors occurring in $A'$ and $B'$ are equal and, therefore,
  we can construct the uncolored graph $G=G(A',B')$. Consider also $H=G(A',A')$
  constructed from two vertex-disjoint copies of~$A'$.
  \begin{enumerate}[\bf 1.]
  \item
    If $\wlkk1(A')\ne\wlkk1(B')$, then $\wlkk1(G)\ne\wlkk1(H)$.
    In words: if color refinement distinguishes $A'$ and $B'$, then it distinguishes also $G$ and~$H$.
  \item
    If $\kwlout1^r(A')=\kwlout1^r(B')$ for some $r\ge1$ (i.e., $r$ rounds of color refinement do not suffice for distinguishing $A'$ and $B'$), then $\wwr{r-1}(G)=\wwr{r-1}(H)$.
  \end{enumerate}
\end{lemma}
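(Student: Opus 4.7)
The plan is to prove Parts 1 and 2 separately, exploiting the rigid combinatorial structure imposed by the construction. The starting observation, common to both parts, is that $1$-WL identifies a scaffold within a bounded number of rounds: the pendant vertices are exactly those of degree $1$; the connecting vertex $c_i$ is the unique non-pendant vertex of degree $i+2$, so different $c_i$ receive pairwise distinct colors; and consequently the port vertex $a_i$ (respectively $b_i$) is identified as the unique non-pendant neighbor of $c_i$ lying in the $A$-part (respectively $B$-part). Thus within a bounded initial phase, $1$-WL recovers, starting from the uncolored graph, the full color structure of $A'$ and $B'$ embedded inside $G$, and of both copies of $A'$ inside $H$.

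For Part 1, once this initial individualization is complete, all further refinement proceeds independently on the $A$-component and on the $B$-component, since the scaffolding of connecting and pendant vertices has stabilized and transmits no further distinguishing information. Therefore the colors developed on the $A$-component of $G$ are exactly those produced by $1$-WL on $A'$, and analogously for the $B$-component of $G$ and for both copies of $A'$ inside $H$. This gives the multiset decompositions $\wlkk1(G)=\wlkk1(A')\uplus\wlkk1(B')\uplus C$ and $\wlkk1(H)=\wlkk1(A')\uplus\wlkk1(A')\uplus C$ with the same scaffold multiset $C$, so the hypothesis $\wlkk1(A')\ne\wlkk1(B')$ yields $\wlkk1(G)\ne\wlkk1(H)$ at once.

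For Part 2, the plan is to exhibit a color-preserving bijection $f\function{V(G)}{V(H)}$ for the refined coloring $\omega_{r-1}$ associated with $\omega=w_*$. The natural candidate sends the $A$-component of $G$ and the entire scaffolding identically to their counterparts in $H$, and maps every vertex of the $B$-component of $G$ to its analogue in the second $A$-component of $H$, matching port indices $i$. Unwinding the definition \refeq{chi} with $\chi=w_*$, the identity $\omega_{r-1}(u)=\omega_{r-1}(f(u))$ reduces inductively to verifying the initial compatibility $w_k(u,v)=w_k(f(u),f(v))$ for all $k$ and for each pair $(u,v)$ reachable from $u$ within $r-1$ nested multiset operations. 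The drop from $r$ rounds on the SRG side to $r-1$ rounds on the constructed graph reflects that a single refinement round is ``spent'' transmitting walk-count information across a connecting vertex.

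The main obstacle is the walk-count identity itself: $w_k^G(u,v)=w_k^H(f(u),f(v))$ for all relevant $k$, $u$, $v$, under the hypothesis $\kwlout1^r(A')=\kwlout1^r(B')$. My approach is to decompose every $k$-walk in $G$ into a sequence of \emph{excursions} living inside one of the $A$-component, the $B$-component, or a pendant subtree, linked by \emph{transitions} through a connecting vertex $c_i$. Strong regularity of $A$ and $B$ with identical parameters is essential here: in an SRG the number of $k$-walks between two vertices depends only on whether those vertices are equal, adjacent, or non-adjacent, and this extends to $A'$ and $B'$ via the adjacency type of each endpoint relative to each individualized port. This reduces $w_k^G(u,v)$ to a polynomial in local walk-count data of $A'$ and $B'$ that is visible at depth $r$ of $1$-WL, and the hypothesis then matches these polynomials on the $G$ and $H$ sides. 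Keeping the depth-versus-length bookkeeping precise, so that the correct drop to $r-1$ appears in $\wwr{r-1}$, is where the substantive technical work lies.
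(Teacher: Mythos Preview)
Your Part 1 follows the paper's idea, but the details are off. The degree $i+2$ of $c_i$ can coincide with $d$ or $d+1$; identify $c_i$ via its $i$ pendant neighbors instead. More importantly, the scaffold does \emph{not} stabilize independently of the two parts: $c_i$ is adjacent to both $a_i$ and $b_i$, so once CR separates $a_i$ from $b_i$ (which it eventually does under your hypothesis), the color of $c_i$ in $G$ differs from that in $H$, and your decomposition with a common scaffold multiset $C$ is false. The paper's (brief) argument is simply that the gadgets simulate the port colors, so the stable partition of $G$ restricted to the $B$-part refines that of $B'$; any color class of $B'$ not matched in $A'$ then yields a color in $G$ absent from~$H$.

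Part 2 has a genuine gap. You reduce to the pairwise identity $w_k^G(u,v)=w_k^H(f(u),f(v))$, but for $u,v$ both in the $B$-part this would require $f|_B\colon V(B)\to V(A)$ to preserve adjacency---the SRG walk-count formula you invoke makes $w_k(u,v)$ depend on $A(u,v)$---and the hypothesis $\kwlout1^r(A')=\kwlout1^r(B')$ yields only a color-preserving bijection, not an isomorphism. The paper avoids this by proving a \emph{determination} statement rather than a pairwise one: for $x$ in the $A$-part (say), $\omega_{r-1}(x)$ is shown to be a function of $C^{A'}_r(x)$ together with the global multisets $\Mset{C^{A'}_{r-1}(y)}_{y\in V(A)}$, $\Mset{C^{B'}_{r-1}(y)}_{y\in V(B)}$, $\tau(A')$, and $\sigma(A)$. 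The inductive step lives at the multiset level: $C^{A'}_r(x)$ determines the multiset $\Mset{(A(x,y),C^{A'}_{r-1}(y))}_{y}$, hence the multiset $\Mset{(\eta(x,y),C_{r-1}(y))}_y$, hence (by the walk-decomposition claim and induction) the multiset $\Mset{(\omega(x,y),\omega_{r-2}(y))}_y$; no pairwise matching of $y$'s across $G$ and $H$ is ever needed. Your sentence about a ``polynomial in local walk-count data visible at depth $r$ of 1-WL'' is the right instinct---carry that through and drop the bijection-based pairwise claim. Finally, the $r\to r-1$ shift is not ``spent'' at a connecting vertex: it appears already in the base case, because $\omega_0(x)=w_*(x,x)$ depends (through excursions via ports) on which ports are adjacent to $x$, i.e., on $\tau(x)=C^{A'}_1(x)$, so $\omega_0$ sits at CR-depth $1$ from the outset.
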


\paragraph{The rest of the proof.}
Lemma \ref{lem:main} allows us to separate $\wlk1$ and $\wwr2$ by finding partially colored strongly regular graphs
$A'$ and $B'$ satisfying the assumptions of the lemma for $r=3$.
Let $\mathrm{SRG}(n,d,\lambda,\mu)$ denote the set of strongly regular graphs
with parameters $(n,d,\lambda,\mu)$. Two suitable colorings $A'$ and $B'$
exist for a graph $A=B$ in the set SRG(25,12,5,6) of \emph{Paulus graphs}, namely for
the graph $P_{25.12}$ in Brouwer's collection \cite{BrouwerPaulus}.
This graph was first described in
\cite{Paulus73} and \cite{Rozenfeld73}. It can also be found as the first among the
SRG(25,12,5,6) graphs in McKay's collection of graphs \cite{McKayGraphs} and
the 14th SRG(25,12,5,6) graph in Spence's collection \cite{SpenceSRGraphs}.
The graph can be constructed from the Latin square
\[L=
\left[\begin{matrix}
1 & 2 & 3 & 4 & 5\\
4 & 1 & 5 & 2 & 3\\
5 & 4 & 1 & 3 & 2\\
3 & 5 & 2 & 1 & 4\\
2 & 3 & 4 & 5 & 1
\end{matrix}\right]
\]
by setting $V(A)=\Set{1,2,3,4,5}^2$ and
\[E(A)=\Set{\Set{(i,j),(i',j')}}{i=i' \text{ or } j=j' \text{ or } L_{i,j}=L_{i',j'}}.\]
In this representation, the vertices of $A$ are the cells of the square $L$,
and two cells are adjacent if and only if they are in the same row or in the same column,
or contain the same number in~$L$.

Two appropriate colorings $A'$ and $B'$ of the graph $A=P_{25.12}$ are found by computer search
using the Lua package TCSLibLua~\cite{TCSLibLua}. They use $m=2$ vertex colors.
In the colored version $A'$, the vertex $(1,1)$ is colored red and $(1,3)$ is colored blue.
In $B'$, the vertex $(1,3)$ is red and $(3,4)$ is blue.\footnote{%
It seems that strongly regular graphs with less than 25 vertices
do not admit appropriate colorings even for $r=2$.}

\subsection{Separation of the $\wwr r$ hierarchy from $\wlk2$}

We now prove the following result.

\begin{theorem}\label{thm:sep2}
  $\wlk{3/2}\npreceq\wwr{\bullet}$.
\end{theorem}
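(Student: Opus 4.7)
The plan is to reduce directly to Lemma \ref{lem:main2}. It suffices to exhibit strongly regular graphs $A$ and $B$ with the same parameters, together with port-individualized versions $A'$ and $B'$ (each carrying a single individualized vertex), such that $\wlk{3/2}(A')\ne\wlk{3/2}(B')$. Setting $G = G(A',B')$ and $H = G(A',A')$ as in the general construction from Subsection preceding Lemma \ref{lem:main2}, Part 2 of Lemma \ref{lem:main2} gives the inequality $\wlk{3/2}(G)\ne\wlk{3/2}(H)$, while Part 1 of the same lemma delivers $\wwr{r}(G) = \wwr{r}(H)$ for every finite $r$, and hence $\wwr{\bullet}(G) = \wwr{\bullet}(H)$. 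This yields the separation $\wlk{3/2}\npreceq\wwr{\bullet}$ at once.

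The remaining combinatorial task is therefore to produce such a pair $(A', B')$. A natural source, echoing the proof of Theorem \ref{thm:sep}, is the family of Paulus graphs $\mathrm{SRG}(25,12,5,6)$. Since any two SRGs with the same parameters are $\wlk2$-equivalent, plain $\wlk1$ cannot tell $A'$ and $B'$ apart solely on the basis of the port; the distinguishing power of $\wlk{3/2}$ must emerge from the second, ranging individualization in the definition $\wlk{3/2}(A') = \Mset{\wlk1(A'_x)}_{x\in V(A)}$. Two flavors of search are available: take $B = A$ with non-equivalent port vertices (possible only when $A$ is not vertex-transitive, which rules out the Paley graph $P_{25.15}$), or take $A\ne B$ non-isomorphic within $\mathrm{SRG}(25,12,5,6)$. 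In either case one hunts for a port assignment whose induced multiset of $\wlk1$-colorings of the doubly individualized graph is sensitive to the identity of the port.

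The principal obstacle is not the structural argument but the explicit existence of this pair $(A', B')$. This is a finite combinatorial question, resolvable by the same computer-assisted strategy used for Theorem \ref{thm:sep} (for instance, via the TCSLibLua package \cite{TCSLibLua}), with the resulting colorings recorded in an appendix analogous to Appendix \ref{app:srg}. Given the richness of $\mathrm{SRG}(25,12,5,6)$ and the fact that $P_{25.12}$ was already fruitful for the strictly stronger requirement $\kwlout1^3(A')=\kwlout1^3(B')$ needed in Theorem \ref{thm:sep}, it is reasonable to expect a suitable pair to live inside this family as well; the expected simplification relative to Theorem \ref{thm:sep} is that here we only need a single individualized port vertex per graph and a non-equality (rather than equality) of $\wlk{3/2}$-values, so the search space is much less constrained. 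Once the pair is produced, Lemma \ref{lem:main2} closes the proof.
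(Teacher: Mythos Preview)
Your reduction to Lemma~\ref{lem:main2} is exactly the right framework, and the paper proceeds in the same way. The gap is that you do not actually produce the witnessing pair $(A',B')$; you defer this to a computer search over $\mathrm{SRG}(25,12,5,6)$. As written, this is a proof sketch rather than a proof.

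The paper closes this gap with a smaller and cleaner example requiring no computation: $A$ is the Shrikhande graph and $B$ is the $4\times4$ rook's graph, both in $\mathrm{SRG}(16,6,2,2)$ and both vertex-transitive (so $A'$ and $B'$ are unique up to isomorphism). The inequality $\wlk{3/2}(A')\ne\wlk{3/2}(B')$ is then verified by a short structural argument: since the rook's graph and its complement are both arc-transitive, $\wlk1(B'_x)$ takes exactly three values over $x\in V(B)$ (depending on whether $x=b$, $x$ is adjacent to $b$, or neither). For the Shrikhande graph, the complement is not arc-transitive; the pairs of non-adjacent vertices split into two orbits according to whether their two common neighbors are adjacent, so $\wlk1(A'_x)$ takes four values. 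Hence the two multisets differ. This avoids the search entirely and also sidesteps the issue that your suggestion of taking $A=B$ non-vertex-transitive would still require explicit verification that two inequivalent ports yield different $\wlk{3/2}$-values.
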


Since $\wlk{3/2}\preceq\wlk2$ and $\wwr{r}\preceq\wwr{\bullet}$ for each $r$,
Theorem \ref{thm:sep2} implies that $\wwr r$ is not stronger than $\wlk2$
(for $r=2$ this follows also from Theorem \ref{thm:sep}).
The proof of Theorem \ref{thm:sep2} is based on the following lemma.

\begin{lemma}\label{lem:main2}
  Let $A$ and $B$ be (possibly isomorphic) strongly regular graphs
  with the same parameters.  Let $A'$ and $B'$ be their versions, each
  containing a single individualized (port) vertex. Let $G=G(A',B')$ and
  $H=G(A',A')$.
  \begin{enumerate}[\bf 1.]
  \item
   $\wwr r(G)=\wwr r(H)$ for all $r$ and, therefore, $\wwr{\bullet}(G)=\wwr{\bullet}(H)$.
  \item
    If $\wlkk{3/2}(A')\ne\wlkk{3/2}(B')$, then $\wlkk{3/2}(G)\ne\wlkk{3/2}(H)$.
  \end{enumerate}
\end{lemma}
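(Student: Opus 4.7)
For Part~1, my plan is to reduce to Part~2 of Lemma~\ref{lem:main}, exploiting that color refinement stabilizes almost immediately on a strongly regular graph with one individualized vertex. After two rounds of CR on $A'$ the vertex set splits into exactly three classes---the singleton $\{a_1\}$, the $d$ neighbors of $a_1$, and the $n-1-d$ non-neighbors---because in an SRG with parameters $(n,d,\lambda,\mu)$ the number of common neighbors of any two vertices is determined by their adjacency type. The analogous three-class partition arises for $B'$ with identical class sizes, so $\kwlout1^s(A') = \kwlout1^s(B')$ as multisets for every $s \ge 0$. Applying Lemma~\ref{lem:main} Part~2 with $s = r+1$ then yields $\wwr{r}(G) = \wwr{r}(H)$ for every $r \ge 0$; since $|V(G)| = |V(H)|$, the equality $\wwr{\bullet}(G) = \wwr{\bullet}(H)$ follows from the convention $\wwr{\bullet} = \wwr{n}$.

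For Part~2, I would case-split on whether $\wlkk1(A') = \wlkk1(B')$. If these are unequal, Part~1 of Lemma~\ref{lem:main} yields $\wlkk1(G) \ne \wlkk1(H)$, hence $\wlkk{3/2}(G) \ne \wlkk{3/2}(H)$ by $\wlkk1 \preceq \wlkk{3/2}$. The substantive case is $\wlkk1(A') = \wlkk1(B') =: \tau$. The key structural claim is that for $x \in V(A) \subseteq V(G)$ the stable CR colouring of $G_x$ decomposes, as a multiset of colours, as
\begin{equation*}
\wlkk1(G_x) \;=\; \wlkk1(A'_x)\,\uplus\,\tau\,\uplus\,\Gamma,
\end{equation*}
and symmetrically $\wlkk1(G_x) = \tau \uplus \wlkk1(B'_x) \uplus \Gamma$ for $x \in V(B)$, where $\Gamma$ is the fixed contribution of the two gadget vertices $c_1, p_{1,1}$. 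The analogous decomposition holds for $H_x$ with the $B$-part replaced by a second copy of~$A'$. Summing over all $x$ and cancelling the common gadget and $\tau$ contributions gives
\begin{equation*}
\wlkk{3/2}(G) = \wlkk{3/2}(H) \iff \Mset{\wlkk1(A'_x)}_{x \in V(A)} = \Mset{\wlkk1(B'_x)}_{x \in V(B)} \iff \wlkk{3/2}(A') = \wlkk{3/2}(B'),
\end{equation*}
and the contrapositive yields Part~2.

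The main obstacle is the structural decomposition claim in the substantive case: showing that the CR colouring on each SRG side of $G_x$ mirrors that of the corresponding individualized SRG alone, with no cross-talk through~$c_1$. My intended argument is an induction on CR rounds. CR on $G_x$ first isolates the pendant $p_{1,1}$ (as the unique degree-$1$ vertex), then $c_1$ (as the unique non-individualized neighbour of~$p_{1,1}$), and then $a_1, b_1$ as the only non-gadget neighbours of $c_1$. The inductive invariant is that in every round the colour of each interior vertex $v \in V(A) \setminus \{a_1\}$ in $G_x$ is a function of its colour in $A'_x$, symmetrically for $B$, while $c_1$'s colour is a function only of the colours of $p_{1,1}, a_1, b_1$. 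The inductive step reduces to inspecting neighbour multisets: interior vertices have no cross-side neighbours, and $a_1$'s only cross-side neighbour is~$c_1$, whose colour is determined inductively by data from each individualized side. Setting up this invariant carefully is the technical core; the rest is multiset bookkeeping.
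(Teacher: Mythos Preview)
Your treatment of Part~1 is fine and matches the paper's: both reduce to Part~2 of Lemma~\ref{lem:main} after observing that $\kwlout1^s(A')=\kwlout1^s(B')$ for all $s$. (The paper argues this via ``1-WL with one individualization is subsumed by 2-WL, and 2-WL cannot separate SRGs with equal parameters''; your direct stabilisation argument is equally valid.) Note, incidentally, that your own Part~1 argument already shows $\wlk1(A')=\wlk1(B')$, so the first branch of your case split in Part~2 is vacuous.

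The gap is in Part~2. Your structural decomposition
\[
\wlkk1(G_x)\;=\;\wlkk1(A'_x)\,\uplus\,\tau\,\uplus\,\Gamma
\]
with \emph{fixed} $\tau$ and $\Gamma$ does not hold. The cross-talk through $c_1$ that you hope to control is exactly what breaks it: the stable colour of $a_1$ in $G_x$ records whether $x=a_1$, $x\in N(a_1)$, or $x\notin N(a_1)\cup\{a_1\}$, and therefore varies with $x$. Since $c_1$ is adjacent to $a_1$, the stable colour of $c_1$ varies with $x$; hence so do the colours of $p_{1,1}$ and of every $B$-side vertex (all of which see $c_1$ through $b_1$). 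Thus neither $\Gamma$ nor the $B$-side contribution is a constant multiset independent of $x$, and your cancelling argument collapses. More basically, the identity compares colour multisets computed in different graphs ($G_x$ versus $A'_x$ and $B'$), which are not the same objects; any rigorous version would need an explicit, $x$-independent, injective translation of colours---and the point above shows no such translation exists for the $B$-side.

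The paper sidesteps the decomposition entirely. It proves only the one-directional implication it needs: for $x$ in the $C$-part of $G$ and $y$ in the $D$-part of $H$, if $\wlk1(C'_x)\ne\wlk1(D'_y)$ then $\wlk1(G_x)\ne\wlk1(H_y)$. The proof requires a genuine case split on whether the port vertices $u\in G_x$ and $v\in H_y$ receive the same stable colour. If they do not, one argues directly that $u$ (resp.~$v$) is the unique vertex of its colour (using connectedness of $A,B$ and that ports have degree $d+1$), which forces $\wlk1(G_x)\ne\wlk1(H_y)$. Only when the port colours agree does an inductive comparison like yours go through---and even then the induction needs separate handling at the port vertex (where the extra neighbour $c_1$ must be matched with $c'$ using the assumed equality of port colours). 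A counting argument then finishes: a colour value $\wlk1(A'_z)$ overrepresented in $\wlk{3/2}(A')$ relative to $\wlk{3/2}(B')$ yields $2s$ elements of $\wlk{3/2}(H)$ that cannot all be matched in $\wlk{3/2}(G)$. Your summing-and-cancelling idea can be rescued, but only after proving this bidirectional correspondence between $\wlk1(G_x)$ and $\wlk1(C'_x)$---which is essentially the paper's Claim, and needs the case split you are missing.
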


\begin{proof}
  \textit{1.}
  Note that $\wlk1(A')=\wlk1(B')$.
  Indeed, the distinguishability of $A'$ and $B'$ by 1-WL would imply
  the distinguishability of $A$ and $B$ by 2-WL, contradicting the assumption
  that these strongly regular graphs have the same parameters. Thus,
  $\kwlout1^r(A')=\kwlout1^r(B')$ for all $r\ge1$, and we can apply the second part of Lemma~\ref{lem:main}.

\medskip
  
\textit{2.}  Recall that $G_x$ denotes the graph $G$ with
individualized vertex $x$. Following this notation, the graph $A'_x$
contains the vertex $x$ and the port vertex as individualized
vertices (note that if a graph contains two individualized vertices, then their colors are distinct).
The proof is based on the following.

  \begin{claim}\label{cl:32}
    Let $x\in V(G)$ and $y\in V(H)$. More specifically, suppose that
    $x$ belongs to the $C$-part of $G$, where $C\in\{A,B\}$. Let $C$
    denote the part of $G$ containing $x$, and $D$ denote the part of
    $H$ containing $y$ (thus, $D$ is one of the two copies of $A$ in $H$).
    Similarly to the definition of $A'_x$ above, 
    the graphs $C'_x$ and $D'_y$ are defined by individualizing their port vertices
    (in addition to the individualized vertices $x$ and $y$). Now, if
    $\wlk1(C'_x)\ne\wlk1(D'_y)$ then $\wlk1(G_x)\ne\wlk1(H_y)$.
  \end{claim}

  \begin{subproof}
    Suppose that $A$ and $B$ are $d$-regular.  Denote the
    individualized port vertex of $C'$ by $u$. Thus, $u$ is the vertex
    of degree $d+1$ in the part of $G$ containing $x$. Similarly, let
    $v$ be the individualized vertex of $D'$, that is, the vertex of
    degree $d+1$ in the part of $H$ containing~$y$. Let $\bar{u}$
    and $\bar{v}$ denote the other port vertices in $G$ and $H$,
    respectively.

    Following the notation $\kwlout1^r(G_x,u)$, we let
    $\kwlout1(G_x,u)$ denote the final color assigned to the vertex
    $u$ by the run of 1-WL on the graph $G_x$. The argument is split in two cases.

\Case 1{$\kwlout1(G_x,u)\ne\kwlout1(H_y,v)$.}    
       Note that $A$ and $B$ are
      connected because otherwise $A'$ and $B'$ would be
      isomorphic. This follows from the fact that a disconnected
      strongly regular graph is a disjoint union of complete graphs of
      equal size. Notice that $u$ and $\bar{u}$ are the only vertices
      of degree $d+1$ in $G_x$ and $v, \bar{v}$ the only vertices of
      degree $d+1$ in $H_y$. However, as $u$ is closer to $x$ than
      $\bar{u}$ in the graph $G$, the run of 1-WL on $G_x$ will
      distinguish between $u$ and $\bar{u}$, that is, $\kwlout1(G_x,u)\ne\kwlout1(G_x,\bar u)$.\footnote{%
        If a vertex $w$ is at distance $r$ from the individualized vertex $x$ and
        a vertex $w'$ is at distance larger than $r$ from $x$, then it is easy to see
        that $\kwlout1^r(G_x,w)\ne\kwlout1^r(G_x,w')$.
        }
      Similarly, the run of
      1-WL on $H_y$ will distinguish between $v$ and $\bar{v}$.
      Hence, the port $u$ is the unique vertex of its color in the
      1-WL stable coloring of $G_x$. Likewise, the port $v$ in $H_y$
      is uniquely colored by the stable coloring. Since, by
      assumption, these colors are different, it follows from the connectedness of $G$ and $H$ that
      $\kwlout1(G_x,x)\ne\kwlout1(H_y,y)$. As $x$ and $y$ are
      individualized, this readily implies that $\wlk1(G_x)\ne\wlk1(H_y)$.
      
      \Case 2{$\kwlout1(G_x,u)=\kwlout1(H_y,v)$.}
      The crucial observation is that, in this case, if 1-WL is run on input $G_x$,
      then its action on the part $C$ of $G_x$ is completely similar to the run of
  1-WL on input $C'_x$. The same is true for the action of 1-WL
  on the part $D$ of $H_y$ --- it is completely similar to the run of
  1-WL on input $D'_y$. More formally, it suffices to prove for our purposes that
  for any vertices $w$ in $C'_x$ and $z$ in~$D'_y$,
  \begin{equation}
    \label{eq:GHCD}
\kwlout1^r(C'_x,w)\ne\kwlout1^r(D'_y,z)\text{ implies }\kwlout1^{r+1}(G_x,w)\ne\kwlout1^{r+1}(H_y,z)  
\end{equation}
for any number of rounds $r\ge0$.\footnote{It is also true, on the other hand, that
 $\kwlout1^r(C'_x,w)=\kwlout1^r(D'_y,z)$ implies $\kwlout1^r(G_x,w)=\kwlout1^r(H_y,z)$.   
}
We use the induction on $r$.

In the base case, when $r=0$, assume that $\kwlout1^0(C'_x,w)\ne\kwlout1^0(D'_y,z)$.
This is possible in one of the following two cases.
If $w=x$ and $z\ne y$ or if $w\ne x$ and $z=y$, then the inequality $\kwlout1^{1}(G_x,w)\ne\kwlout1^{1}(H_y,z)$
is true because $x$ and $y$ are individualized in $G_x$ and $H_y$ respectively.
The other possibility is that $w=u$ and $z\ne v$ or that $w\ne u$ and $z=v$.
Then $\kwlout1^{1}(G_x,w)\ne\kwlout1^{1}(H_y,z)$ because $u$ and $v$ are distinguishable
by having degree~$d+1$.

Now, let $r\ge1$ and assume that \refeq{GHCD} is true for the preceding value of $r$.
Assume that
\begin{equation}
  \label{eq:CneD}
\kwlout1^r(C'_x,w)\ne\kwlout1^r(D'_y,z).  
\end{equation}
If $\kwlout1^{r-1}(C'_x,w)\ne\kwlout1^{r-1}(D'_y,z)$, then the induction assumption yields
the inequality $\kwlout1^{r}(G_x,w)\ne\kwlout1^{r}(H_y,z)$, which implies that
$\kwlout1^{r+1}(G_x,w)\ne\kwlout1^{r+1}(H_y,z)$. Next, suppose that $\kwlout1^{r-1}(C'_x,w)=\kwlout1^{r-1}(D'_y,z)$.
It follows that either $w=u$ and $z=v$ or that $w\ne u$ and $z\ne v$. The latter subcase is somewhat
simpler, and we consider it first.

\subcase{(i)}{$w\ne u$ and $z\ne v$.}
For a graph $X$ and a vertex $a\in V(X)$, let $N_X(a)$ denote the neighborhood of $a$ in~$X$.
Inequality \refeq{CneD} implies that
\begin{equation}
  \label{eq:MsCneMsD}
\Mset{\kwlout1^{r-1}(C'_x,g)}_{g\in N_C(w)}\ne\Mset{\kwlout1^{r-1}(D'_y,h)}_{h\in N_D(z)}.
\end{equation}
By the induction assumption, this gives us the inequality
\begin{equation}
  \label{eq:GgHh}
\Mset{\kwlout1^{r}(G_x,g)}_{g\in N_G(w)}\ne\Mset{\kwlout1^{r}(H_y,h)}_{h\in N_H(z)},  
\end{equation}
from which we derive the desired inequality $\kwlout1^{r+1}(G_x,w)\ne\kwlout1^{r+1}(H_y,z)$.

\subcase{(ii)}{$w=u$ and $z=v$.}
Denote the connecting vertex in $G$ by $c$ and the connecting vertex in $H$ by $c'$.
By construction, both $c$ and $c'$ have degree 3. This degree is unique in $G$ and in $H$
because $d>3$. Indeed, the only connected strongly regular graphs of degree 2 are
the cycles $C_3$ and $C_5$, and the only connected strongly regular graphs of degree 3 are
the complete graph $K_4$, the bipartite complete graph $K_{3,3}$, and the Petersen graph;
see, e.g., \cite{HoltonL92}. All of these graphs are vertex-transitive,
where a graph is \emph{vertex-transitive} if every its vertex can be taken to any other vertex by
some automorphism. However, in such a case the graphs $A'$ and $B'$ would be isomorphic,
contradicting the assumption $\wlkk{3/2}(A')\ne\wlkk{3/2}(B')$ of the lemma.

Note that $N_G(u)=N_C(u)\cup\{c\}$ and $N_H(v)=N_D(v)\cup\{c'\}$.
In the subcase under consideration, Inequality \refeq{MsCneMsD} alone does not give us
Inequality \refeq{GgHh}, implying instead its weaker version
\begin{equation}
  \label{eq:GgHhuv}
\Mset{\kwlout1^{r}(G_x,g)}_{g\in N_G(u)\setminus\{c\}}\ne\Mset{\kwlout1^{r}(H_y,h)}_{h\in N_H(v)\setminus\{c'\}}.
\end{equation}
Nevertheless, we have the equality
\begin{equation}\label{eq:ccc}
  \kwlout1^{r}(G_x,c)=\kwlout1^{r}(H_y,c'),
\end{equation}
which follows from the assumption that $\kwlout1^{r+1}(G_x,u)=\kwlout1^{r+1}(H_y,v)$
because $c$ is the only vertex of degree 3 in the neighborhood of $u$ and
$c'$ is the only vertex of degree 3 in the neighborhood of $v$.
Inequality \refeq{GgHhuv} and Equality \refeq{ccc} together imply Inequality \refeq{GgHh},
and the proof is finished like in the Subcase~(i).

\medskip

Thus, \refeq{GHCD} is proved for all $r$. It follows that for any vertices $w$ in $C'_x$
      and $z$ in $D'_y$ the inequality $\kwlout1(C'_x,w)\ne \kwlout1(D'_y,z)$
      implies the inequality $\kwlout1(G_x,w)\ne \kwlout1(H_y,z)$.
Therefore, the inequality $\wlk1(C'_x)\ne \wlk1(D'_y)$ implies the inequality
$\wlk1(G_x)\ne \wlk1(H_y)$.
Indeed, if the multisets of colors in the stable coloring of $G_x$ and $H_y$ are
equal, then their restrictions to $C$ and $D$ are equal as well, and
the multisets of colors in $C'_x$ and $D'_y$ have to be equal too.
  \end{subproof}

Recall that $\wlk{3/2}(A')=\Mset{\wlk1(A'_x)}_{x\in V(A)}$ and
$\wlk{3/2}(B')=\Mset{\wlk1(B'_x)}_{x\in V(B)}$.  Thus, the assumption
of Part 2 of the lemma implies that there exists a vertex $z\in V(A)$ such
that $\wlk1(A'_z)$ occurs in $\wlk{3/2}(A')$ more frequently than in
$\wlk{3/2}(B')$. Specifically, let $z_1,\ldots,z_s$ be all vertices in
$A$ such that $\wlk1(A'_{z_i})=\wlk1(A'_z)$.  Let $\tilde z_i$ denote
the clone of $z_i$ in the other $A$-part of $H$. Consider the
corresponding elements of $\wlk{3/2}(H)$, namely
$\wlk1(H_{z_1}),\ldots,\wlk1(H_{z_s}),\wlk1(H_{\tilde
  z_1}),\ldots,\wlk1(H_{\tilde z_s})$.  By Claim \ref{cl:32}, whatever
these elements are, they cannot occur (with the same frequency)
in~$\wlk{3/2}(G)$.
\end{proof}

\begin{proof}[Proof of Theorem \ref{thm:sep2}]
  We apply Lemma \ref{lem:main2}, where $A$ is the Shrikhande graph
  and $B$ is the $4\times4$ rook's graph, both strongly regular graphs
  with parameters $(16,6,2,2)$. Since both graphs are
  vertex-transitive, $A'$ and $B'$ are uniquely defined.
  We have $\wwr{\bullet}(G)=\wwr{\bullet}(H)$ by Part 1 of Lemma \ref{lem:main2}.
  To ensure that $\wlk{3/2}(G)\ne\wlk{3/2}(H)$, by Part 2 of this lemma we have to show that $\wlk{3/2}(A')\ne\wlk{3/2}(B')$.
  Let $a$ and $b$ denote the vertices individualized in $A'$ and $B'$ respectively.

  Recall that a graph is \emph{arc-transitive} if every ordered pair of its adjacent vertices can be taken
  to any other ordered pair of adjacent vertices by some automorphism of the graph.
  Since the $4\times4$ rook's graph and its complement are both arc-transitive, $\wlk1(B'_x)$ takes on exactly three values
  depending on whether $x=b$ and whether $x$ is adjacent with $b$. The Shrikhande graph is also arc-transitive,
  but this is no longer true for its complement. Specifically, when we consider
  the action of the automorphism group of the Shrikhande graph on $V(A)^2$, then the set of all pairs of
  non-adjacent unequal vertices $(x,y)$ is split into two orbits depending on whether the two
  common neighbors of $x$ and $y$ are adjacent or not. For $y=a$, let $x_1$ be as in the first case
  and $x_2$ be as in the second case. Clearly, $\wlk1(A'_{x_1})\ne\wlk1(A'_{x_2})$.
  It follows that $\wlk1(A'_x)$ takes on exactly four different values and, therefore,
  the multisets  $\wlk{3/2}(A')$ and $\wlk{3/2}(B')$ are different from each other.
\end{proof}

\begin{open}\label{probl:main}
  We leave open the question whether the hierarchy
  \begin{equation}
    \label{eq:wwr-hierarchy}
\wwr1\preceq\wwr2\preceq\wwr3\preceq\wwr4\preceq\cdots\preceq\wwr{\bullet}
  \end{equation}
is strict or at least does not collapse to some level.
While we know that each $\wwr r$ is strictly weaker than $\wlk2$,
it remains open whether $\wwr r$ can be stronger than $\wlk1$
for some large $r$. A negative answer will follow from Lemma \ref{lem:main}
if there is an infinite sequence of partially colored strongly regular graphs $A'_r$ and $B'_r$
for $r=1,2,3\ldots$, where the underlying graphs are equal or have the same parameters, such that
$A'_r$ and $B'_r$ are distinguished by 1-WL, but requiring at least $r$ refinement rounds.
\end{open}

Suppose that strongly regular graphs $A$ and $B$ have the same parameters and their partially colored
versions $A'$ and $B'$ are distinguished by 1-WL exactly in the $(r+1)$-th round.
By Lemma \ref{lem:main}, $G(A',B')$ and $G(A',A')$ are $\wwr{r-1}$-equivalent and, therefore,
this pair of graphs is a good candidate for separation of $\wwr{r-1}$ from $\wwr r$.
This approach works indeed pretty well.

\begin{theorem}\label{thm:lower-levels}
  The hierarchy \refeq{wwr-hierarchy} is strict up to the 4-th level, that is,
  the first three relations in \refeq{wwr-hierarchy} are strict.
\end{theorem}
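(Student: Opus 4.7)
The plan, as indicated in the discussion immediately preceding the theorem, is to apply Lemma~\ref{lem:main} three times, once for each level to be separated. To witness the strictness of the relation $\wwr{r-1}\preceq\wwr{r}$ for $r\in\{2,3,4\}$, I would look for partially colored versions $A'_r$ and $B'_r$ of a strongly regular graph (or of two strongly regular graphs with the same parameters) that are distinguished by 1-WL in exactly the $(r+1)$-th refinement round, i.e., $\kwlout1^{r}(A'_r)=\kwlout1^{r}(B'_r)$ while $\kwlout1^{r+1}(A'_r)\ne\kwlout1^{r+1}(B'_r)$. Setting $G_r=G(A'_r,B'_r)$ and $H_r=G(A'_r,A'_r)$, Part~2 of Lemma~\ref{lem:main} immediately yields $\wwr{r-1}(G_r)=\wwr{r-1}(H_r)$, which handles one half of each separation.

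The first concrete step is to produce such colorings. Mimicking the computer-assisted search described in the proof of Theorem~\ref{thm:sep}, I would scan partial colorings of the strongly regular graphs on 25 vertices (chiefly the Paulus graphs), classifying each candidate pair by the exact 1-WL refinement round at which the two sides become distinguishable. The fact that the theorem reaches only the 4th level suggests that the search returns usable pairs for $r\in\{2,3,4\}$ but that either no such pairs exist within this family for larger $r$, or the search becomes infeasible there.

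The main obstacle is the second step: verifying that $\wwr{r}(G_r)\ne\wwr{r}(H_r)$. Part~1 of Lemma~\ref{lem:main} only guarantees $\wlk1(G_r)\ne\wlk1(H_r)$ with no round bound, which is insufficient for the required quantitative separation. The most principled remedy is a partial converse to Part~2 of Lemma~\ref{lem:main}, asserting that if $\kwlout1^{r+1}(A')\ne\kwlout1^{r+1}(B')$ then already $\wwr{r}(G(A',B'))\ne\wwr{r}(G(A',A'))$. I would attempt to establish this by a careful round-by-round propagation argument in the spirit of Claim~\ref{cl:32} inside the proof of Lemma~\ref{lem:main2}, exploiting that the base coloring $w_*$ encodes both adjacency and identity, so that a single refinement round starting from $w_*$ simulates at least one round of 1-WL acting on the colored $A$- and $B$-parts through the connecting and pendant vertices. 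If writing this converse in full generality turns out to be delicate, a pragmatic alternative is to verify $\wwr{r}(G_r)\ne\wwr{r}(H_r)$ directly by computation on the specific pairs returned by the search. In either form, the crux of the argument is the tight quantitative coupling between the number of 1-WL rounds needed to distinguish the building blocks $A'_r$ and $B'_r$ and the number of refinement rounds needed to distinguish the composite graphs $G_r$ and $H_r$ under the $w_*$-based refinement.
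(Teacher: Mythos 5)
Your proposal matches the paper's proof in essentially every respect: the paper also takes $G_r=G(A'_r,B'_r)$ and $H_r=G(A'_r,A'_r)$ for colorings of Paulus graphs found by computer search, obtains the equality $\wwr{r}(G_r)=\wwr{r}(H_r)$ from Part~2 of Lemma~\ref{lem:main}, and establishes the inequality at the next level not via a converse lemma but exactly by your ``pragmatic alternative'' of direct computation. The only detail you could not have anticipated is that for the topmost separation the paper has to generalize the construction to color classes of size larger than one (so that even the equality is then checked computationally, Lemma~\ref{lem:main} not being proved for that variant), but this does not change the approach.
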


\begin{proof}
For each $r=1,2,3$, we provide a pair of graphs $G_r$ and $H_r$ such that
\begin{equation}
  \label{eq:wwrr}
\wwr{r}(G_r)=\wwr{r}(H_r)  
\end{equation}
while
\begin{equation}
  \label{eq:wwrr1}
\wwr{r+1}(G_r)\ne\wwr{r+1}(H_r).
\end{equation}
For $r=1$ and $r=2$, these graphs are obtained by using our construction,
that is, $G_r=G(A'_r,B'_r)$ and $H_r=G(A'_r,A'_r)$ for suitable partially
colored strongly regular graphs $A'_r$ and $B'_r$. For both $r=1$ and $r=2$,
$A'_r$ and $B'_r$ are colored versions of the Paulus graph $P_{25.12}$
that was already used in the proof of Theorem \ref{thm:sep}.
$A'_2$ and $B'_2$ are exactly as in the proof of Theorem \ref{thm:sep}.
$A'_1$ and $B'_1$ are obtained by using 3 colors, specifically,
by setting $a_i=(1,i)$ and $b_i=(i,1)$ for $i=1,2,3$ (where $a_i$ and $b_i$ are
the vertices colored by the color $i$ in $A'_1$ and $B'_1$ respectively).
Here we use the representation of $P_{25.12}$ as a Latin square graph
given in Subsection \ref{ss:sep}.
While we already know that $A'_2$ and $B'_2$ satisfy the assumptions of
Lemma \ref{lem:main} for $r=3$, a direct computation shows that
$A'_1$ and $B'_1$ satisfy the assumptions of Lemma \ref{lem:main} for $r=2$.
Thus, Equality \refeq{wwrr} follows from Lemma \ref{lem:main} in both cases
of $r=1$ and $r=2$. Inequality \refeq{wwrr1} is in both cases verified directly
by computation.\footnote{The computations are done by using the Lua package TCSLibLua \cite{TCSLibLua}.
  More specifically, $w(x,y)$ is computed using the method \url{graphs::walkMatrix} with $\mathrm{e}_y$ as
  \url{filter} for each vertex $y$; the graph invariant $\wwr{r}$ is computed using the \url{genref} module.}

For $r=3$, we use an extension of the construction of $G(A',B')$, allowing
color classes of size larger than one. Specifically, for each color $i$ we now allow
possibly more than one vertex $a_i^1,a_i^2,\ldots$ colored by $i$ in $A'$
and the same number of vertices $b_i^1,b_i^2,\ldots$ colored by $i$ in $B'$.
The connecting vertex $c_i$ is adjacent to $a_i^j$ and $b_i^j$ for all $j$.
Since we do not prove an analog of Lemma \ref{lem:main} for the generalized construction,
we will need to verify Equality \refeq{wwrr} and Inequality \refeq{wwrr1} by computation.
$A'_3$ and $B'_3$ are colored versions of another Paulus graph $P_{25.02}$.
This graph and its complement are the only two graphs in SRG(25,12,5,6)
with trivial automorphism group. We work with a particular labeled version of $P_{25.02}$,
namely with a graph
  \verb|X}rU\adeSetTjKWNJEYNR]PLjPBgUGVTkK^YKbipMcxbk`{DlXF|
  in graph6 notation. Assuming vertex names $1,\dots,25$
  (the graph6 notation defines an order on the vertices),
  we color the vertices $2$ and $6$ in $A'$ and $3$ and $22$ in $B'$
  all with the same color; the other vertices are uncolored. The graphs
  $G_3=G(A'_3,B'_3)$ and $H_3=G(A'_3,A'_3)$ are obtained by using the generalized construction.
  Equality \refeq{wwrr} and Inequality \refeq{wwrr1} are verified with help of the Lua package
  TCSLibLua~\cite{TCSLibLua}.
\end{proof}

In fact, the same approach can also be used to strengthen Theorem \ref{thm:sep}.

\begin{theorem}\label{thm:sep-explicit-3}
  $\wlk1\npreceq\wwr3$.
\end{theorem}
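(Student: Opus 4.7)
The plan is to push the approach of Theorem \ref{thm:lower-levels} one level further by applying Lemma \ref{lem:main} with parameter $r=4$. The goal is to exhibit partially colored strongly regular graphs $A'$ and $B'$ with the same parameters for which color refinement stabilizes only from the fifth round onwards but does eventually separate them, i.e., $\kwlout1^4(A')=\kwlout1^4(B')$ while $\wlkk1(A')\ne\wlkk1(B')$. Given such a pair, setting $G=G(A',B')$ and $H=G(A',A')$, Part 1 of Lemma \ref{lem:main} yields $\wlkk1(G)\ne\wlkk1(H)$, and Part 2 with $r=4$ yields $\wwr3(G)=\wwr3(H)$; together these prove the desired separation $\wlk1\npreceq\wwr3$.

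The first step is to search for suitable $A'$ and $B'$. The natural starting point is again the family of Paulus graphs in $\mathrm{SRG}(25,12,5,6)$, which already produced separating examples for Theorem \ref{thm:sep} (at $r=3$) and for Theorem \ref{thm:lower-levels} (at $r=3$ using the generalized construction on $P_{25.02}$). Mirroring the computations described in those proofs, I would perform a computer-assisted enumeration with the TCSLibLua package over colorings of $P_{25.12}$, $P_{25.02}$ and their related members, record for each candidate pair the first round at which 1-WL separates them, and retain only those whose color statistics agree through round four. If no pair is found with each color appearing at most once, the fallback is the generalized construction from Theorem \ref{thm:lower-levels} in which color classes of size larger than one are permitted; since no analog of Lemma \ref{lem:main} has been established for the generalized construction, one must in that case verify $\wwr3(G)=\wwr3(H)$ and $\wlkk1(G)\ne\wlkk1(H)$ directly on $G$ and $H$ rather than deducing them from the lemma.

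The main obstacle is precisely this combinatorial search. The paper already remarks that strongly regular graphs on fewer than 25 vertices do not seem to admit suitable colorings even for $r=2$, and that the Paley graph $P_{25.15}$ is ruled out by its vertex-transitivity; one therefore expects the admissible window of colorings to shrink substantially as $r$ grows, and it is not a priori clear that 25 vertices will still suffice at $r=4$. Should the search over $\mathrm{SRG}(25,12,5,6)$ fail, the natural next step is to extend it to strongly regular graphs on more vertices (for example $\mathrm{SRG}(26,\cdot,\cdot,\cdot)$ or $\mathrm{SRG}(29,\cdot,\cdot,\cdot)$), possibly in combination with multi-element color classes. Once an appropriate pair $(A',B')$ has been located, the conclusion is essentially automatic: either Lemma \ref{lem:main} directly delivers the required identities, or a short TCSLibLua computation of $\wlkk1$ and $\wwr3$ on $G$ and $H$ confirms them, and the separation $\wlk1\npreceq\wwr3$ follows.
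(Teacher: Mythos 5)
Your overall strategy is in the right family, but as written the argument is not complete: it is a search plan whose success you yourself flag as uncertain, and no witness pair is exhibited. The theorem needs concrete graphs $G$ and $H$ with $\wwr3(G)=\wwr3(H)$ and $\wlk1(G)\ne\wlk1(H)$. Your primary route --- finding partially colored strongly regular graphs $A'$, $B'$ with singleton color classes satisfying $\kwlout1^4(A')=\kwlout1^4(B')$ but $\wlkk1(A')\ne\wlkk1(B')$, then invoking Lemma \ref{lem:main} with $r=4$ --- is exactly the open-ended program described in Open Problem \ref{probl:main}, and there is reason to doubt it can be carried out on the available 25-vertex graphs: already at level $r=3$ of Theorem \ref{thm:lower-levels} the paper had to abandon singleton color classes and pass to the generalized construction (for which Lemma \ref{lem:main} is not proved). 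So the proposal's main branch rests on an existence claim that is neither established nor plausible without substantial new computation.

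The missing observation is that no new search is needed at all. The graphs $G_3=G(A'_3,B'_3)$ and $H_3=G(A'_3,A'_3)$ already constructed in the proof of Theorem \ref{thm:lower-levels} (via the generalized construction on the Paulus graph $P_{25.02}$) satisfy $\wwr3(G_3)=\wwr3(H_3)$ by the computation performed there; the only additional step is to verify computationally that $\wlk1(G_3)\ne\wlk1(H_3)$, which is what the paper does. Note that this last check is genuinely needed and does not follow from the inequality $\wwr4(G_3)\ne\wwr4(H_3)$ established in Theorem \ref{thm:lower-levels}, since $\wlk1\preceq\wwr\bullet$ and not conversely. Your ``fallback'' paragraph gestures at the right construction, but still frames it as requiring a fresh search and a fresh verification of $\wwr3(G)=\wwr3(H)$, rather than recognizing that half of the required certificate is already in hand from the preceding theorem.
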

\begin{proof}
  We have to exhibit two graphs $G$ and $H$ such that $\wwr{3}(G)=\wwr{3}(H)$
  while $\wlk{1}(G)\ne\wlk{1}(H)$. For this purpose, we take $G=G_3$ and $H=H_3$
  as in the preceding proof and verify that both conditions are true by using the Lua package
  TCSLibLua~\cite{TCSLibLua}.
\end{proof}

\subsection{Proof of Lemma \ref{lem:main}}\label{ss:crux}

1.
Though the colors of port vertices in $A'$ and $B'$ are removed after these graphs are
combined into $G$ and $H$, these colors are simulated by CR due to the adjacency to
the respective connection vertex $c_i$. Recall that $c_i$ bears the information about
the color $i$ as it is adjacent to exactly $i$ pendant vertices. Since the vertex-colored graphs
$A'$ and $B'$ are distinguishable by CR, every vertex in the $B'$-part of $G$ will
eventually receive a color not occurring in~$H$.

\medskip

2.
The proof of Part 2 is based on Claims \ref{cl:srg}, \ref{cl:A}, and \ref{cl:B} below.
We begin with an overall outline of the proof, explaining the role of each of these claims.
We need to prove the implication
\begin{equation}
  \label{eq:newimpl}
  \kwlout1^r(A')=\kwlout1^r(B')\implies \wwr{r-1}(G)=\wwr{r-1}(H).
\end{equation}
Let us focus on the case of $r=1$. That is, under the assumption $\kwlout1^1(A')=\kwlout1^1(B')$
we have to show that $\wwr{0}(G)=\wwr{0}(H)$.

Let $x$ be a vertex in $G$ or $H$, and suppose that $x$ is neither connecting nor pendant.
Suppose that $x$ belongs to the part $P$ of $G$ or $H$. Thus, $P$ is an isomorphic copy of $A$ or $B$.
Recall that the vertex $x$ contributes the vector $w_*(x,x)=\of{ w_0(x,x),w_1(x,x),\ldots,w_{n-1}(x,x) }$
in the multiset $\wwr{0}(G)$ or $\wwr{0}(H)$, where $w_k(x,x)$ is the number of closed $k$-walks
from $x$ to itself. Consider first the set of $k$-walks from $x$ to itself that do not leave $P$.
The number of such walks is the same for any choice of $x$ because the number of $k$-walks from
$x$ to another vertex $z$ in a strongly regular graph with given parameters is determined by the adjacency
and equality relations between $x$ and $z$ (Claim \ref{cl:srg}).

If we, furthermore, want to count $k$-walks emanating from $x$, leaving $P$, and eventually coming back
to $x$, then we have to take into account that such a walk leaves $P$ via one of the port vertices $z$.
An essential information is whether $x=z$ and, if not, then whether $x$ and $z$ are adjacent or not.
Claims \ref{cl:srg} and \ref{cl:A} allow us to show that $w_k(x,x)$ is determined by the CR-color
$\kwlout1^1(P',x)$ (note that each port vertex $z$ is individualized in $P'$ and, therefore,
the color $\kwlout1^1(P',x)$ is nothing else as the list of the adjacency/equality relations between $x$ and all $z$).
Since the equality $\kwlout1^1(A')=\kwlout1^1(B')$ allows us to establish a matching between
the vertices of $G$ and $H$ preserving the color $\kwlout1^1(P',x)$, from here we easily
derive the equality $\wwr{0}(G)=\wwr{0}(H)$.

Finally, Claim \ref{cl:B} allows us to prove the implication \refeq{newimpl} for all $r\ge1$
by induction on $r$. We now proceed to a detailed proof.

Recall that we write $A$ and $I$ to denote the adjacency and the equality relations.
We use the following well-known property of strongly regular graphs.

\setcounter{claim}{0}

\begin{claim}\label{cl:srg}
  Let  $w_k^X(x,y)$ be the number of $k$-walks from a vertex $x$ to a vertex $y$
  in a strongly regular graph $X$. The walk count $w_k^X(x,y)$ is a function of $k$,
the predicate values $I(x,y)$ and $A(x,y)$, and the parameters of $X$
(and is otherwise independent of $x$ and~$y$).
\end{claim}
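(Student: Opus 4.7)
The plan is to exploit the well-known algebraic identity satisfied by the adjacency matrix of a strongly regular graph. Let $X$ have parameters $(n,d,\lambda,\mu)$ with adjacency matrix $A_X$. Counting, for each pair $(x,y)$, the number of common neighbors in the three cases ($x=y$, $x\sim y$, $x\ne y$ and $x\not\sim y$) yields the matrix equation
$$
A_X^2 = dI + \lambda A_X + \mu(J - I - A_X) = (d-\mu)I + (\lambda-\mu)A_X + \mu J,
$$
where $J$ denotes the all-ones matrix. Together with the regularity identity $A_X J = J A_X = dJ$, this shows that the linear span of $\{I, A_X, J\}$ is closed under multiplication by $A_X$.

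First I would establish by induction on $k$ that there exist real coefficients $\alpha_k, \beta_k, \gamma_k$, depending only on $k$ and on $(d,\lambda,\mu)$, such that
$$
A_X^k = \alpha_k I + \beta_k A_X + \gamma_k J.
$$
The base cases $k=0$ and $k=1$ are immediate, and the inductive step follows from the two identities above, giving the explicit recurrence
$$
\alpha_{k+1} = \beta_k(d-\mu), \quad \beta_{k+1} = \alpha_k + \beta_k(\lambda-\mu), \quad \gamma_{k+1} = \beta_k\mu + \gamma_k d.
$$
In particular, $\alpha_k, \beta_k, \gamma_k$ do not depend on $n$ other than through the parameters.

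Finally, since $w_k^X(x,y) = A_X^k(x,y)$ and $J(x,y) = 1$ for all $x,y$, we obtain
$$
w_k^X(x,y) = \alpha_k\, I(x,y) + \beta_k\, A(x,y) + \gamma_k,
$$
so $w_k^X(x,y)$ depends on $x$ and $y$ only via the predicates $I(x,y)$ and $A(x,y)$, as claimed. There is no substantial obstacle here; the entire argument is an application of the standard SRG identity, and the only thing to be careful about is making the ``depends only on the parameters'' statement precise by tracking the recurrence for $\alpha_k, \beta_k, \gamma_k$.
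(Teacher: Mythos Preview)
Your argument is correct and is the standard proof of this fact; the paper itself does not prove Claim~\ref{cl:srg} at all but simply states it as ``a well-known property of strongly regular graphs''. Your derivation of the recurrence $A_X^k=\alpha_k I+\beta_k A_X+\gamma_k J$ from the identity $A_X^2=(d-\mu)I+(\lambda-\mu)A_X+\mu J$ together with $A_XJ=dJ$ is exactly the expected justification, so there is nothing to add.
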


In Claims \ref{cl:A} and \ref{cl:B} below we establish some crucial properties of our construction of $G(A',B')$.
For a vertex $z$ of a colored graph $F$, let $C_r^F(z)=\kwlout1^r(F,z)$ denote
the color of $z$ after performing $r$ rounds of CR on input $F$.
For every vertex $x$ of $G(A',B')$ we define its type $\tau(x)$ as follows.
If $x$ is in the $A$-part, then $\tau(x)=C_1^{A'}(x)$.
Similarly, $\tau(x)=C_1^{B'}(x)$ if $x$ belongs to the $B$-part.
In order to define the types of connecting and pendant vertices, we use special symbols
$\mathfrak{c}$ and $\mathfrak{p}$ and
set $\tau(c_i)=(\mathfrak{c},\tau(a_i))$ and $\tau(p_{i,h})=(\mathfrak{p},\tau(a_i))$ for all $h\le i$.
We define a binary relation $D$ on the vertex set of $G(A',B')$
by setting $D(x,y)=1$ exactly when $x$ and $y$ are both in the $A$-part or both in the $B$-part.
Finally, we define a coloring $\eta$ of vertex pairs in $G(A',B')$ by
$$
\eta(x,y)=\of{ \tau(x), D(x,y), I(x,y), A(x,y), \tau(y) }.
$$
We also set $\tau(A')=(\tau(a_1),\ldots,\tau(a_m))$ and denote the tuple of the four parameters
of the strongly regular graph $A$ (hence also of $B$) by~$\sigma(A)$.

\begin{claim}\label{cl:A}
  Suppose that $\tau(a_i)=\tau(b_i)$ for all $i\le m$ (or, equivalently, that the mapping $a_i\mapsto b_i$
  is a partial isomorphism between the colored graphs $A'$ and $B'$).
For vertices $x$ and $y$ of $G(A',B')$, the sequence $w_*(x,y)$ is solely determined by
$\eta(x,y)$, $\tau(A')$, and~$\sigma(A)$.
\end{claim}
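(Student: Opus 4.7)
I would prove the claim by induction on $k$, using the walk recurrence $w_{k+1}(x,y) = \sum_{z \in V(G)} A(x,z)\, w_k(z,y)$. The base cases $k=0$ and $k=1$ are immediate because $w_0(x,y) = I(x,y)$ and $w_1(x,y) = A(x,y)$ are components of $\eta(x,y)$. Note that once the claim is established for $k = 0, 1, \ldots, n-1$ it extends to the full sequence $w_*(x,y)$, since there are only finitely many possible $\eta$-colors and $\tau$/$\sigma$-parameters to fix.

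For the inductive step, by the induction hypothesis there is a function $\phi_k$ with $w_k(z,y) = \phi_k(\eta(z,y);\,\tau(A'),\sigma(A))$. Grouping summands by the value of $\eta(z,y)$ yields
$$
w_{k+1}(x,y) \;=\; \sum_{\epsilon'} N_{\epsilon'}(x,y)\,\phi_k(\epsilon';\,\tau(A'),\sigma(A)),
\qquad
N_{\epsilon'}(x,y) := |\{z \in N(x) : \eta(z,y) = \epsilon'\}|.
$$
So the task reduces to showing that each count $N_{\epsilon'}(x,y)$ is a function only of $\eta(x,y)$, $\tau(A')$, and $\sigma(A)$.

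To verify this counting statement I would split on the structural location of $x$, which is already encoded in $\tau(x)$: $x$ is a pendant, a connecting vertex $c_i$, a port $a_i$ or $b_i$, or a non-port vertex in one of the SRG-parts. In the first three cases the neighborhood of $x$ is either a small explicit set or consists of the SRG-neighborhood of a port plus one connecting vertex, and the count $N_{\epsilon'}(x,y)$ can be written out directly in terms of the named data (the A/B-alternative for a port is recorded by $D(x,y)$ together with the possible positions of $y$). For $x$ a non-port vertex in the A-part (symmetrically for B), $N(x) \subseteq V(A)$, so neighbors $z$ are either a port $a_j$ — at most one candidate per $\tau$-type, whose relation to $y$ is determined by $\eta(x,y)$, $\tau(A')$, $\sigma(A)$ via Claim \ref{cl:srg} — or a non-port in $V(A)$, in which case the number of such $z$ with prescribed $\tau(z)$ and prescribed $(I,A)$-relation to $y$ is counted by applying Claim \ref{cl:srg} to $x$, to $y$ (when $D(x,y)=1$), and to each port $a_j$ in turn, using that $\tau(z)$ for a non-port vertex of $A$ encodes exactly the set of ports adjacent to $z$.

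The main obstacle is the bookkeeping across the two SRG-parts when $D(x,y) = 0$. In this situation $y \notin V(A)$ (if $x$ is in the A-part), so $A(z,y) = 0$ for every non-port neighbor $z$ of $x$, and a non-zero relation $A(z,y) = 1$ forces $z$ to be a port, a connecting vertex, or a pendant — each of which is pinned down by $\tau(z)$ and $\tau(A')$. The assumption $\tau(a_i) = \tau(b_i)$ is precisely what guarantees that the enumerations carried out inside the A-part and inside the B-part give the same numerical answer as functions of $\tau(A')$ and $\sigma(A)$ alone, so that the count $N_{\epsilon'}(x,y)$ does not depend on which part hosts $x$ or $y$ beyond the information already present in $D(x,y)$. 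Once this uniformity is secured, every sub-case of the case analysis returns a value depending only on $\eta(x,y), \tau(A'), \sigma(A)$, and the induction closes.
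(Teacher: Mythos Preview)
Your one-step reduction does not go through. The claim you need --- that the count $N_{\epsilon'}(x,y)=|\{z\in N(x):\eta(z,y)=\epsilon'\}|$ is determined by $\eta(x,y)$, $\tau(A')$, and $\sigma(A)$ --- is in general false. Take $x$ to be a non-port vertex of $A$ and $y\notin V(A)$ (so $D(x,y)=0$). For a non-port neighbour $z\in N_A(x)$ we then have $I(z,y)=A(z,y)=0$, so $N_{\epsilon'}(x,y)$ collapses to the count of non-port $z\in N_A(x)$ with a prescribed value of $\tau(z)=C_1^{A'}(z)$. Knowing this count for every $\tau$-value is exactly the content of $C_2^{A'}(x)$, not of $\tau(x)=C_1^{A'}(x)$. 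Hence $N_{\epsilon'}(x,y)$ depends only on $\eta(x,y)$ precisely when $C_1^{A'}$ is already stable --- but in all the applications of the claim (e.g., the colourings of $P_{25.12}$ used in Theorem~\ref{thm:sep}) the partition $C_1^{A'}$ is \emph{not} stable, so there exist $x_1,x_2$ with $\tau(x_1)=\tau(x_2)$ yet different multisets $\Mset{\tau(z)}_{z\in N_A(x_i)}$. For such $x_1,x_2$ you get $\eta(x_1,y)=\eta(x_2,y)$ but $N_{\epsilon'}(x_1,y)\ne N_{\epsilon'}(x_2,y)$. Your appeal to Claim~\ref{cl:srg} cannot repair this: that claim gives pairwise walk counts in an SRG, but recovering $|N(x)\cap N(a_j)\cap N(a_{j'})|$ and higher intersections (which is what ``prescribed $\tau(z)$'' demands once $m\ge2$) is not possible from the SRG parameters alone.

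The paper circumvents this by decomposing the walk not at its first step but at the \emph{first exit} from the SRG part containing $x$. A walk from $x$ that leaves $A$ must pass through some port $a_i$ and then the connecting vertex $c_i$; writing $s$ for the length of the initial segment inside $A$, one obtains
\[
w_k(x,y)=w_k^A(x,y)+\sum_{i=1}^m\sum_{s}\sum_{t} w_s^A(x,a_i)\,w_t(c_i,y).
\]
Here $w_s^A(x,a_i)$ is an SRG walk count, and Claim~\ref{cl:srg} determines it from $\sigma(A)$ and the pair $(I(x,a_i),A(x,a_i))$, which is encoded in $\tau(x)$. The remaining factor $w_t(c_i,y)$ has $t<k$ and $\eta(c_i,y)$ is known from $\tau(A')$ and $\eta(x,y)$, so induction applies. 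The point is that this decomposition only ever asks for the relation between $x$ and each individual port, which $\tau(x)$ records, and never for the joint $\tau$-profile of $N(x)$.
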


\begin{subproof}
By induction on $k$, we prove that $w_k(x,y)$ is determined by $\eta(x,y)$, $\tau(A')$, and $\sigma(A)$.
The base case of $k=0$ is trivial. For the induction step, suppose that $k\ge1$.

\Case1{$D(x,y)=1$.}
Assume that $x$ is in the $A$-part; the case of $x\in V(B)$ is similar.
\subcase{1-a}{$x=y$.}
We split the set of closed $k$-walks from $x$ to itself in $G(A',B')$ into two classes:
those remaining all the time in $A$ and those leaving $A$ at some point.
For the first type of walks, we use Claim \ref{cl:srg}, which implies that
their number, i.e., $w_k^A(x,x)$, depends only on $k$ and $\sigma(A)$ and is independent of~$x$.

Let $w_k^\triangledown(x,x)$ denote the number of closed walks from $x$ leaving $A$ at some point.
Since a walk leaves $A$ only via one of the $m$ port vertices $a_i$
and visits the connecting vertex $c_i$ immediately after this, we have
\begin{equation}
  \label{eq:wkxx-raus}  
w_k^\triangledown(x,x)=\sum_{i=1}^m\sum_{s=0}^{k-2}\sum_{t=1}^{k-1-s} w_s^A(x,a_i)w_t(c_i,x).
\end{equation}
Claim \ref{cl:srg} ensures, for each $s$, that $w_s^A(x,a_i)$ is determined by
$\sigma(A)$ and the pair of boolean values $I(x,a_i)$ and
$A(x,a_i)$. Since the last pair is determined by $C_1^{A'}(x)$,
the whole
sequence $w_s^A(x,a_1),\ldots,w_s^A(x,a_m)$ is determined by
$\tau(x)=C_1^{A'}(x)$.  The sequence $w_t(c_1,x),\ldots,w_t(c_m,x)$ is
determined for each $t<k$ due to the knowledge of $\tau(x)$ and $\tau(A')$ by the
induction hypothesis (note that $\eta(c_i,x)$ for all $i$ are
obtainable from $\tau(x)$ and~$\tau(A')$).

We conclude that the count $w_k(x,x)=w_k^A(x,y)+w_k^\triangledown(x,x)$ is determined by
$\tau(A')$, $\sigma(A)$, and $\tau(x)$, where $\tau(x)$ is a part of~$\eta(x,x)$.

We define Subcase 1-b by the condition $A(x,y)=1$ and Subcase 1-c by $A(x,y)=I(x,y)=0$.
The analysis of both these subcases is quite similar to Subcase~1-a,
by defining the counter $w_k^\triangledown(x,y)$ similarly to $w_k^\triangledown(x,x)$
and using the straightforward generalization of Equality~\refeq{wkxx-raus}
(cf.~also Equality~\refeq{wkxy-raus} below).

\Case2{$D(x,y)=0$.}
\subcase{2-a}{$x\in V(A)$ and $y\in V(B)$.} (The case $x\in V(B), y\in V(A)$ is similar.)
Since a walk from $x$ to $y$ must leave $A$ somewhere, we have
\begin{equation}
  \label{eq:wkxy-raus}
w_k(x,y)=\sum_{i=1}^m\sum_{s=0}^{k-2}\sum_{t=1}^{k-1-s} w_s^A(x,a_i)w_t(c_i,y),
\end{equation}
which enables the same analysis as in Subcase~1-a.

\subcase{2-b}{$x\in V(A)$ while $y=c_j$ or $y=p_{j,h}$.} (If $x\in V(B)$, the analysis is virtually the same.)
By the same reasons as in Subcase 2-a, we have
$$
w_k(x,y)=\sum_{i=1}^m\sum_{s=0}^{k-1}\sum_{t=0}^{k-1-s} w_s^A(x,a_i)w_t(c_i,y),
$$
which makes the above argument applicable.

\subcase{2-c}{$x=c_j$ and $y=c_{j'}$.}
In this case, the induction step is based on the equality
$$
w_k(c_j,c_{j'})=w_{k-1}(a_j,c_{j'})+w_{k-1}(b_j,c_{j'})+\sum_{h=1}^j w_{k-1}(p_{j,h},c_{j'}).
$$

\subcase{2-d}{$x=p_{j,h}$ and $y=c_{j'}$.}
For the induction step, note that $w_k(p_{j,h},c_{j'})=w_{k-1}(c_j,c_{j'})$.

\subcase{2-e}{$x=p_{j,h}$ and $y=p_{j',h'}$.}
Similarly to Subcase 2-d, we use the equality
$w_k(p_{j,h},p_{j',h'})=w_{k-1}(c_j,p_{j',h'})$.
\end{subproof}

For ease of notation, set $\omega(x,y)=w_*(x,y)$. For $r\ge0$, let
$\omega_r$ be the vertex coloring as defined in~\refeq{chi}.

\begin{claim}\label{cl:B}
  Suppose that $\tau(a_i)=\tau(b_i)$ for all $i\le m$.
  For a vertex $x$ of $G(A',B')$, the color $\omega_r(x)$ depends on $\tau(A')$ and $\sigma(A)$.
  Once these parameters are fixed, $\omega_r(x)$ is determined by the pair of multisets $M_{A,r}=\Mset{C^{A'}_{r}(y)}_{y\in V(A)}$
  and $M_{B,r}=\Mset{C^{B'}_{r}(y)}_{y\in V(B)}$ and
  \begin{itemize}
  \item
    by the color $C^{A'}_{r+1}(x)$ if $x$ is in the $A$-part,
  \item
     by the color $C^{B'}_{r+1}(x)$ if $x$ is in the $B$-part,
  \item
    by the type $\tau(x)$ if $x=c_i$ or $x=p_{i,h}$.
  \end{itemize}
\end{claim}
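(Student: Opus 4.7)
The plan is to proceed by induction on $r$, invoking Claim~A at every step to rewrite $\omega(x,y)=w_*(x,y)$ as a function of $\eta(x,y)$, $\tau(A')$, and $\sigma(A)$, and tracking how the ``lookahead'' color $C^{A'}_{r+1}(x)$ or $C^{B'}_{r+1}(x)$ or the type $\tau(x)$ encodes the variable local information at~$x$.

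For the base case $r=0$, I would combine $\omega_0(x)=w_*(x,x)$ with Claim~A applied at $y=x$: this shows $\omega_0(x)$ is a function of $\eta(x,x)$, which is itself determined by $\tau(x)$. Since $\tau(x)=C^{A'}_1(x)$ for $x\in V(A)$, $\tau(x)=C^{B'}_1(x)$ for $x\in V(B)$, and $\tau(x)$ is the prescribed type for connecting or pendant $x$, the base case follows; the multisets $M_{A,0}, M_{B,0}$ are not even needed at this point.

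For the inductive step I unfold
\[
\omega_{r+1}(x) \;=\; \bigl(\omega_r(x),\ \Mset{(\omega(x,y),\,\omega_r(y))}_{y\in V(G)}\bigr).
\]
The first component is covered by the induction hypothesis, since $C^{A'}_{r+2}$ refines $C^{A'}_{r+1}$ and $M_{A,r+1}, M_{B,r+1}$ refine $M_{A,r}, M_{B,r}$. For the multiset I partition $y$ into the four classes $V(A)$, $V(B)$, connecting vertices, pendant vertices and treat each in turn with $x\in V(A)$ (the case $x\in V(B)$ is symmetric). For $y\in V(A)$, Claim~A gives $\omega(x,y)$ as a function of $(I(x,y), A(x,y), \tau(y))$ and the induction hypothesis gives $\omega_r(y)$ as a function of $C^{A'}_{r+1}(y)$ together with the global multisets, so the sub-multiset reduces to $\Mset{(I(x,y), A(x,y), C^{A'}_{r+1}(y))}_{y\in V(A)}$, which is exactly the data by which one CR step on the vertex-colored graph $A'$ produces $C^{A'}_{r+2}(x)$ from $C^{A'}_{r+1}$. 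For $y\in V(B)$, Claim~A gives $\omega(x,y)$ as a function of $\tau(y)$ alone, because the pair is unequal, non-adjacent, and lies in opposite parts, so the sub-multiset collapses to $M_{B,r+1}$. For the finitely many connecting or pendant $y$, the induction hypothesis says $\omega_r(y)$ is a function of $\tau(y)$, and Claim~A then reduces the contribution to a function of $\tau(x)$, $\tau(A')$, and $\sigma(A)$. Combining the three pieces yields $\omega_{r+1}(x)$ as a function of $M_{A,r+1}, M_{B,r+1}, C^{A'}_{r+2}(x)$. For $x=c_i$ or $x=p_{i,h}$ the argument is simpler: the $A$- and $B$-neighbours of $x$ are single vertices whose identity is pinned down by $\tau(x)$ and $\tau(A')$, so the $V(A)$- and $V(B)$-sub-multisets reduce directly to $M_{A,r+1}$ and $M_{B,r+1}$ without any lookahead color.

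The hard part will be the $y\in V(A)$ sub-multiset when $x\in V(A)$ (and its mirror for $B$): one must recognize that the walk-based refinement at $x$ transmits \emph{exactly} the signal used by one round of CR on the vertex-colored graph $A'$, so that the lookahead color $C^{A'}_{r+2}(x)$ captures all the $y$-dependence in the sub-multiset. This alignment is the content of Claim~A in this regime; once it is seen, the remaining cases are routine bookkeeping through the other classes of $y$ and the other possible locations of~$x$.
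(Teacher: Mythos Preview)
Your proposal is correct and follows the same route as the paper's proof: induction on $r$, base case via Claim~A at $y=x$, inductive step by unfolding $\omega_{r+1}(x)$ and splitting the multiset $\Mset{(\omega(x,y),\omega_r(y))}_y$ according to whether $y$ lies in $V(A)$, $V(B)$, or among the connecting/pendant vertices, with Claim~A supplying $\omega(x,y)$ from $\eta(x,y)$ in each case. Your key observation---that for $x\in V(A)$ the $V(A)$-part of the multiset is precisely the data packaged into $C^{A'}_{r+2}(x)$ by one further CR round---is exactly the mechanism the paper exploits (there phrased as recovering $\Mset{(A(x,y),C^{A'}_r(y))}_y$ from $C^{A'}_{r+1}(x)$).

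Two small remarks on presentation. First, your sentence ``the $A$- and $B$-neighbours of $x$ are single vertices'' is accurate for $x=c_i$ but not for $x=p_{i,h}$, which has no neighbour in $V(A)\cup V(B)$; the argument for pendant $x$ is even simpler since then $A(x,y)=0$ for all such $y$, so you may want to phrase this case separately. Second, you do not spell out the sub-multiset over connecting and pendant $y$ when $x$ is itself connecting or pendant; the paper does treat this (briefly), and the only point needing care is that for two pendant vertices $p_{j,h}$ and $p_{j,h'}$ the types coincide while $I$ does not---but since exactly one $y$ equals $x$, the multiset $\Mset{\eta(x,y)}_y$ is still determined by $\tau(x)$ and $\tau(A')$. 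These are bookkeeping matters, not gaps.
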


\begin{subproof}
  We proceed by induction on $r$. In the base case we have $r=0$.
  If $x\in V(A)$ or $x\in V(B)$, then either $C^{A'}_{1}(x)$ or, respectively, $C^{B'}_{1}(x)$
  determines $\omega_0(x)=w_*(x,x)$ by Claim \ref{cl:A}.
  If $x=c_i$ or $x=p_{i,h}$, then $\tau(x)$ determines $\omega_0(x)$,
  again by Claim~\ref{cl:A}.

  For the induction step, suppose that $r\ge1$ and recall that
  $$
\omega_r(x)=\of{\omega_{r-1}(x),\Mset{(\omega(x,y),\omega_{r-1}(y))}_{y}}.
  $$

\Case1{$x\in V(A)$.} (The case of $x\in V(B)$ is similar.)
By the induction hypothesis, $\omega_{r-1}(x)$ is determined by $M_{A,r-1}$, $M_{B,r-1}$, and $C^{A'}_{r}(x)$,
which are obviously obtainable from $M_{A,r}$, $M_{B,r}$, and $C^{A'}_{r+1}(x)$.
It remains to show that
\begin{equation}
  \label{eq:msety}
\Mset{(\omega(x,y),\omega_{r-1}(y))}_{y}
\end{equation}
is determined as well. To this end, we split this multiset into several parts.

For $y=x$, the corresponding element $(\omega_0(x),\omega_{r-1}(x))$ of the multiset
is determined as we have already shown.

Consider the restriction of \refeq{msety} to $y\in V(A)\setminus\{x\}$.
Along with $M_{A,r}$, the color $C^{A'}_{r+1}(x)$ determines the multiset
$\Mset{(A(x,y),C^{A'}_r(y))}_{y\in V(A)\setminus\{x\}}$. As easily seen from the definition
of $\eta(x,y)$, a pair $(A(x,y),C^{A'}_r(y))$ yields the pair $(\eta(x,y),C^{A'}_r(y))$.
By Claim \ref{cl:A}, $\eta(x,y)$ determines $\omega(x,y)$. By the induction hypothesis,
$C^{A'}_r(y)$ determines $\omega_{r-1}(y)$ (by taking into account that $M_{A,r-1}$ and $M_{B,r-1}$
are obtainable from $M_{A,r}$ and $M_{B,r}$). Thus, we do obtain \refeq{msety} over $y\in V(A)\setminus\{x\}$.

Next, consider the restriction of \refeq{msety} to $y\in V(B)$.
The multiset $M_{B,r}$ readily provides us with the multiset $\Mset{(\eta(x,y),C^{B'}_r(y))}_{y\in V(B)}$.
Similarly to the above, this gives us \refeq{msety} over $y\in V(B)$ by Claim \ref{cl:A} and the
induction hypothesis.

Now, we have to reconstruct the restriction of \refeq{msety} to $y$ ranging over all connecting vertices,
that is, the multiset $\Mset{(\omega(x,c_i),\omega_{r-1}(c_i))}_{i=1}^m$.
The type vector $\tau(A')$ provides us with the sequence of $\of{\eta(x,c_i),\tau(c_i)}$
for $i=1,\ldots,m$. For each $i$, $\eta(x,c_i)$ determines $\omega(x,c_i)$ by Claim \ref{cl:A}.
The induction assumption implies that $\omega_{r-1}(c_i)$ is determined by $\tau(c_i)$
along with $M_{A,r-1}$ and $M_{B,r-1}$ (obtainable from $M_{A,r}$ and $M_{B,r}$).

The restriction of \refeq{msety} to pendant vertices is reconstructible in the same way.
Summing up, we see that the multiset \refeq{msety} for $y$ ranging over the vertex set of $G(A',B')$
is determined.

\Case2{$x=c_i$ for some $i\le m$.}
Similarly to the preceding case, $\omega_{r-1}(c_i)$ is determined by the induction hypothesis
and we only have to consider the multiset $\Mset{(\omega(c_i,y),\omega_{r-1}(y))}_{y}$.
We again split this multiset into several parts and reconstruct each of them separately.

The restriction to $y\in V(A)\cup V(B)$ is reconstructible because $M_{A,r}$ and $M_{B,r}$
give us the multisets $\Mset{(\eta(c_i,y),C^{A'}_r(y))}_{y\in V(A)}$
and $\Mset{(\eta(c_i,y),C^{B'}_r(y))}_{y\in V(B)}$, which allows using Claim \ref{cl:A}
and the induction hypothesis.

Now, we have to reconstruct the restriction to the connection vertices, i.e.,
$\Mset{(\omega(c_i,c_j),\omega_{r-1}(c_j))}_{j=1}^m$.
The type $\tau(c_i)$ and the type vector $\tau(A')$ give us the sequence $\eta(c_i,c_j)$
for $j=1,\ldots,m$, which determines the sequence of $\omega(c_i,c_j)$ for $j=1,\ldots,m$ by Claim \ref{cl:A}.
By the induction hypothesis, $\tau(A')$ along with the multisets $M_{A,r-1}$ and $M_{B,r-1}$
determines the sequence of $\omega_{r-1}(c_j)$ for $j=1,\ldots,m$.
The restriction to the pendant vertices is reconstructible by a similar argument.

\Case3{$x=p_{i,h}$ for some $i\le m$ and $h\le i$.}
This case is similar to Case~2.
\end{subproof}

To complete the proof of Part 2 of Lemma \ref{lem:main},
consider $G=G(A',B')$ and $H=G(A',A')$ for partially colored strongly regular
graphs $A'$ and $B'$ satisfying the conditions of the lemma. In particular,
\begin{equation}
  \label{eq:aabb}
\kwlout1^r(A')=\kwlout1^r(B'),
\end{equation}
where $r\ge1$.
This equality implies that $\kwlout1^1(A')=\kwlout1^1(B')$ and, hence,
$\tau(a_i)=\tau(b_i)$ for all $i\le m$, which makes Claim \ref{cl:B} applicable.
Another consequence of \refeq{aabb} is that the multiset $M_{A,r}$, as well as $M_{B,r}$,
is the same for $G$ and $H$.
In particular, there is a bijection $f\function{V(A)}{V(B)}$ such that
\begin{equation}
  \label{eq:f}
C^{A'}_r(x)=C^{B'}_r(f(x)).
\end{equation}
Define a bijection $F\function{V(H)}{V(G)}$
as the identity on the first $A$-part of $H$ (which we identify with the $A$-part of $G$)
and as the mapping of the second $A$-part of $H$ to the $B$-part of $G$ according to the bijection $f$.
Furthermore, $F$ matches the connecting vertices for the same colors as well as
their pendant vertices. By taking into account \refeq{f}
and the fact that $G$ and $H$ share the same multisets $M_{A,r-1}$ and $M_{B,r-1}$, Claim \ref{cl:B}
readily implies that $\omega_{r-1}(x)=\omega_{r-1}(F(x))$ for all vertices $x$ of $H$.
This gives us the desired equality $\omega^{(r-1)}(G)=\omega^{(r-1)}(H)$.


\begin{thebibliography}{10}

\bibitem{stacs}
V.~Arvind, F.~Fuhlbr{\"{u}}ck, J.~K{\"{o}}bler, and O.~Verbitsky.
\newblock On a hierarchy of spectral invariants for graphs.
\newblock In {\em 41st International Symposium on Theoretical Aspects of
  Computer Science, (STACS'24)}, volume 289 of {\em LIPIcs}, pages 6:1--6:18.
  Schloss Dagstuhl -- Leibniz-Zentrum f{\"{u}}r Informatik, 2024.
\newblock \href {https://doi.org/10.4230/LIPICS.STACS.2024.6}
  {\path{doi:10.4230/LIPICS.STACS.2024.6}}.

\bibitem{Babai16}
L.~Babai.
\newblock Graph isomorphism in quasipolynomial time.
\newblock In {\em Proceedings of the 48th Annual {ACM} Symposium on Theory of
  Computing ({STOC}'16)}, pages 684--697, 2016.
\newblock \href {https://doi.org/10.1145/2897518.2897542}
  {\path{doi:10.1145/2897518.2897542}}.

\bibitem{BabaiES80}
L.~Babai, P.~Erd\H{o}s, and S.~M. Selkow.
\newblock Random graph isomorphism.
\newblock {\em SIAM Journal on Computing}, 9(3):628--635, 1980.
\newblock \href {https://doi.org/10.1137/0209047} {\path{doi:10.1137/0209047}}.

\bibitem{BarghiP09}
A.~R. Barghi and I.~Ponomarenko.
\newblock Non-isomorphic graphs with cospectral symmetric powers.
\newblock {\em Electron. J. Comb.}, 16(1), 2009.
\newblock \href {https://doi.org/10.37236/209} {\path{doi:10.37236/209}}.

\bibitem{BrouwerPaulus}
A.~E. Brouwer.
\newblock Paulus-{R}ozenfeld graphs.
\newblock URL: \url{https://www.win.tue.nl/~aeb/drg/graphs/Paulus.html}.

\bibitem{CaiFI92}
J.~Cai, M.~F{\"u}rer, and N.~Immerman.
\newblock An optimal lower bound on the number of variables for graph
  identifications.
\newblock {\em Combinatorica}, 12(4):389--410, 1992.
\newblock \href {https://doi.org/10.1007/BF01305232}
  {\path{doi:10.1007/BF01305232}}.

\bibitem{CP2019}
G.~Chen and I.~Ponomarenko.
\newblock {\em Coherent configurations}.
\newblock Wuhan: Central China Normal University Press, 2019.

\bibitem{Cvetkovic78}
D.~Cvetkovi{\'c}.
\newblock The main part of the spectrum, divisors and switching of graphs.
\newblock {\em Publ. Inst. Math., Nouv. S{\'e}r.}, 23:31--38, 1978.
\newblock URL: \url{http://elib.mi.sanu.ac.rs/files/journals/publ/43/6.pdf}.

\bibitem{CvetkovicL97}
D.~Cvetkovi{\'c} and M.~Lepovi{\'c}.
\newblock Cospectral graphs with the same angles and with a minimal number of
  vertices.
\newblock {\em Publ. Elektroteh. Fak., Univ. Beogr., Ser. Mat.}, 8:88--102,
  1997.
\newblock URL: \url{http://www.jstor.org/stable/43666387}.

\bibitem{CvetkovicRS10}
D.~Cvetkovi{\'c}, P.~Rowlinson, and S.~Simi{\'c}.
\newblock {\em An introduction to the theory of graph spectra}, volume~75 of
  {\em Lond. Math. Soc. Stud. Texts}.
\newblock Cambridge: Cambridge University Press, 2010.
\newblock \href {https://doi.org/10.1017/CBO9780511801518}
  {\path{doi:10.1017/CBO9780511801518}}.

\bibitem{CvetkovicRS97}
D.~Cvetković, P.~Rowlinson, and S.~Simic.
\newblock {\em Eigenspaces of Graphs}.
\newblock Encyclopedia of Mathematics and its Applications. Cambridge
  University Press, 1997.
\newblock \href {https://doi.org/10.1017/CBO9781139086547}
  {\path{doi:10.1017/CBO9781139086547}}.

\bibitem{CvetkovicRS93}
D.~Cvetković, P.~Rowlinson, and S.~Simić.
\newblock A study of eigenspaces of graphs.
\newblock {\em Linear Algebra and its Applications}, 182:45--66, 1993.
\newblock \href {https://doi.org/10.1016/0024-3795(93)90491-6}
  {\path{doi:10.1016/0024-3795(93)90491-6}}.

\bibitem{Dvorak10}
Z.~Dvořák.
\newblock On recognizing graphs by numbers of homomorphisms.
\newblock {\em Journal of Graph Theory}, 64(4):330--342, 2010.
\newblock \href {https://doi.org/10.1002/jgt.20461}
  {\path{doi:10.1002/jgt.20461}}.

\bibitem{TCSLibLua}
F.~Fuhlbrück.
\newblock {TCSLibLua}.
\newblock GitLab repository, 2023.
\newblock URL:
  \url{https://gitlab.com/tcslib/tcsliblua/-/tree/7e429da7fcf9bf8d93dfc50b54e823a47364cb4b}.

\bibitem{Fuerer10}
M.~Fürer.
\newblock On the power of combinatorial and spectral invariants.
\newblock {\em Linear Algebra and its Applications}, 432(9):2373--2380, 2010.
\newblock \href {https://doi.org/10.1016/j.laa.2009.07.019}
  {\path{doi:10.1016/j.laa.2009.07.019}}.

\bibitem{GarijoGN11}
D.~Garijo, A.~J. Goodall, and J.~Nesetril.
\newblock Distinguishing graphs by their left and right homomorphism profiles.
\newblock {\em Eur. J. Comb.}, 32(7):1025--1053, 2011.
\newblock \href {https://doi.org/10.1016/j.ejc.2011.03.012}
  {\path{doi:10.1016/j.ejc.2011.03.012}}.

\bibitem{Geerts21}
F.~Geerts.
\newblock On the expressive power of linear algebra on graphs.
\newblock {\em Theory Comput. Syst.}, 65(1):179--239, 2021.
\newblock \href {https://doi.org/10.1007/S00224-020-09990-9}
  {\path{doi:10.1007/S00224-020-09990-9}}.

\bibitem{HaemersS04}
W.~H. Haemers and E.~Spence.
\newblock Enumeration of cospectral graphs.
\newblock {\em European Journal of Combinatorics}, 25(2):199--211, 2004.
\newblock \href {https://doi.org/10.1016/S0195-6698(03)00100-8}
  {\path{doi:10.1016/S0195-6698(03)00100-8}}.

\bibitem{HoltonL92}
D.~A. Holton and D.~Lou.
\newblock Matching extensions of strongly regular graphs.
\newblock {\em Australas. J. Comb.}, 6:187--208, 1992.

\bibitem{HooryLW06}
S.~Hoory, N.~Linial, and A.~Widgerson.
\newblock Expander graphs and their applications.
\newblock {\em Bull. Am. Math. Soc., New Ser.}, 43(4):439--561, 2006.
\newblock \href {https://doi.org/10.1090/S0273-0979-06-01126-8}
  {\path{doi:10.1090/S0273-0979-06-01126-8}}.

\bibitem{JohnsonN80}
C.~R. Johnson and M.~Newman.
\newblock A note on cospectral graphs.
\newblock {\em Journal of Combinatorial Theory, Series B}, 28(1):96--103, 1980.
\newblock \href {https://doi.org/10.1016/0095-8956(80)90058-1}
  {\path{doi:10.1016/0095-8956(80)90058-1}}.

\bibitem{LiuS22}
F.~Liu and J.~Siemons.
\newblock Unlocking the walk matrix of a graph.
\newblock {\em J. Algebr. Comb.}, 55(3):663--690, 2022.
\newblock \href {https://doi.org/10.1007/s10801-021-01065-3}
  {\path{doi:10.1007/s10801-021-01065-3}}.

\bibitem{McKayGraphs}
B.~D. McKay.
\newblock Graphs.
\newblock URL: \url{https://users.cecs.anu.edu.au/~bdm/data/graphs.html}.

\bibitem{MoharP93}
B.~Mohar and S.~Poljak.
\newblock Eigenvalues in combinatorial optimization.
\newblock In {\em Combinatorial and graph-theoretical problems in linear
  algebra}, pages 107--151. New York: Springer-Verlag, 1993.
\newblock \href {https://doi.org/10.1007/978-1-4613-8354-3}
  {\path{doi:10.1007/978-1-4613-8354-3}}.

\bibitem{ORourkeT16}
S.~O'Rourke and B.~Touri.
\newblock On a conjecture of {Godsil} concerning controllable random graphs.
\newblock {\em {SIAM} J. Control. Optim.}, 54(6):3347--3378, 2016.
\newblock \href {https://doi.org/10.1137/15M1049622}
  {\path{doi:10.1137/15M1049622}}.

\bibitem{Paulus73}
A.~J.~L. Paulus.
\newblock Conference matrices and graphs of order 26.
\newblock Technical report, Eindhoven University of Technology, 1973.
\newblock URL: \url{https://research.tue.nl/files/4424819/252855.pdf}.

\bibitem{PowersS82}
D.~L. Powers and M.~M. Sulaiman.
\newblock The walk partition and colorations of a graph.
\newblock {\em Linear Algebra Appl.}, 48:145--159, 1982.
\newblock \href {https://doi.org/10.1016/0024-3795(82)90104-5}
  {\path{doi:10.1016/0024-3795(82)90104-5}}.

\bibitem{RattanS23}
G.~Rattan and T.~Seppelt.
\newblock Weisfeiler-{L}eman and graph spectra.
\newblock In {\em Proceedings of the 2023 {ACM-SIAM} Symposium on Discrete
  Algorithms {(SODA'23)}}, pages 2268--2285. {SIAM}, 2023.
\newblock \href {https://doi.org/10.1137/1.9781611977554.ch87}
  {\path{doi:10.1137/1.9781611977554.ch87}}.

\bibitem{Rozenfeld73}
M.~Z. Rozenfeld.
\newblock The construction and properties of certain classes of strongly
  regular graphs.
\newblock {\em Uspehi Mat. Nauk}, 28:197--198, 1973.

\bibitem{StackE}
Sirolf.
\newblock Characterization of {$k$}-walk-equivalent graphs, 2018.
\newblock URL:
  \url{https://mathoverflow.net/questions/304106/characterization-of-k-walk-equivalent-graphs}.

\bibitem{SpenceSRGraphs}
E.~Spence.
\newblock Strongly regular graphs on at most 64 vertices.
\newblock URL: \url{http://www.maths.gla.ac.uk/~es/srgraphs.html}.

\bibitem{vanDamHK07}
E.~{van Dam}, W.~Haemers, and J.~Koolen.
\newblock Cospectral graphs and the generalized adjacency matrix.
\newblock {\em Linear Algebra and its Applications}, 423(1):33--41, 2007.
\newblock \href {https://doi.org/10.1016/j.laa.2006.07.017}
  {\path{doi:10.1016/j.laa.2006.07.017}}.

\bibitem{vanDamH03}
E.~R. {van Dam} and W.~H. Haemers.
\newblock Which graphs are determined by their spectrum?
\newblock {\em Linear Algebra and its Applications}, 373:241--272, 2003.
\newblock \href {https://doi.org/10.1016/S0024-3795(03)00483-X}
  {\path{doi:10.1016/S0024-3795(03)00483-X}}.

\bibitem{ESA23}
O.~Verbitsky and M.~Zhukovskii.
\newblock Canonization of a random graph by two matrix-vector multiplications.
\newblock In {\em 31st Annual European Symposium on Algorithms (ESA 2023)},
  volume 274 of {\em Leibniz International Proceedings in Informatics
  (LIPIcs)}, pages 100:1--100:13, Dagstuhl, Germany, 2023. Schloss Dagstuhl --
  Leibniz-Zentrum f{\"u}r Informatik.
\newblock \href {https://doi.org/10.4230/LIPIcs.ESA.2023.100}
  {\path{doi:10.4230/LIPIcs.ESA.2023.100}}.

\bibitem{WLe68}
B.~Weisfeiler and A.~Leman.
\newblock The reduction of a graph to canonical form and the algebra which
  appears therein.
\newblock {\em NTI, Ser.~2}, 9:12--16, 1968.
\newblock English translation is available at
  \url{https://www.iti.zcu.cz/wl2018/pdf/wl_paper_translation.pdf}.

\end{thebibliography}

\end{document}